\newcommand{\clS}{\mathcal{S}}
\newcommand{\clSeto}{\mathcal{S}_{\rm ETO}}
\newcommand{\clSto}{\mathcal{S}_{\rm TO}}
\newcommand{\clSceto}{\mathcal{S}_{\rm CETO}}
\newcommand{\clScetoqb}{\mathcal{S}_{\rm CETO}^{(2)}}
\newcommand{\clScetoqd}[1]{\mathcal{S}_{\rm CETO}^{(#1)}}
\newcommand{\border}{{\pi}}
\newcommand{\pstate}{\mathrm{p}}
\newcommand{\qstate}{\mathrm{q}}
\newcommand{\rstate}{\mathrm{r}}
\newcommand{\estate}{\mathrm{e}}
\newcommand{\slope}{\mathrm g}
\newcommand{\Mswap}{\mathcal{M}}
\newcommand{\catstate}{{\rm c}}
\DeclareMathOperator{\sgn}{sgn}
\newcommand{\probspace}{\mathcal{V}}
\newtheorem{theorem}{Theorem}
\newtheorem{remark}{Remark}
\newtheorem{lemma}[theorem]{Lemma}
\newtheorem{corollary}[theorem]{Corollary}
\newcommand{\ket}[1]{|#1\rangle}
\newcommand{\ketbra}[2]{| #1 \rangle \langle #2 |}
\newtheorem{definition}{Definition}[section]
\newcommand{\Eps}{\mathcal{E}}
\newcommand{\tbeta}{{\tau}^\beta}
\newtcolorbox{mybox}[1]{colback=red!5!white,colframe=red!65!black,fonttitle=\bfseries,title=#1,boxrule=0.7pt}
\newcolumntype{L}[1]{>{\raggedright\let\newline\\\arraybackslash\hspace{0pt}}p{#1}}
\begin{document}
\submitto{\NJP}

\title{Catalysis in Action via Elementary Thermal Operations}
\author{Jeongrak Son}
\address{School of Physical and Mathematical Sciences, Nanyang Technological University, 637371, Singapore}
%\orcid{0000-0002-0236-5017}
\ead{jeongrak.son@e.ntu.edu.sg}
\author{Nelly H. Y. Ng}
\address{School of Physical and Mathematical Sciences, Nanyang Technological University, 637371, Singapore}
\ead{nelly.ng@ntu.edu.sg}
%\orcid{0000-0003-0007-4707}
%\thanks{You can use the \texttt{\textbackslash{}email}, \texttt{\textbackslash{}homepage}, and \texttt{\textbackslash{}thanks} commands to add additional information for the preceding \texttt{\textbackslash{}author}. If applicable, this can also be used to indicate that a work has previously been published in conference proceedings.}

\begin{abstract}
We investigate catalysis in the framework of elementary thermal operations, leveraging the distinct features of such operations to illuminate catalytic dynamics. As groundwork, we establish new technical tools that enhance the computability of state transition rules for elementary thermal operations. Specifically, we provide a complete characterisation of state transitions for a qutrit system and special classes of initial states of arbitrary dimension. By employing these tools in conjunction with numerical methods, we find that by adopting a small catalyst, including just a qubit catalyst, one can significantly enlarge the set of state transitions for a qutrit system. This advancement notably narrows the gap of reachable states between elementary thermal operations and generic thermal operations. Furthermore, we decompose catalytic transitions into time-resolved evolution, which critically enables the tracking of nonequilibrium free energy exchanges between the system and bath. Our results provide evidence for the existence of simple and practicable catalytic advantage in thermodynamics while offering insight into analysing the mechanism of catalytic processes.
\end{abstract}
\maketitle

\section{Introduction}\label{sec:intro}

Catalysts are auxiliary states that enter a process in a way such that they are recovered at the end. They are particularly useful when one is restricted to a limited set of operations (motivated by fundamental principles or practical considerations). 
The general understanding is that the participation of such states helps in mitigating dynamical constraints, by expanding the working Hilbert space, without consuming resources in the catalyst. Since catalysts ideally suffer no deterioration, they can be used repeatedly to activate state transitions. 

Since the discovery of quantum catalysis, first reported in entanglement theory~\cite{Jonathan99_PRL, Eisert00_Cat, Daftuar01_trumping, Anspach01_condition, Klimesh07_ineqs, Kondra21_EntCat}, it has been extended to various quantum resource theories~\cite{Datta22_CatReview, LipkaBartosik23_catreview}, such as coherence~\cite{Aberg14_PRL, Bu16_coherence, Ding21_corr, Ryuji22_catalyst}, thermodynamics~\cite{Ng15_NJP, Brandao15_2ndlaws, Muller18_corr, Boes20_Fluc, Henao21_catalytic, Shiraishi21_PRL, Yadin22_Gaussian}, randomness~\cite{Boes18_PRX_rand, Boes19_catconjecture, Lie21_Cat1, Lie21_PRR, Wilming21_catconjecture}, and quantum teleportation~\cite{Lipka-Bartosik21_tele}. 
Previous studies on catalysis focus predominantly on the initial and final states of state transformation, either via the mathematical structure of trumping~\cite{Daftuar01_trumping, Klimesh07_ineqs}, majorization~\cite{Anspach01_condition}, or resource monotones~\cite{Brandao15_2ndlaws, vdMeer17_smoothed, Anshu18_general, Rethinasamy20_RelEnt}.
These approaches are powerful because they guarantee the existence of a catalytic operation given initial and target states. 
Nevertheless, most of them do not provide insight into the required catalyst state, or the process necessary for achieving the transformation.
This knowledge gap poses an obstacle towards practical demonstrations of catalytic processes. 
A handful of recent advancements have been made to address these concerns. 
For example, a significant finding is that given any catalytically possible state transition, nearly any quantum state can serve as a catalyst, as long as many copies are used~\cite{Lipka-Bartosik_PRX_universal}. This insight comes from the reversible convertibility in the i.i.d. limit, combined with the fact that catalyst states are not altered by the activation process. While this result offers incredible insight towards the catalytic power of i.i.d. states, the construction in~\cite{Lipka-Bartosik_PRX_universal} always prescribes the usage of a very high-dimensional catalyst, which is excessive for processes that might require only a small catalyst and simple operations\footnote{See~\cite{Gupta22_strontcat} for an alternative direction discussing catalysts that activate the transitions that are not possible even between multiple copies of initial and target states.}. These factors reduce the implementability of catalytic processes and highlight the need for a complementary study that emphasizes the extent to which small catalysts and straightforward operations can be effective.

Another challenge towards practical demonstrations of catalytic thermal processes comes from the genericity of thermal operations (TO)~\cite{Janzing00_TO} in the resource-theoretic setting~\cite{Brandao13_TRT, Horodecki13_fundamental, Gour15_NOreview, Brandao15_2ndlaws, Ng18_Qthermo_book}. TOs potentially involve intricate control over the global system and bath, in particular its joint energy eigenstates. Interestingly, crude operations~\cite{Perry18_PRX_Crude} acting on a  system plus a qubit Gibbs ancilla were proven to be sufficient to generate all TO state transitions to incoherent final states. Achieving target states with higher resourcefulness, however, can still be demanding, e.g. requiring the extreme lowering/raising of system and bath composite energy level, or arbitrary energy preserving unitaries in the system subspace. Therefore, the full employment of resource theoretic models in real world remains challenging~\cite{YungerHalpern17_Realization}.

Elementary thermal operations (ETO)~\cite{Lostaglio_18_ETO} alleviate the above issues by considering a subset of TO that can be decomposed into series of two-level swaps. This decomposition offers a natural way to prescribe a process to the experimenter, analogous to how a complex $ n $-qubit computation is decomposed into a small gate set\footnote{This analogy is not exact because, while in quantum computing, universal 2-qubit gate sets exist, it is known that TO cannot always be decomposed into ETO~\cite{Lostaglio_18_ETO}.}. Moreover, physical models, such as the collision model~\cite{Rau63_collision} or the Jaynes-Cummings (JC) model~\cite{JCmodel} can emulate the two-level swap. The intensity-dependent Jaynes-Cummings model~\cite{Buzek89_intensityJC} can furthermore generate all energy-preserving two-level swaps, making the setup more realistic.  Another advantage of ETO is that it opens up the opportunity to analyse intermediate states, which are found by partially applying the swap sequence -- providing a time-resolved description of the system dynamics, rarely possible in resource-theoretic thermodynamics. Nonetheless, ETO has its limitations: the only existing method to decide the feasibility of state transitions is to iteratively find all extreme points of the set of reachable states. The number of iterations grows extremely quickly with the system dimension.

Our aim is to leverage the operational simplicity of ETO to find examples of easy-to-realise catalytic evolutions. This goal is nontrivial since the conditions for ETO transitions, even without catalysts, are not well-characterised. Similar to the case of the resource theory of magic~\cite{campbell2011catalysis}, our result offers a straightforward recipe for implementing catalytic protocols. This strengthens the connection between catalytic studies in quantum thermodynamics and physically motivated, relatively practical thermodynamic operations.

In Sec.~\ref{sec:ETO_overview}, we provide an overview of TO and ETO to set the foundation for our study. We expound on the concept of tight-majorization and neighbouring $\beta$-swaps, elucidating them with Figs.~\ref{fig:thermo_curves} and~\ref{fig:ETO_freeEs}. Additionally, we develop a fundamental technical tool, i.e. Lemma~\ref{lemma:neighbouring_TOext} which plays a pivotal role in the development of our primary findings. In Sec.~\ref{sec:CETO}, we present our initial examples of catalytic advantages in ETO and display the complete set of reachable states facilitated by any qubit catalyst. The catalytic regime that we study possesses several desirable properties: it can be entirely achieved through two-level swaps, both system and catalyst are small in dimension, and the catalyst is fully recovered without any errors or remnant correlations with the system. 

Sec.~\ref{sec:higherd_cat} extends our investigation to a higher-dimensional regime, pushing the total system dimension from six, as explored in Sec.~\ref{sec:CETO}, up to ninety. In general, the case of $d=9$ already exceeds our computational capability. Nevertheless, by imposing restrictions on our initial states, we develop a significantly more efficient algorithm to determine the feasibility of (catalytic) ETO transitions. We focus our attention on the problem of cooling one thermal state to another under catalytic ETO and compare its performance to non-catalytic ETO and TO. Our results underscore that even relatively small catalysts suffice to overcome the TO limit.

In Sec.~\ref{sec:characterising_Seto}, we present the analytical results that form the basis for our numerical methods in Secs.~\ref{sec:CETO} and~\ref{sec:higherd_cat}. This includes, for example, a full characterisation of the set of states reachable by ETO for systems of dimension $d=3$ (Thm.~\ref{thm:3dETO}), as well as a universal upper bound, tighter than previous bounds (Thm.~\ref{thm:ETO_cone}). Notably, we identify a class of initial states where characterising the set of ETO-reachable states is computationally as efficient as checking thermo-majorization relations (Thm.~\ref{thm:nice_order}). This result essentially resolves the characterisation challenge of ETO for specific sets of initial states, opening the door for systematic investigations into operationally important tasks, such as the cooling scenario in Sec.~\ref{sec:higherd_cat}.

\section{Background}\label{sec:ETO_overview}

In this section we define the terminology and notation used throughout the paper, while also providing an overview of the state-of-art knowledge and techniques which we applied to develop our results. Our approach in this work is grounded in the resource-theoretic framework of quantum thermodynamics, which explores the fundamental constraints on thermodynamic state transitions concerning a set of \emph{free operations} and \emph{free states}. This approach has been fruitful~\cite{Lostaglio15_coherence, Cwiklinski15_PRL, Faist15_GPmap, Korzekwa16_work, Alhambra16_flucwork, Masanes17_3rdlaw, Wilming17_3rdlaw, Ng17_Carnot, Chubb18_finitesize, Korzekwa19_Irr, Woods19_efficiency, YungerHalpern20_photoisomerization, Shiraishi20_d-maj} in identifying the additional restrictions that arise due to non-negligible fluctuations and finite size effects for quantum systems that interact with their respective environments.

\subsection{Thermal operations}
The most well-known and studied version of the thermodynamic resource theory is the paradigm of thermal operations (TO).
These are quantum channels that can be written in the form 
\begin{equation}
\Eps(\rho_{S}) = \Tr_{B'}\left[U_{SB}\left(\rho_{S}\otimes\tbeta_B\right) U_{SB}^{\dagger}\right],\label{eq:TO}
\end{equation}
where 
\begin{enumerate}
	\item $\tbeta_B = \frac{\exp\left[-\beta H_B\right]}{Z_B}$ is the Gibbs state at temperature $1/\beta$ given an arbitrary choice of a bath Hamiltonian $H_B$, 
	\item $Z_B = \Tr\exp\left[-\beta H_B\right]$ is the partition function of the bath, 
	\item $U_{SB}$ is a global energy-preserving unitary, and therefore satisfies $[H_S+H_B,U_{SB}]=0$, and
	\item $ B' $ is a subsystem of choice within $ SB $ which is discarded at the end of the process\footnote{For most cases, we are interested in $B=B^\prime$.}.
\end{enumerate}

The complete state transition conditions of determining whether an initial $ \rho_S $ can be transformed into some $ \rho_S' $ via $ \Eps $ have remained a long-standing open problem, in particular when both initial and final states are energy-coherent. However, the situation simplifies in the quasi-classical case, where either $ \rho_S $ or $ \rho_S' $ is energy incoherent. Given a system $ S $, let us describe any given initial state
%\begin{equation}\label{key}
	$ \rho_S = \sum_{ij} a_{ij} \ketbra{E_i}{E_j} $,
%\end{equation} 
with respect to the acsendingly-ordered energy eigenbasis, namely $ E_i \leq E_j $ for all $ i \leq j $. In the quasi-classical regime, the state is fully characterised by the population vector for each energy eigenbasis, which we denote by $ \pstate = \lbrace a_{ii} \rbrace_i $. We therefore in the rest of the manuscript refer to the population vectors as states living in some probability space $ \probspace^d $. The set of reachable final states from $ \pstate $ is then denoted as $ \clS_\mathrm{TO}(\pstate) $. If $ \qstate \in \clS_\mathrm{TO}(\pstate) $, then
%\begin{equation}\label{key}
	 $ \rho_S~ \xrightarrow{\rm TO}~\rho_S' = \sum_{i} q_i \ketbra{E_i}{E_i} $,
%\end{equation} 
that is, $ \rho_S' $ is achievable from $ \rho_S $ via thermal operations.

For quasi-classical states, the possibility of state transition is fully characterised by \emph{thermomajorisation relations}~\cite{Horodecki13_fundamental}. 
%From hereon, initial and final states are assumed to be incoherent. We justify this choice by noting that energy basis coherences never increases as a result of TO~\cite{Cwiklinski15_PRL, Lostaglio15_PRX_coh}. 
It is furthermore known that the action of TO on such quasi-classical states can equivalently be described by Gibbs preserving matrices $G$ acting on the population vectors~\cite{Janzing00_TO, Horodecki13_fundamental}. That is, there exists $ G $ that preserves the Gibbs state population vector $\tbeta = \{e^{-\beta E_i}/Z_S\}_i$, i.e. $G\tbeta = \tbeta$, and $\qstate = G\pstate$.

Given two states $ \rho,\rho' $ represented by their population vectors $\pstate, \qstate$, and the system Hamiltonian (which in turn determines $ \tbeta $), thermomajorisation realtion, denoted by $ \succ_\beta $, is a pre-order between the states, defined according to their respective thermomajorisation curves $ \mathcal{L}_\pstate,\mathcal{L}_\qstate $ (Def.~\ref{def:therm_curve}). We say that $\pstate\succ_\beta \qstate$ if $\mathcal{L}_\pstate(a)\geq\mathcal{L}_\qstate(a)$ for all $ a\in [0,1] $.

%A huge simplification can be achieved when only energy incoherent states are considered. Given a system Hamiltonian $H_S = \sum_iE_i\lvert i\rangle\langle i\rvert_{S}$, it is possible to write any diagonal operator, e.g. $\rho = \sum_ip_i\lvert i\rangle\langle i\rvert_{S}$ as a probability vector $\pstate = (p_1,\cdots,p_d)$. Then the thermal operations are written in a matrix form $G$ such that
%\begin{eqnarray}
%	G\tbeta = \tbeta,
%\end{eqnarray}
%preserving Gibbs state vector $\tbeta$. 
%

\begin{definition}[$\beta$-order]\label{def:border}
	Given $\pstate\in\probspace^d$ and a Gibbs state $\tbeta\in\probspace^d$, let us denote the element-wise ratio of the two vectors as 
	\begin{equation}\label{eq:slopes}
		\slope(\pstate)_i := p_{i}/\tbeta_{i}.
	\end{equation}
	Then the $\beta$-order $\border(\pstate,\tbeta) $ is a particular ordering of the energy eigenbasis labels $ (1,\cdots,d) $, such that the ratios according to this ordering is non-increasing, i.e. 
	\begin{equation}\label{key}
		\slope(\pstate)_{ \pi_k} \geq  \slope(\pstate)_{ \pi_{k+1}},\quad \forall k\leq d-1,
	\end{equation}
	with $ \pi_k = \border(\pstate)_k$, where we omit $\tbeta$ in the argument when it is obvious from context.
\end{definition}

\begin{definition}[Thermo-majorization curve]\label{def:therm_curve}
	For a state $\pstate\in\probspace^d$, the themo-majorization curve $\mathcal{L}_\pstate:\lbrace 0,1\rbrace\rightarrow \lbrace 0,1\rbrace $ is a piecewise-linear function that interpolates between the coordinates $\{(0,0)\}$ and elbow points $\{(\sum_{k=1}^{l}\tbeta_{ \pi_k}, \sum_{k=1}^{l}p_{ \pi_k})\}_{l=1}^{d}$.
\end{definition}

Earlier works have shown that the set $  \clSto(\pstate) $ characterised by thermomajorisation can be constructed quite efficiently. This is summarised in the theorem below. 

\begin{theorem}[\cite{Lostaglio_18_ETO}, Lemma~12 and \cite{Mazurek19_channels}, Thm.~2]\label{thm:TO_cone} 
	For any given $ \pstate $, the set of reachable states $\clSto(\pstate)$ is a convex combination of $ d! $ unique extreme points that correspond to distinct $\beta$-orders and are tightly thermomajorised by $ \pstate $.
\end{theorem}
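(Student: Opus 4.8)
The plan is to argue entirely at the level of population vectors, using the thermo-majorization characterization $\clSto(\pstate)=\{\qstate\in\probspace^d:\pstate\succ_\beta\qstate\}$~\cite{Horodecki13_fundamental} recalled above. First, I would note that $\clSto(\pstate)$ is a compact convex polytope: it sits inside the simplex $\probspace^d$, and the pointwise inequality $\mathcal{L}_\pstate\geq\mathcal{L}_\qstate$ is equivalent to the finite family of linear inequalities $\sum_{i\in S}q_i\leq\mathcal{L}_\pstate\bigl(\sum_{i\in S}\tbeta_i\bigr)$ over subsets $S\subseteq\{1,\dots,d\}$. This uses concavity of $\mathcal{L}_\pstate$ together with the fact that $\mathcal{L}_\qstate$ is the least concave majorant of the ``subset points'' $\{(\sum_{i\in S}\tbeta_i,\sum_{i\in S}q_i)\}_S$ (fractional-knapsack characterization of the thermo-majorization curve), so $\mathcal{L}_\qstate\leq\mathcal{L}_\pstate$ holds everywhere exactly when all those points lie weakly below $\mathcal{L}_\pstate$. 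Hence $\clSto(\pstate)$ has finitely many extremal points, and the content of the theorem is to identify them.

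Next, for each ordering $\sigma$ of the labels $(1,\dots,d)$ I would construct a candidate extremal point $\qstate^{(\sigma)}$ by forcing its thermo-majorization curve to meet $\mathcal{L}_\pstate$ at the abscissae $x^{\sigma}_l:=\sum_{k\leq l}\tbeta_{\sigma_k}$, i.e.
\[
q^{(\sigma)}_{\sigma_l}\ :=\ \mathcal{L}_\pstate\bigl(x^{\sigma}_l\bigr)-\mathcal{L}_\pstate\bigl(x^{\sigma}_{l-1}\bigr),\qquad l=1,\dots,d .
\]
Because $\mathcal{L}_\pstate$ is nondecreasing and concave, three things follow at once: (i) $\qstate^{(\sigma)}\in\probspace^d$ (nonnegativity from monotonicity, normalization from $\mathcal{L}_\pstate(0)=0$, $\mathcal{L}_\pstate(1)=1$); (ii) $\slope(\qstate^{(\sigma)})_{\sigma_l}$ is the average slope of $\mathcal{L}_\pstate$ on $[x^{\sigma}_{l-1},x^{\sigma}_l]$, which is nonincreasing in $l$, so $\sigma$ is a valid $\beta$-order of $\qstate^{(\sigma)}$ and the elbows of $\mathcal{L}_{\qstate^{(\sigma)}}$ lie among the $x^{\sigma}_l$; and (iii) consequently $\mathcal{L}_{\qstate^{(\sigma)}}$ is the chordwise interpolant of points lying on the concave curve $\mathcal{L}_\pstate$, hence $\mathcal{L}_{\qstate^{(\sigma)}}\leq\mathcal{L}_\pstate$. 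Thus $\qstate^{(\sigma)}\in\clSto(\pstate)$ and is tightly thermo-majorized by $\pstate$; taking $\sigma=\border(\pstate)$ recovers $\qstate^{(\sigma)}=\pstate$ as a sanity check. Extremality of $\qstate^{(\sigma)}$ then follows by a squeezing argument: if $\qstate^{(\sigma)}=\lambda\rstate+(1-\lambda)\sstate$ with $\rstate,\sstate\in\clSto(\pstate)$ and $\lambda\in(0,1)$, evaluating at each elbow gives $\mathcal{L}_\pstate(x^{\sigma}_l)=\sum_{k\leq l}q^{(\sigma)}_{\sigma_k}=\lambda\sum_{k\leq l}r_{\sigma_k}+(1-\lambda)\sum_{k\leq l}s_{\sigma_k}\leq\mathcal{L}_\pstate(x^{\sigma}_l)$, forcing $\sum_{k\leq l}r_{\sigma_k}=\sum_{k\leq l}s_{\sigma_k}=\mathcal{L}_\pstate(x^{\sigma}_l)$ for all $l$, hence $\rstate=\sstate=\qstate^{(\sigma)}$.

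It remains to show that these are \emph{all} the extremal points, and this is where the real work lies. Given an extremal $\qstate\in\clSto(\pstate)$ with $\beta$-order $\sigma$, I would show that $\mathcal{L}_\qstate$ must touch $\mathcal{L}_\pstate$ at every elbow of $\mathcal{L}_\qstate$, which immediately gives $\qstate=\qstate^{(\sigma)}$: otherwise, at an elbow where $\mathcal{L}_\qstate$ lies strictly below $\mathcal{L}_\pstate$, perturbing the height of the curve at that elbow by a small amount in either direction keeps us inside $\clSto(\pstate)$ — concavity of $\mathcal{L}_\pstate$ keeps the curve underneath it, while the strict gap and the neighbouring elbows supply slack that keeps the perturbed vector a valid state of $\beta$-order $\sigma$ — exhibiting $\qstate$ as the midpoint of a nondegenerate segment, contradicting extremality. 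Equivalently, iterating such moves realizes any $\qstate\in\clSto(\pstate)$ as an explicit convex combination of the $\qstate^{(\sigma)}$, which together with the previous paragraph pins down the extremal set and yields $\clSto(\pstate)=\mathrm{conv}\{\qstate^{(\sigma)}\}_\sigma$. The main obstacle is precisely this last step: one must choose the perturbation direction so that the $\beta$-order (hence concavity) is not spoiled, and one must separately handle the degenerate configurations — ties among the $\slope(\pstate)_i$, vanishing $q^{(\sigma)}_i$, colinear elbows of $\mathcal{L}_\pstate$ — which are also exactly the situations in which distinct orders $\sigma$ produce the same vector, so that the correct count of extremal points is the \emph{upper} bound $d!$ rather than an equality.
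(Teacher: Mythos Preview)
The paper does not actually supply its own proof of Theorem~\ref{thm:TO_cone}: it is stated as a citation to \cite{Lostaglio_18_ETO} and \cite{Mazurek19_channels}, with only the one-line gloss that tightly thermo-majorized states dominate every state of the same $\beta$-order. Your construction of $\qstate^{(\sigma)}$ and the squeezing argument for extremality (Steps~2--4) are exactly a clean, self-contained rendering of that gloss, so on that part you are aligned with the paper's intended justification and correct.

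What you add beyond the paper is the polytope framing in Step~1 (the subset inequalities $\sum_{i\in S}q_i\leq\mathcal{L}_\pstate(\sum_{i\in S}\tbeta_i)$) and the perturbation argument in Step~5. The first is correct and useful; the second is the right idea but, as you yourself flag, not yet airtight. The specific gap: perturbing the height of a single elbow in \emph{both} directions while preserving concavity requires strict inequalities on \emph{both} adjacent slopes, and when $\slope(\qstate)_{\sigma_l}=\slope(\qstate)_{\sigma_{l+1}}$ only one direction is available, so the midpoint representation fails as written. One clean way to close this is to observe that in such a degeneracy the $\beta$-order is non-unique, pass to the alternative order, and argue inductively; another is to bypass perturbation entirely and use Lemma~\ref{lemma:same_border} to show that any $\qstate$ with $\beta$-order $\sigma$ is reached from $\qstate^{(\sigma)}$ by partial level thermalizations, each of which is a convex combination of identity and a two-level map that sends $\qstate^{(\sigma)}$ to some $\qstate^{(\sigma')}$. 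Either route completes your outline; the referenced proofs in \cite{Lostaglio_18_ETO,Mazurek19_channels} go via the extremal structure of Gibbs-stochastic matrices rather than your direct geometric route, so your argument is genuinely more elementary once the degeneracies are handled.
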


%See Fig.~\ref{Fig:thermomaj_example} for an example of tight-majorization. 
With Thm.~\ref{thm:TO_cone}, determining $\clSto(\pstate)$ is computationally inexpensive. This theorem follows from the fact that if a state is tightly-majorised by $ \pstate $, then it thermomajorises any other state in $ \clSeto(\pstate) $ that has the same $ \beta $-order $\border^{(m)}$. The importance of tightly-majorised states can also be seen from another perspective:

\begin{lemma}[Thm.~12 of~\cite{Perry18_PRX_Crude}]\label{lemma:same_border}
	If two states $\pstate$ and $\qstate$ have the same $\beta$-order $\border(\pstate) = \border(\qstate)$ and $\pstate\succ_{\beta}\qstate$, then $\qstate$ can be obtained from $\pstate$ by a sequence of partial level thermalizations.
\end{lemma}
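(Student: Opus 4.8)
The plan is to treat this as the thermo-majorization analogue of the classical fact that $x\prec y$ (ordinary majorization) implies $x$ is obtained from $y$ by finitely many T-transforms, with \emph{partial level thermalizations} playing the role of the Robin-Hood moves.

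\emph{Reduction.} First I would relabel the energy eigenbasis by the inverse of the common $\beta$-order $\border(\pstate)=\border(\qstate)$, so that both Gibbs-ratio vectors $\slope(\pstate)$ and $\slope(\qstate)$ become non-increasing. This costs nothing: the set of partial level thermalizations is invariant under relabelling of levels, and so is the property ``$\qstate$ is reachable from $\pstate$ by such maps''. In this labelling the curves $\mathcal{L}_\pstate$ and $\mathcal{L}_\qstate$ are concave and piecewise linear with elbows at the \emph{same} abscissae $X_l=\sum_{k\le l}\tbeta_k$, so $\pstate\succ_\beta\qstate$ collapses to the finite system $P_l:=\sum_{k\le l}p_k\ge Q_l:=\sum_{k\le l}q_k$ for all $l$, with $P_d=Q_d=1$.

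\emph{Induction.} I would induct on $m:=\#\{i:p_i\ne q_i\}$; $m=0$ is trivial. For $m\ge 1$, the first disagreeing index $i_0$ has $p_{i_0}>q_{i_0}$ (from $P_{i_0}\ge Q_{i_0}$ and $P_{i_0-1}=Q_{i_0-1}$), while $\sum_i(p_i-q_i)=0$ forces some disagreeing index with $p<q$; a sign-change argument then yields indices $j<k$ that are consecutive among the disagreeing ones (so $p_i=q_i$ strictly between them) with $p_j>q_j$ and $p_k<q_k$. I would then apply the \emph{single} partial level thermalization on the pair $(j,k)$ that transfers exactly $\delta:=\min(p_j-q_j,\ q_k-p_k)>0$ of population from $j$ down to $k$, and feed the resulting $\pstate'$ to the inductive hypothesis: its disagreement count drops by at least one, since $p'_j=q_j$ or $p'_k=q_k$ while no already-agreeing coordinate was touched.

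\emph{The crux.} The real work — and where the ``same $\beta$-order'' hypothesis earns its keep — is checking that this one move is legitimate: (i) $\delta$ is actually implementable, i.e. $\delta\le\mu:=\bigl(\tbeta_k p_j-\tbeta_j p_k\bigr)/(\tbeta_j+\tbeta_k)$, the maximal transfer (= full thermalization of the two levels); (ii) $\pstate'$ still has the identity $\beta$-order; (iii) $\pstate'\succ_\beta\qstate$. All three are driven by monotonicity of $\qstate$'s Gibbs ratios, $\slope(\qstate)_j\ge\slope(\qstate)_k$: this makes it impossible for full thermalization of $(j,k)$ to leave $p_j$ above $q_j$ \emph{and} $p_k$ below $q_k$ simultaneously, giving $\mu\ge\min(p_j-q_j,q_k-p_k)=\delta$; for (ii) the only order relations in doubt are that the new slope at $j$ does not dip below the slope just above it and the new slope at $k$ does not rise above the one just below it, and when there are coordinates strictly between $j$ and $k$ they agree with $\qstate$, so $\slope(\pstate)_{j+1}=\slope(\qstate)_{j+1}\le\slope(\qstate)_j$ and $\slope(\pstate)_{k-1}=\slope(\qstate)_{k-1}\ge\slope(\qstate)_k$, which is exactly what $\delta\le p_j-q_j$ and $\delta\le q_k-p_k$ require (when $j+1=k$ the bound $\delta\le\mu$ already suffices); and for (iii) only $P_l$ with $j\le l<k$ changes, where $P_l-Q_l=p_j-q_j$ is a constant $\ge\delta$. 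A pleasant by-product is that ties in the $\beta$-order adjacent to $j$ or $k$ cannot occur (they would contradict $p_j>q_j$ or $p_k<q_k$), so no degenerate sub-cases arise; the induction terminates after at most $d-1$ partial level thermalizations, and undoing the relabelling finishes the proof.
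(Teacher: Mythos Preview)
The paper does not supply its own proof of this lemma; it simply quotes it as Theorem~12 of \cite{Perry18_PRX_Crude}. So there is no in-paper argument to compare against, and your proposal has to stand on its own.

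Your argument is the correct one and is exactly the thermo-majorization analogue of the Hardy--Littlewood--P\'olya/Muirhead ``Robin Hood'' reduction, which is also the route taken in \cite{Perry18_PRX_Crude}. The three crux checks are sound: (i) follows because after fully thermalizing the pair $(j,k)$ the two slopes coincide, so one cannot simultaneously have $p'_j>q_j$ and $p'_k<q_k$ given $\slope(\qstate)_j\ge\slope(\qstate)_k$; (ii) uses that all indices strictly between $j$ and $k$ already agree with $\qstate$, so the only non-automatic order relations are pinned by $p'_j\ge q_j$, $p'_k\le q_k$, and (for $k=j{+}1$) by $\delta\le\mu$; (iii) only the partial sums $P_l$ with $j\le l<k$ move, each by $-\delta$.

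One small imprecision worth fixing: in step (iii) you write $P_l-Q_l=p_j-q_j$, but your sign-change argument does not force $j$ to be the \emph{first} disagreeing index, so $P_{j-1}-Q_{j-1}$ need not vanish. What is true (and all you need) is that for $j\le l<k$ the difference $P_l-Q_l$ is the \emph{constant} $P_{j-1}-Q_{j-1}+(p_j-q_j)\ge p_j-q_j\ge\delta$, hence $P'_l\ge Q_l$. Either amend the sentence, or simply take $j=i_0$ (then $P_{j-1}=Q_{j-1}$ and your equality holds verbatim; a consecutive sign-change pair starting at $i_0$ always exists). The final remark about ties adjacent to $j$ or $k$ is unnecessary: your inequalities already allow equalities in the $\beta$-order, and the argument goes through without any non-degeneracy assumption.
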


Partial level thermalizations~\cite{Perry18_PRX_Crude}, where populations of a subset of levels are partially mixed with corresponding thermal populations, can always be decomposed into series of two-level partial swaps and thus are (E)TO.

\begin{figure}
	\centering
	\includegraphics[width=0.44\linewidth]{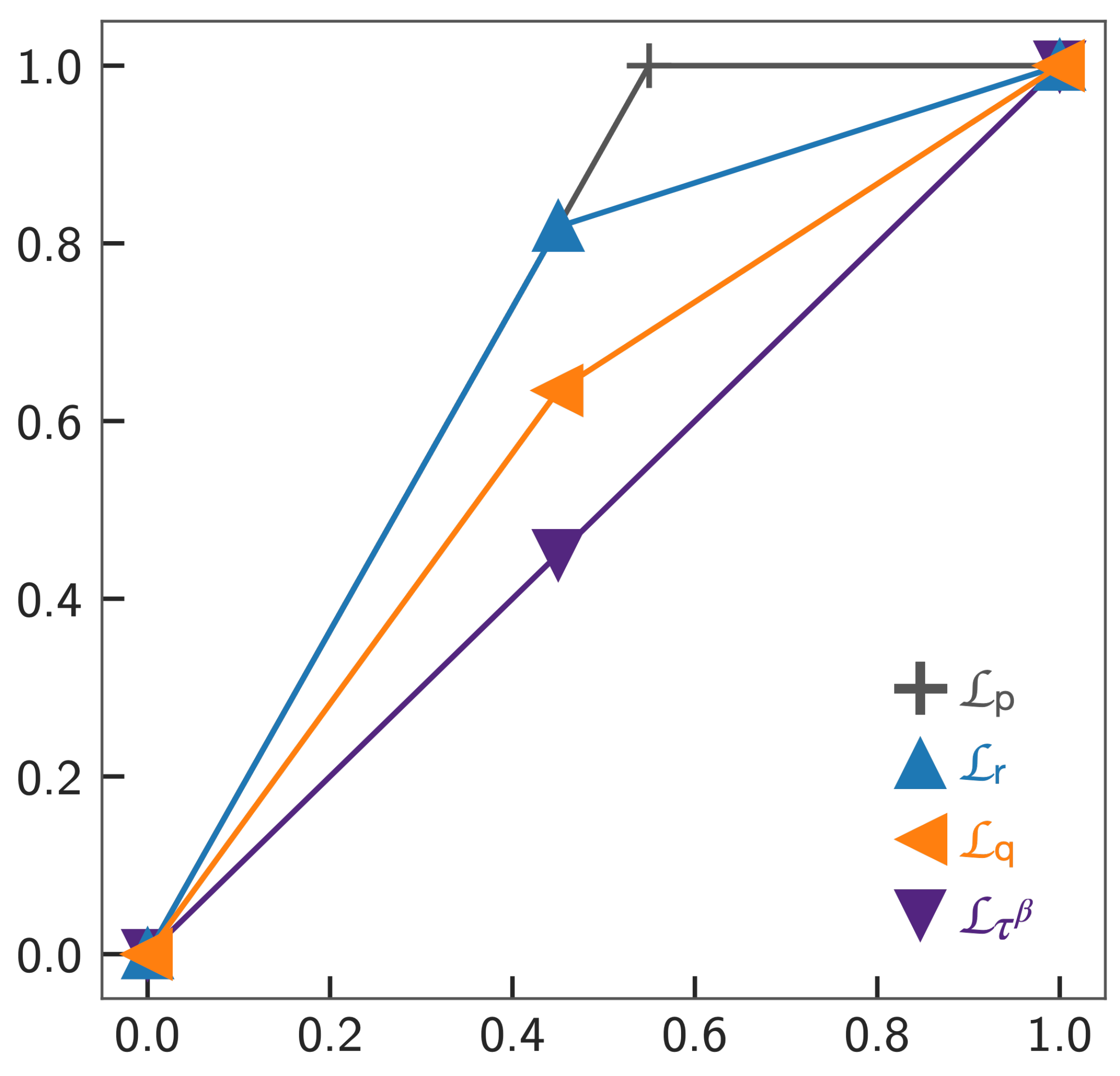}
	\caption{Thermomajorization curves of $ \pstate,\rstate,\qstate $, and $ \tbeta $, where $ \pstate\succ_{\beta}\rstate\succ_{\beta}\qstate\succ_{\beta}\tbeta $. In particular, $ \rstate $ is tightly thermomajorised by $ \pstate $, implying that $ \rstate $ thermomajorises any $ \qstate\in\clSeto(\pstate) $ with the same order. Here we plotted one such state $ \qstate = (\rstate+\tbeta)/2 $. Temperature and energy levels are set to be $E_1 = 0$, $ \beta E_2 = 0.2$.}
	%	\captionof{figure}{Thermomajorization curves of $ \pstate,\rstate,\qstate $, and $ \tbeta $, where $ \pstate\succ_{\beta}\rstate\succ_{\beta}\qstate\succ_{\beta}\tbeta $. In particular, $ \rstate $ is tightly thermomajorised by $ \pstate $, implying that $ \rstate $ thermomajorises any $ \qstate\in\clSeto(\pstate) $ with the same order. Here we plotted one such state $ \qstate = (\rstate+\tbeta)/2 $. Temperature and energy levels are set to be $E_1 = 0$, $ \beta E_2 = 0.2$.}
	\label{Fig:thermomaj_example}
\end{figure}

\begin{tcolorbox}[breakable, every float=\centering, drop shadow, title=An example of tight thermomajorisation]
%	\begin{minipage}[t]{\linewidth}
%		\vspace*{0pt}
				Given a qubit Hamiltonian ${\rm H} = (E_1,E_2)$ with its corresponding Gibbs state $\tbeta = (\tbeta_1 , \tbeta_2)$ of some fixed temperature $1/\beta$, consider the pure ground state
%				\begin{equation}\label{key}
					$ \pstate = (1,0) $,
%				\end{equation} 
			which has a $\beta$-order $\border(\pstate) = (1,2)$ and a simple thermomajorisation curve 
\begin{align}
	\mathcal{L}_\pstate(a) = \begin{cases}
		a(\tbeta_1)^{-1}, &\text{for } a\leq \tbeta_1,\\
		1, &\text{for }  \tbeta_1<a\leq 1.
	\end{cases}
\end{align}
A state $\qstate = (q_1,q_2)$ has $\border(\qstate) = (2,1)$ if $q_2>\tbeta_2$, which leads to a thermomajorisation curve
\begin{align}
	\mathcal{L}_\qstate(a) = \begin{cases}
		aq_2(\tbeta_2)^{-1}, &\text{for } a\leq \tbeta_2,\\
		1 - \frac{(1-a)q_1}{1-\tbeta_2}, &\text{for }  \tbeta_2<a\leq 1.
	\end{cases}
\end{align}

Then $\pstate\succ_\beta\qstate$ if $q_2\leq \exp[\beta(E_1-E_2)]:= \Delta_{12}$. Furthermore, when $r_2 = \Delta_{12}$, $\rstate = (r_1,r_2)$ is \emph{tightly thermomajorised} by $\pstate$, i.e. the elbows of $\mathcal{L}_\rstate$ coincide with the curve $\mathcal{L}_\qstate$. See Fig.~\ref{Fig:thermomaj_example} for an example of tight-thermomazorization.% few examples of $\mathcal{L}_\qstate$ that are thermomajorised by $\mathcal{L}_\pstate$.
%	\end{minipage}
%\hspace{0.1cm}

%	\begin{minipage}[t]{\linewidth}
%		\vspace*{0pt}
%%		\begin{figure}
%		\centering
%		\includegraphics[width=0.4\linewidth]{Fig1}
%%		\caption{Thermomajorization curves of $ \pstate,\rstate,\qstate $, and $ \tbeta $, where $ \pstate\succ_{\beta}\rstate\succ_{\beta}\qstate\succ_{\beta}\tbeta $. In particular, $ \rstate $ is tightly thermomajorised by $ \pstate $, implying that $ \rstate $ thermomajorises any $ \qstate\in\clSeto(\pstate) $ with the same order. Here we plotted one such state $ \qstate = (\rstate+\tbeta)/2 $. Temperature and energy levels are set to be $E_1 = 0$, $ \beta E_2 = 0.2$.}
%		\captionof{figure}{Thermomajorization curves of $ \pstate,\rstate,\qstate $, and $ \tbeta $, where $ \pstate\succ_{\beta}\rstate\succ_{\beta}\qstate\succ_{\beta}\tbeta $. In particular, $ \rstate $ is tightly thermomajorised by $ \pstate $, implying that $ \rstate $ thermomajorises any $ \qstate\in\clSeto(\pstate) $ with the same order. Here we plotted one such state $ \qstate = (\rstate+\tbeta)/2 $. Temperature and energy levels are set to be $E_1 = 0$, $ \beta E_2 = 0.2$.}
%		\label{Fig:thermomaj_example}
%%		\end{figure}
%	\end{minipage}
\end{tcolorbox}

Another important aspect of state transition conditions comes from functions that behave monotonically whenever thermal operations are applied. The most commonly known set of such monotones is dubbed generalised free energies~\cite{Brandao15_2ndlaws} 
\begin{equation}\label{eq:Falpha}
	F_\alpha(\rho, \tbeta): = \frac{1}{\beta} \left(D_\alpha(\rho\Vert\tbeta) - \log Z\right),
\end{equation}
for all $\alpha\in(-\infty,\infty)$. % constitute the monotones \js{, never increasing after thermal operations}. 
The R\'enyi divergence for a quasi-classical state~\footnote{Multiple generalizations for energy-coherent states exist, and can be found in Ref.~\cite{Brandao15_2ndlaws}.} $\pstate\in\probspace^d$ reduces to
%\begin{equation}
	$ D_\alpha (\pstate \Vert \tbeta) = \frac{\sgn(\alpha)}{\alpha-1}\log\sum_{i}p_i^\alpha (\tbeta_i)^{1-\alpha}$.
%\end{equation} 
%with $p_i$ and $\tbeta_i$ representing the eigenvalues of system density matrix $\pstate$ and a Gibbs state $\tbeta$ respectively.  
When $\alpha \rightarrow 1$, $F_\alpha$ reduces to the nonequilibrium free energy 
\begin{align}
	F(\rho,\tbeta) = \lim_{\alpha\rightarrow1^+} F_\alpha(\rho\Vert\tbeta)  = \langle H\rangle_\rho - \frac{1}{\beta}S(\rho),
	%\\ &= \Tr\left[H\rho\right] + \frac{1}{\beta}\sum_i\rho_i\log\rho_i,\nonumber
\end{align}
where $\langle\cdot\rangle_\rho$ denotes average with respect to the state $\rho$, and $S(\rho)$ is the von Neumann entropy -- a recovery of the second law of thermodynamics. %, equivalent to the free energy decreasing condition in this setting, is thus derived. 

Monotones only provide necessary conditions for state transitions via TO and its subsets. Yet, the set of monotones in Eq.~\eqref{eq:Falpha} becomes sufficient in the quasi-classical regime~\cite{Brandao15_2ndlaws}, when catalysts are involved. In other words, it is straightforward to determine the existence of a catalytic thermal operation (CTO) that achieves $ \rho\xrightarrow{\rm CTO}~\rho^\prime $ -- one simply needs to check that Eq.~\eqref{eq:Falpha} is monotonic for all $ \alpha $ with respect to such a state transition.  %an arbitrarily large~\cite{Lipka-Bartosik_PRX_universal} catalyst can be utilised and infinitesimal errors in final states are allowed. 

\begin{table*}[htbp]
	\centering
	%		\begin{tabular}{ |c||L{4.0cm}|L{4.25cm}|L{4.4cm}|}
		\begin{tabular}{ |c||L{3.35cm}|L{3.42cm}|L{3.7cm}|}
			\toprule
			\textbf{Operations} & \multicolumn{3}{l}{\hspace{1.1cm}MTP \hspace{1cm} $\subsetneq$  \hspace{1.05cm} ETO  \hspace{1.05cm}  $\subsetneq$  \hspace{1.2cm} TO \hspace{2cm}}  \vline\\
			\midrule
			Construct Ext$[\clS]$ &  Neighbouring $\beta$-swap series (Thm.~4 of ~\cite{Lostaglio22_MTP1}) & $\beta$-swap series (Thm.~\ref{thm:ETO_cone}) & Tight thermomajorisation (Thm.~\ref{thm:TO_cone})\\
			%		\hline
			%		Monotones & Generalised free energies $F_{\alpha}$ & Generalised free energies $F_{\alpha}$ & Generalised free energies $F_{\alpha}$\tablefootnote{Equivalent to the generalised entropy production $\Sigma_{\alpha}$ used in~\cite{Lostaglio22_MTP1} when the Hamiltonian is fixed.} \\
			\hline
			$\pstate\rightarrow\qstate$ iff. & $\pstate$ continuously thermomajorises $\qstate$ & Unknown  & $\pstate$ thermomajorises $\qstate$\\
			\hline
			Catalysis & Examples of Gibbs state catalysts~\cite{Korzekwa22_MTP2} & Qubit examples (Sec.~\ref{sec:CETO}); higher-dimensional examples (Sec.~\ref{sec:higherd_cat}) & $\exists$ catalytic transformation iff. $F_{\alpha}$ does not increase $\forall\alpha\geq0$\\
			\hline
		\end{tabular}
		\caption{Comparison between Markovian thermal processes, elementary thermal operations, and thermal operations, when applied to the quasi-classical cases. \label{table:comparison}}
	\end{table*}

\subsection{Elementary thermal operations}
Special subsets of TO, which take into account more realistic limitations on feasibility, have been developed in the past few years. Two prominent examples are Markovian thermal processes (MTP)~\cite{Lostaglio22_MTP1}, which we will not explicitly describe in this paper, and elementary thermal operations (ETO)~\cite{Lostaglio_18_ETO}, which is the focus of this paper. See Table~\ref{table:comparison} for a hierarchy between these operations. 

ETO is a subset of TO with an additional restriction on the unitary transformation $U_{SB}$ in Eq.~\eqref{eq:TO}. Unlike TO, where the simultaneous manipulation of all energy levels of the system is allowed, ETO concatenates a series of operations where each individual step involves only two energy levels. %at any particular time. 
Formally, 
\begin{equation}
	\langle i \rvert_{S} U_{SB} \lvert i \rangle_{S} = \mathbb{1}_B,\ \forall i\neq j,k,
\end{equation}
where $j$ and $k$ are the system energy levels that the unitary aims to maneuver. 
As in TO, the state transition from $\pstate$ to $\qstate$ via a single ETO step acting on two levels %\footnote{Without loss of generality, whenever writing thermal operations such as $ \Mswap^{(j,k)}$  or $\beta^{(j,k)} $, we always take the convention that $E_j\leq E_k$ throughout the manuscript.} 
can be written as %a matrix equation 
$ 	\qstate = \Mswap^{(j,k)}_\lambda\pstate $,
where~\footnote{Whenever the order of $E_j$ and $E_k$ is known, we denote ETO such as $ \Mswap^{(j,k)}$  or $\beta^{(j,k)} $, using the convention that $E_j\leq E_k$.} \begin{align}\label{eq:Deltas}
	\lambda\in[0,1], &\quad \Delta_{jk} := \exp[\beta(E_j-E_k)],
	\end{align}
%throughout the manuscript %. However, we sometimes write $\Mswap^{(k,j)}$ to indicate the same matrix $\Mswap^{(j,k)}$, especially  when the order of $E_j$ and $E_k$ are unknown.}}
and
\begin{equation}
	\Mswap^{(j,k)}_{\lambda}: = \begin{pmatrix}
		1-\lambda \Delta_{jk} && \lambda \\
		\lambda \Delta_{jk} && 1-\lambda
	\end{pmatrix} \oplus \mathbb{1}_{\setminus (j,k)}.\label{eq:Mswap_def}
\end{equation}
This channel is also constructed to preserve the Gibbs state $\tbeta$. %., and for all the other levels except $j$ and $k$, it works as an identity. 
Extremal cases of $\lambda=1$ are named $\beta$-swaps (throughout the manuscript we also refer to them simply as swaps),
\begin{equation}\label{eq:beta_jk}
	\beta^{(j,k)}:= \Mswap^{(j,k)}_1 = \begin{pmatrix}
	1- \Delta_{jk} & 1 \\
		\Delta_{jk} & 0
	\end{pmatrix} \oplus \mathbb{1}_{\setminus (j,k)}.
\end{equation} 

By definition, it is clear that both \emph{concatenations} and \emph{convex combinations} of TOs are themselves also TO. This is obviously not true for ETO, whenever they act non-trivially over different energy levels. 
Therefore, most of the interesting transitions would arise only when we include all sequences of ETOs, and also arbitrary convex combinations as allowed operations.
Such processes are always TO, hence, 
\begin{equation}\label{key}
	\rho \xrightarrow{\rm ETO}~\rho' \quad\implies\quad \rho\succ_\beta\rho',
\end{equation}
but the converse is not true% -- there exist TO that cannot be decomposed into series of ETO or their convex mixtures
~\cite{Lostaglio_18_ETO, Mazurek18_decomp}. 
%	but the converse is not true. 
%In the TO framework, consecutive applications of operations can be squeezed into one by appending larger Gibbs states. Observe that
%\begin{align}
%	\epsilon_1(\rho_S) &= \Tr_{B_1}\left[U_{SB_1}\rho_S\otimes\tau_{B_1}^\beta U_{SB_1}^\dagger\right],\\
%	\epsilon_2(\rho_S) &= \Tr_{B_2}\left[V_{SB_2}\rho_S\otimes\tau_{B_2}^\beta V_{SB_2}^\dagger\right],
%\end{align}
%lead to
%\begin{align}
%	\epsilon(\rho_S) &= \epsilon_2\circ  \epsilon_1(\rho_S)\\
%	&= \Tr_{B_1 B_2}\left[V_{SB_2}U_{SB_1}\rho_{S}\otimes\tau_{B_1B_2}^\beta U_{SB_1}^\dagger V_{SB_2}^\dagger\right],\nonumber
%\end{align}
%with energy preserving unitary $V_{SB_2}U_{SB_1}$ acting on $SB_1B_2$ space. Or in a Gibbs-preserving matrix representation, $G=G_2G_1$ would be another Gibbs-preserving matrix corresponding to the effect of two TO channels $G_1$ and $G_2$. 
%If the first operation $\Mswap^{(i,j)}$ acts on two levels $i$ and $j$, and the second one $\Mswap^{(k,l)}$ act on $k$ and $l$, it is in general impossible to find a two-level swap that yields a combined effect.\
%Thus, in the resource theory of ETO, sequences of such two-level swaps \emph{and} convex combinations of them are allowed.
%ETO state transformations need to fulfill thermomajorisation condition, but . %, but the sufficient ones of TO are not compatible. 

Since no efficient way to determine the possibility of $ \pstate $ to $ \qstate $ via ETO is known, one must construct $\clSeto(\pstate)$ starting from the initial state $\pstate$, and then verify whether $ \qstate\in  \clSeto(\pstate)$. However, existing theoretical constructions of $  \clSeto(\pstate)$ (by its extreme points) grow rapidly in the system dimension, and become computationally infeasible for the simplest examples of catalysis. We partially address this issue with our results in Sec.~\ref{sec:characterising_Seto}.

\subsection{Neighbouring swaps}\label{subsec:neighbouring}

To find the extreme points of $ \clSeto(\pstate) $, \emph{neighbouring swaps} are employed -- these are $\beta$-swaps of the form $\beta^{(\pi_j,\pi_{j+1})}$ and applied to a state $\pstate$ with $\border(\pstate)_k = \pi_k$, i.e. they swap two consecutive levels in the $\beta$-order of the input state. Neighbouring swaps are important because they typically minimise dissipation/change in athermality (as captured by thermomajorisation, see Lemma~\ref{lemma:neighbouring_TOext}).
% (as quantified e.g. by ...). To see this, consider the comparison of thermomajorisation curves after $\beta$-swaps.
\begin{lemma}~\label{lemma:neighbouring_TOext}
	For any state $ \pstate $, the state obtained from a neighbouring $\beta$-swap $\qstate^{(j)} = \beta^{(\pi_j, \pi_{j+1})}\pstate$ is tightly thermomajorised by $\pstate$.  Furthermore, states obtained by a single but non-neighbouring $ \beta $-swap are never tightly-majorised by $\pstate$ unless two swapped levels have the same energy. %, see e.g. Fig.~\ref{fig:thermo_curves}(f). 
\end{lemma}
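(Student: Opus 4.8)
The plan is to prove both statements by tracking what a single $\beta$-swap does to the thermo-majorization curve $\mathcal{L}_\pstate$, and then invoking Lemma~\ref{lemma:same_border} (or rather the geometry behind Theorem~\ref{thm:TO_cone}) to recognize tightness. Recall that a state $\qstate$ is \emph{tightly} thermo-majorized by $\pstate$ precisely when all elbows of $\mathcal{L}_\qstate$ lie \emph{on} the curve $\mathcal{L}_\pstate$, not merely below it. So the whole proof reduces to: (i) computing the elbow points of $\qstate^{(j)} = \beta^{(\pi_j,\pi_{j+1})}\pstate$ and checking they sit on $\mathcal{L}_\pstate$; and (ii) showing that for a non-neighbouring swap $\beta^{(\pi_a,\pi_b)}$ with $b > a+1$ (and $E_{\pi_a} \neq E_{\pi_b}$), at least one elbow of the output falls strictly below $\mathcal{L}_\pstate$.

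For part (i), first I would set up notation: write $\pi = \border(\pstate)$, let $x_l = \sum_{k=1}^l \tbeta_{\pi_k}$ and $y_l = \sum_{k=1}^l p_{\pi_k}$ be the coordinates of the $l$-th elbow of $\mathcal{L}_\pstate$, so the segment of $\mathcal{L}_\pstate$ between the $(l-1)$-th and $l$-th elbow has slope $\slope(\pstate)_{\pi_l}$. The swap $\beta^{(\pi_j,\pi_{j+1})}$ acts on the two-dimensional block indexed by $\pi_j,\pi_{j+1}$ via Eq.~\eqref{eq:beta_jk} with $\Delta := \Delta_{\pi_j\pi_{j+1}} = \exp[\beta(E_{\pi_j}-E_{\pi_{j+1}})]$, sending $(p_{\pi_j},p_{\pi_{j+1}}) \mapsto ((1-\Delta)p_{\pi_j} + p_{\pi_{j+1}},\ \Delta p_{\pi_j})$; all other populations are untouched. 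The key computation is that this swap \emph{exchanges} the $j$-th and $(j{+}1)$-th elbows of the thermo-majorization curve: one checks directly that the new $j$-th partial sum equals $y_{j-1} + (1-\Delta)p_{\pi_j} + p_{\pi_{j+1}}$ sitting above the abscissa $x_{j-1} + \tbeta_{\pi_j} + \tbeta_{\pi_{j+1}} - \tbeta_{\pi_j}\cdot(\text{something})$ — more cleanly, since $\tbeta_{\pi_j}\Delta = \tbeta_{\pi_{j+1}}\cdot(\tbeta_{\pi_j}/\tbeta_{\pi_{j+1}})\cdot\Delta$ and $\Delta = \tbeta_{\pi_{j+1}}/\tbeta_{\pi_j}$, the output of the swap on the Gibbs-normalized block realizes exactly the transposition of the two consecutive elbow-segments. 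Hence the elbow set of $\qstate^{(j)}$ is $\{(x_l,y_l)\}_{l\neq j} \cup \{(\tilde x_j, \tilde y_j)\}$ where $(\tilde x_j,\tilde y_j)$ is the point one reaches by first taking the segment of slope $\slope(\pstate)_{\pi_{j+1}}$ and then the one of slope $\slope(\pstate)_{\pi_j}$ — which lies on the \emph{lower boundary} of the region under $\mathcal{L}_\pstate$ between elbows $j-1$ and $j+1$, i.e. on the chord, but crucially its reordered curve still has all elbows on $\mathcal{L}_\pstate$ because the two endpoints $x_{j-1},x_{j+1}$ and the intermediate value all lie on segments of $\mathcal{L}_\pstate$. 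Wait — more carefully: the new curve $\mathcal{L}_{\qstate^{(j)}}$ coincides with $\mathcal{L}_\pstate$ outside $[x_{j-1},x_{j+1}]$, and inside it is the swapped pair of segments, whose single interior elbow $(\tilde x_j,\tilde y_j)$ lies strictly below $\mathcal{L}_\pstate$ (the curve is concave, so swapping a steeper-then-shallower pair to shallower-then-steeper dips below). The tightness claim is then not "elbows on $\mathcal{L}_\pstate$" but the equivalent characterization from Theorem~\ref{thm:TO_cone}: $\qstate^{(j)}$ is an extremal point of $\clSto(\pstate)$ for the $\beta$-order obtained from $\pi$ by transposing positions $j,j+1$, because its curve saturates the thermo-majorization constraint for \emph{that} order. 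So the cleanest route is: show $\beta^{(\pi_j,\pi_{j+1})}\pstate$ has $\beta$-order equal to $\pi$ with $j,j+1$ swapped, and that its curve is the unique maximal curve with that order lying under $\mathcal{L}_\pstate$ — which is exactly the definition of tight thermo-majorization invoked after Theorem~\ref{thm:TO_cone}.

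For part (ii), take a swap $\beta^{(\pi_a,\pi_b)}$ with $a+1 < b$ and $E_{\pi_a} < E_{\pi_b}$ (strict). After the swap, the populations at positions $\pi_a,\pi_b$ change but those at the intermediate positions $\pi_{a+1},\dots,\pi_{b-1}$ do not. I would compute the new $\beta$-order of the output and the new partial sums, and exhibit one elbow — I expect the elbow at position $a$ (abscissa near $x_{a-1}+\tbeta_{\pi_a}$) or at position $b-1$ — that lies strictly below $\mathcal{L}_\pstate$. The mechanism: a non-neighbouring swap is equivalent (as a thermo-majorization-curve operation) to the composite of a neighbouring swap followed by a genuine dissipative rearrangement across the intermediate levels, which by concavity of $\mathcal{L}_\pstate$ necessarily pushes some elbow strictly inside. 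The degenerate case $E_{\pi_a}=E_{\pi_b}$ is the stated exception: then $\Delta_{\pi_a\pi_b}=1$ and the "swap" is a literal permutation of two equal-energy populations, which cannot move the curve at all, so the output is trivially (tightly) majorized.

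The main obstacle I anticipate is part (ii): making precise and rigorous the claim that a non-neighbouring swap strictly dips the curve below $\mathcal{L}_\pstate$ at some elbow, uniformly over all configurations of the intermediate populations $\{p_{\pi_{a+1}},\dots,p_{\pi_{b-1}}\}$ and all $\Delta_{\pi_a\pi_b} < 1$. The subtlety is that the new $\beta$-order of the output is not simply $\pi$ with $a,b$ transposed — the swapped populations may reinsert themselves anywhere in the order relative to the untouched intermediate ones — so a clean case-free argument requires identifying an invariant. The natural candidate is to show that the \emph{area} under $\mathcal{L}_{\qstate^{(a,b)}}$, or equivalently some $F_\alpha$-type monotone, is strictly decreased while a neighbouring swap leaves the relevant boundary saturated; tightness is characterized by saturating all thermo-majorization inequalities for the output's own $\beta$-order, and I would argue a non-neighbouring swap strictly violates at least one such saturation at an intermediate abscissa because it fails to "use up" the steepest available segments in order.
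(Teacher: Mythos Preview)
Your treatment of part (i) contains a genuine misconception. A neighbouring $\beta$-swap does \emph{not} merely permute the two adjacent segments of $\mathcal{L}_\pstate$; if it did, the interior elbow would indeed dip strictly below and $\qstate^{(j)}$ would \emph{fail} to be tightly majorized. What actually happens (from Eqs.~\eqref{eq:slope_after_swap1}--\eqref{eq:slope_after_swap2}) is that the higher-energy level among $\pi_j,\pi_{j+1}$ inherits the \emph{old} slope of the lower one, so the new $j$-th elbow of $\mathcal{L}_{\qstate^{(j)}}$ lands exactly on one of the two original segments of $\mathcal{L}_\pstate$ between elbows $j{-}1$ and $j{+}1$. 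Concretely, if $E_{\pi_j}<E_{\pi_{j+1}}$ then $\slope(\qstate)_{\pi_{j+1}}=\slope(\pstate)_{\pi_j}$ and the new $j$-th elbow is $(x_{j-1}+\tbeta_{\pi_{j+1}},\,y_{j-1}+\Delta_{\pi_j\pi_{j+1}}p_{\pi_j})$, which lies on the $\pi_j$-segment of $\mathcal{L}_\pstate$. The paper's proof is exactly this direct verification that $\mathcal{L}_{\qstate}(a_m)=\mathcal{L}_\pstate(a_m)$ at \emph{every} elbow abscissa $a_m$ of $\qstate$, including $m=j$. Your pivot to ``unique maximal curve with that order'' is an equivalent target, but establishing it still requires this same computation, so you have not bypassed the work.

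For part (ii), the monotone idea is a red herring: a strict decrease in area or in some $F_\alpha$ does not separate tight from non-tight majorization, since tight majorization already entails strict decrease of such quantities whenever the $\beta$-orders differ. The paper instead proceeds by a direct case analysis. One first observes (from Eq.~\eqref{eq:slope_after_nnswap}) that after $\beta^{(\pi_i,\pi_{i+c})}$ the level $\pi_{i+c}$ moves to position $i$ with the old slope $\slope(\pstate)_{\pi_i}$, so the $i$-th elbow of $\qstate$ still lies on $\mathcal{L}_\pstate$. The crux is then to show the $(i{+}1)$-th elbow lies \emph{strictly} below. This splits into three cases according to whether $a_{i+1}-a_{i-1}\le\tbeta_{\pi_i}$, or $a_{i+1}-a_{i-1}>\tbeta_{\pi_i}$ with $\border(\qstate)_{i+1}=\pi_{i+1}$, or $\border(\qstate)_{i+1}=\pi_i$; in each case the strict gap is produced by the strict concavity $\slope(\pstate)_{\pi_i}>\slope(\pstate)_{\pi_{i+1}}\ge\slope(\pstate)_{\pi_{i+c-1}}>\slope(\pstate)_{\pi_{i+c}}$ furnished by the intermediate level(s), which is precisely what distinguishes the non-neighbouring situation. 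Your worry about the new $\beta$-order being hard to pin down is well-founded, but the resolution is not an invariant---it is this explicit case split on which level occupies position $i{+}1$.
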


\begin{figure}[t!]
	\includegraphics[width=\columnwidth]{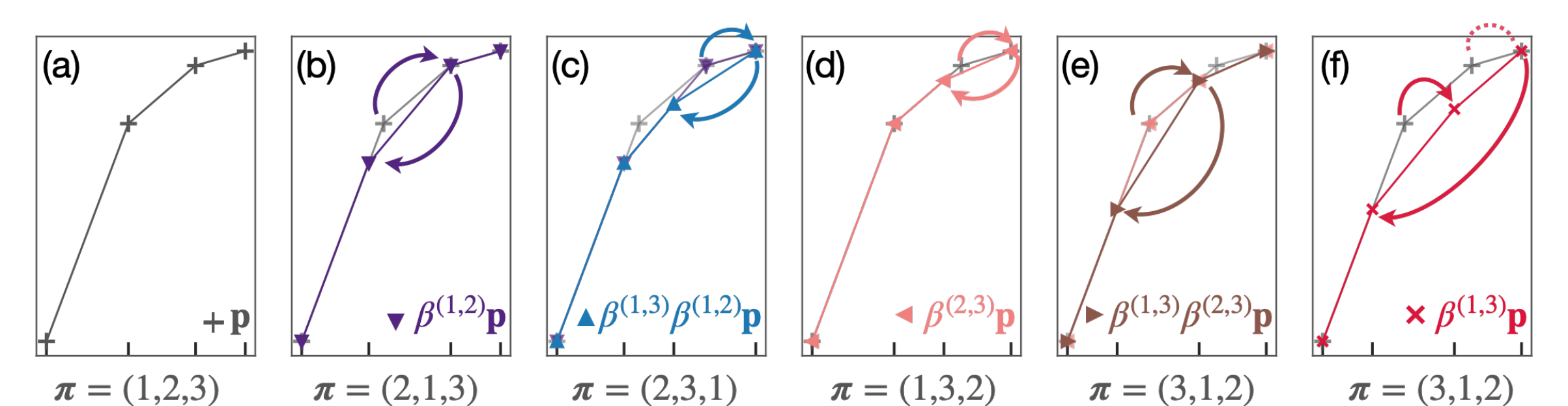}
	\caption{Thermo-majorization curves of a state undergoing $ \beta $-swaps. The initial state population is set to $\pstate = (0.75,0.2,0.05)$ and the Hamiltonian is set to $\beta\rm{H}_S = (0,0.2,0.5)$. %The $\beta$-ordering of each final state is written below $x$-axis with ticks marking the $x$-coordinates of elbows.
		Panel {(a)} shows a three-dimensional initial state $\pstate$. Panels (b) and (d) each shows the state after a single neighbouring swap, while {(f)} shows a single non-neighbouring $\beta$-swap. One observes that both (b) and (d) thermomajorise (f). Furthermore, panels (c) and (e) show $ \pstate $ after two consecutive neighbouring $\beta$-swaps. Observe that even (c) and (e) thermomajorise (f), since the non-neighbouring swap produced a much larger deviation from the initial thermomajorisation curve. 
		%	For {\bf (b)} and {\bf (d)} representing the states after a single swap with two energy levels consecutive in $\border(\pstate)$ are chosen, $\pstate$ always tightly thermomajorises them. If non-neighbouring swap is applied {\bf (f)} or two or more ETO were taken {\bf (c)}, {\bf (e)}, the resulting states are in general not tightly thermomajorised by the initial state.
	}
	\label{fig:thermo_curves}
\end{figure}

Lemma \ref{lemma:neighbouring_TOext} is proven in \ref{app:proof_lem3}, and furthermore, one can see examples of the lemma visualised in Fig. \ref{fig:thermo_curves} (b), (d) and (f). 
%This observation follows from direct calculations, which are illustrated in \ref{appendix:therm_curves_bswaps}. 
From Thm.~\ref{thm:TO_cone}, it follows that the neighbouring swapped states $\qstate^{(j)}$ are identified as extreme points of $\clSto(\pstate)$, and therefore thermomajorise all states $\rstate\in\clSto(\pstate)$ such that $\border(\qstate) = \border(\rstate)$. This leads to the following corollary:
%Extremal points of $\clSto(\pstate)$ are a crucial ingredient for the proof of Thm.~\ref{thm:ETO_cone}, combined with Lemma \ref{lemma:same_border}. 
%See supplemantary material of~\cite{Perry18_PRX_Crude} for a proof. 

\begin{corollary}\label{corollary:TO-ETO_ext}
If a state $\qstate \in \clSeto (\pstate)$ is tightly thermomajorised by $ p_0 $, then it is a unique extreme point of $\clSeto (\pstate)$ among states of the same $\beta$-order.
\end{corollary}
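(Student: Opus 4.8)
The plan is to reduce the statement to a direct consequence of Theorem~\ref{thm:TO_cone} together with the inclusion $\clSeto(\pstate) \subseteq \clSto(\pstate)$. First I would recall that any ETO sequence is itself a TO, so every $\qstate \in \clSeto(\pstate)$ also lies in $\clSto(\pstate)$; hence $\clSeto(\pstate)$ is a (not necessarily convex) subset of the convex polytope $\clSto(\pstate)$. By Theorem~\ref{thm:TO_cone}, $\clSto(\pstate)$ has exactly $d!$ extremal points, one per $\beta$-order, and the extremal point $\qstate^{(m)}$ associated with a $\beta$-order $\border^{(m)}$ is the \emph{unique} state in $\clSto(\pstate)$ of that $\beta$-order which is tightly thermo-majorized by $\pstate$; moreover $\qstate^{(m)}$ thermo-majorizes every state in $\clSto(\pstate)$ sharing the same $\beta$-order.

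Now suppose $\qstate \in \clSeto(\pstate)$ is tightly majorized by $\pstate$, and let $\border(\qstate) = \border^{(m)}$. By the characterization just recalled, $\qstate$ must coincide with the unique tightly-majorized state of that order in $\clSto(\pstate)$, i.e. $\qstate = \qstate^{(m)}$, an extremal point of $\clSto(\pstate)$. The key observation is that an extremal point of the larger set $\clSto(\pstate)$ that happens to lie in the subset $\clSeto(\pstate)$ is automatically not a proper convex combination of other points of $\clSeto(\pstate)$: if $\qstate = \sum_i \mu_i \rstate_i$ with $\rstate_i \in \clSeto(\pstate) \subseteq \clSto(\pstate)$ and $\mu_i > 0$ summing to one, then extremality of $\qstate$ in $\clSto(\pstate)$ forces $\rstate_i = \qstate$ for all $i$. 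Hence $\qstate$ is an extremal point of (the convex hull of) $\clSeto(\pstate)$.

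Finally, to get the ``among states of the same $\beta$-order'' uniqueness claim, I would argue that any other $\qstate' \in \clSeto(\pstate)$ with $\border(\qstate') = \border^{(m)}$ satisfies $\qstate \succ_\beta \qstate'$ (since $\qstate = \qstate^{(m)}$ thermo-majorizes all same-order states of $\clSto(\pstate)$), and any such $\qstate'$ is therefore a convex combination of $\qstate^{(m)} = \qstate$ and lower-lying extremal points of $\clSto(\pstate)$ of other orders — in particular it is not extremal unless it equals $\qstate$. So $\qstate$ is the \emph{unique} extremal point of $\clSeto(\pstate)$ in its $\beta$-order class. I expect the only mild subtlety — the ``main obstacle'' such as it is — to be phrasing the argument cleanly given that $\clSeto(\pstate)$ need not itself be convex: one must be careful to state extremality with respect to $\mathrm{conv}\,\clSeto(\pstate)$ (or equivalently as ``not a nontrivial convex mixture of elements of $\clSeto(\pstate)$''), and to invoke the fact from Theorem~\ref{thm:TO_cone} that tight majorization pins down the state uniquely within a fixed $\beta$-order, rather than re-deriving it.
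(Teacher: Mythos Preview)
Your extremality argument (steps 1--3) is correct and matches the paper's implicit reasoning: a tightly thermo-majorized state is an extremal point of $\clSto(\pstate)$ by Theorem~\ref{thm:TO_cone}, and since $\clSeto(\pstate)\subseteq\clSto(\pstate)$, any extremal point of the larger polytope that lies in the smaller set is automatically extremal there.

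The gap is in your uniqueness step. You argue that any other $\qstate'\in\clSeto(\pstate)$ with the same $\beta$-order is ``a convex combination of $\qstate$ and lower-lying extremal points of $\clSto(\pstate)$ of other orders --- in particular it is not extremal unless it equals $\qstate$.'' But this decomposition uses extremal points of $\clSto(\pstate)$, which in general are \emph{not} elements of $\clSeto(\pstate)$. Writing $\qstate'$ as a convex mixture of points outside $\clSeto(\pstate)$ says nothing about its extremality in $\mathrm{conv}\,\clSeto(\pstate)$. This is precisely the regime where the two sets differ, so the inference fails.

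The paper closes this gap via Lemma~\ref{lemma:same_border}: since $\qstate\succ_\beta\qstate'$ and $\border(\qstate)=\border(\qstate')$, the state $\qstate'$ is reachable from $\qstate$ by a sequence of partial level thermalizations. Each partial swap satisfies $\Mswap_\lambda^{(j,k)}=\lambda\,\beta^{(j,k)}+(1-\lambda)\,\mathbb{1}$, so whenever $\qstate'\neq\qstate$ some step has $\lambda\in(0,1)$ and $\qstate'$ becomes a nontrivial convex combination of states that \emph{do} lie in $\clSeto(\pstate)$ (namely images of $\qstate$ under $\beta$-swap sequences). That is what yields non-extremality of $\qstate'$ in $\clSeto(\pstate)$ and hence the uniqueness claim. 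Replace your last paragraph with this argument and the proof goes through.
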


Therefore, all states $\qstate^{(j)}$ obtained from a single neighbouring swap are also unique extreme points of $\clSeto(\pstate)$ having order $\border(\qstate^{(j)})$. Numerical examples such as Fig.~\ref{fig:thermo_curves} (c) show that when multiple neighbouring swaps are applied, the final state is no longer necessarily extremal. Nevertheless, most of the extreme points of $\clSeto(\pstate)$ are of the form $\qstate^{(\vec{j})}$~\footnote{This is a state after $l$ neighbouring swaps. We use the notation $\qstate^{(\vec{j},j_{l+1})} = \beta^{(\border(\qstate^{(\vec{j})})_{j_{l+1}}, \border(\qstate^{(\vec{j})})_{j_{l+1}+1})}\qstate^{(\vec{j})}$, where $\vec{j} = (j_1,\cdots,j_{l})$}, or contain only a small number of non-neighbouring swaps in their construction. 

Fig.~\ref{fig:ETO_freeEs} exemplifies this preference towards neighbouring swaps -- there, the free energy of the state is still higher after three neighbouring swaps, compared to a single non-neighbouring swap. 
In (a), when a neighbouring $\beta$-swap (blue arrow) is applied, the state moves to the adjacent cell of $\beta$-order, whereas the non-neighbouring swap (red arrow) transfers a state through multiple boundaries of different orders (black dashed lines), resulting in a state less resourceful than $\qstate^{(2,1,2)} = \beta^{(1,2)}\beta^{(2,3)}\beta^{(1,3)}\pstate_1$, that experienced three neighbouring swaps. 
Fast exhaustion of the athermality also occurs in different initial states, rendering non-neighbouring swaps typically suboptimal.   

\begin{figure}[t!]
	\includegraphics[width=0.92\columnwidth]{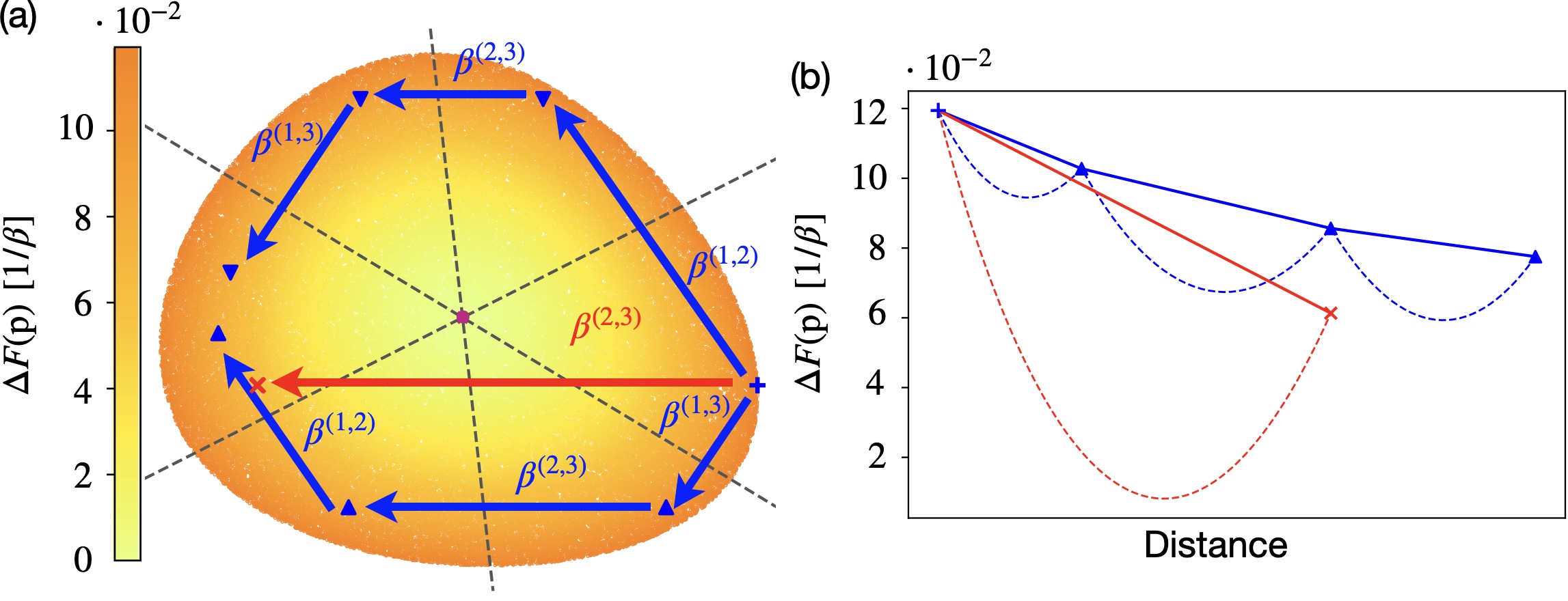}
	\caption{Initial $ d=3 $ state $\pstate_1 = (0.35,0.55,0.1)$ (blue plus marker) undergoing ETO, where its Hamiltonian is fixed as $\beta\mathrm{H}_S = (0,0.2,0.5)$, and all energies are scaled in a unit of $1/\beta$.
		Panel (a): ETO operations in barycentric coordinate, where only relevant states $ \qstate $ with $F (\qstate,\tbeta) \leq F (\pstate_1,\tbeta)$ are marked in colour.
		% representing the deviation from the equilibrium value $-(\log Z)/\beta$ calculated for thermal state $\tbeta$ (red dot).. 
		%		The blue plus symbol indicates the initial state $\pstate = (0.35,0.55,0.1)$ which has a $\beta$-ordering of $ (2,1,3)$. 
		Triangles label the extreme points of $\clSeto(\pstate_1)$, achieved by $ \beta $-swaps indicated by blue arrows. The red X marker is the state $\beta^{(2,3)}\pstate_1$.
		%, which is not extremal in this case. 
		Different $\beta$-order cells are separated by black dashed lines connecting pure states and the thermal state. 
		Panel (b): the free energy differences from the equilibrium state $\tbeta$, in two different paths corresponding to (a). %The same labels are used as in { (a)}: normal triangles are from steps of $\beta$-swap series $\beta^{(1,3)}\beta^{(2,3)}\beta^{(1,2)}\pstate_1$ and the X point is from a single swap $\beta^{(1,2)}\pstate_1$. 
		Dashed lines show the continuous values of $ \Delta F $ from points on arrow paths in (a). Straight lines connect the discrete values obtained from endpoints denoted with triangle and X symbols. 
		$ x $-axis is the total length of the path taken from the initial state as plotted in (a), e.g. $ x $-coordinate of the second triangle is the summation of the first and the second blue arrow lengths starting from the plus symbol.   
	} 
	\label{fig:ETO_freeEs}
\end{figure}

Moreover, it is useful to note that Fig.~\ref{fig:ETO_freeEs} displays the real time evolution of the state during ETO transitions. By construction, only two levels of the state undergo change in time, i.e. all the other populations are fixed during that period, imposing the system to follow the straight lines in barycentric representations as in (a). In (b), dashed lines correspond to the free energies during the real time evolution when the intensity-dependent Jaynes-Cummings model is assumed (See \ref{appendix:dynamics} for the real time reduced state dynamics). Suppose we apply $ \beta^{(j,k)} $ to a state $ \pstate $ with $\slope(\pstate)_j>\slope(\pstate)_k$. According to Eqs.~\eqref{eq:conti_evolution1} and~\eqref{eq:conti_evolution2}, the state evolves as $ \qstate(\lambda(t)) $, where $ \qstate(\lambda = 1) = \beta^{(j,k)}\pstate $ with $ \slope(\qstate)_{k}>\slope(\qstate)_{j} $. 
Since the evolution is continuous, there exists a state $ \qstate(\lambda^{*}) $, such that $ \slope(\qstate)_{j} = \slope(\qstate)_{k} $. This point is given by $\lambda^* = 1/(1+\Delta_{jk})$ ($\lambda^* = 1/(1+\Delta_{kj})$) when $k\geq j$ ($k\leq j$). The state $ \qstate(\lambda^{*}) $ is the closest to being thermal, and achieves minimal generalised free energies -- minima of dashed lines in (b) -- among states $ \qstate(\lambda(t)) $.
In other words, during a single $\beta$-swap, free energy decreases until the minimum point $\qstate_{\lambda^*}$ is reached, and then increases until the end of the $ \beta $-swap. On the other hand, if only the endpoints values are considered (solid lines of (b)), free energy cannot increase after each ETO step. These intermediate increases within a single swap evince the non-Markovian effect of thermal reservoirs at each step, differentiating ETO with strictly Markovian thermal processes~\cite{Lostaglio22_MTP1, Korzekwa22_MTP2}. See Sec.~\ref{sec:discussion} for more discussion.

\section{Qubit catalysts for qutrit systems}\label{sec:CETO}

In this section we demonstrate the following:
\begin{enumerate}[label=(\roman*)]
	\item Existence of the catalytic advantage in the ETO framework. In particular, we report the discovery of simple qubit catalysts that are sufficient to produce a non-trivial advantage for state transitions in a qutrit system.
	\item A time-step resolved tracking of the dynamical path for system and catalyst, that reveals snapshots of the inner workings during the catalytic process.
\end{enumerate}
These results could be obtained only by a more systematic understanding of the set of reachable states via ETO, which are detailed in Sec.~\ref{sec:characterising_Seto}, where we establish various simplifications that enhanced the computability of $ \clSeto $. 

Throughout the rest of the manuscript, we use the vectors $\pstate, \qstate, \rstate$ to denote system states and $\catstate$ is reserved for catalyst states.

\subsection{Catalytic advantages}
For a qutrit system and qubit catalyst, the composite state $\pstate\otimes\catstate$ lives in a six-dimensional probability space $ \probspace^6 $. Our goal is to construct a set obtainable by catalytic elementary thermal operations with a qubit catalyst (CETO$_2$)
 \begin{equation}\label{key}
	\clScetoqb(\pstate) = \lbrace \qstate| \exists \catstate\in\probspace^2~{\rm s.t.}~\pstate\otimes\catstate\xrightarrow{\rm ETO}~\qstate\otimes\catstate  \rbrace.
\end{equation}
By definition, $\clSeto(\pstate)\subset \clScetoqb(\pstate) $, but qubit catalytic advantage exists iff. $\clScetoqb(\pstate)\not\subset\clSeto(\pstate)$. 
For a number of limited cases, %the existence and the absence of newly reachable states going beyond $\clSto(\pstate)$ 
given a fixed catalyst state $\catstate = (c_1,1-c_1)$, some parts of the set
\begin{equation}\label{key}
	\clSceto(\pstate;\catstate) = \lbrace \qstate| \pstate\otimes\catstate\xrightarrow{\rm ETO}~\qstate\otimes\catstate  \rbrace
\end{equation}
can be evaluated analytically (\ref{appendix:CETO}). 
Nonetheless, $ \clSceto(\pstate;\catstate) $
is in general constructed by numerically evaluating the extreme points of $\clSeto(\pstate\otimes\catstate)$ and imposing exact catalyst recovery conditions. The set $\clScetoqb(\pstate)$ is then given by iterating the process for different values of $c_1$. In this first work, we focus on the case where $ S $ remains uncorrelated to $ C $ to affirm that catalytic advantages exist even in the most conservative catalytic setting. We expect the catalytic power to further increase when system-catalyst correlation persists as has been shown in different resource theories~\cite{Muller18_corr, Boes19_catconjecture, Ding21_corr, Shiraishi21_PRL, Wilming21_catconjecture,Rubboli22_CorrLim,  Yadin22_Gaussian}, but leave this case for future work.

\begin{figure}[t!]
	\centering
	\includegraphics[width=0.9\columnwidth]{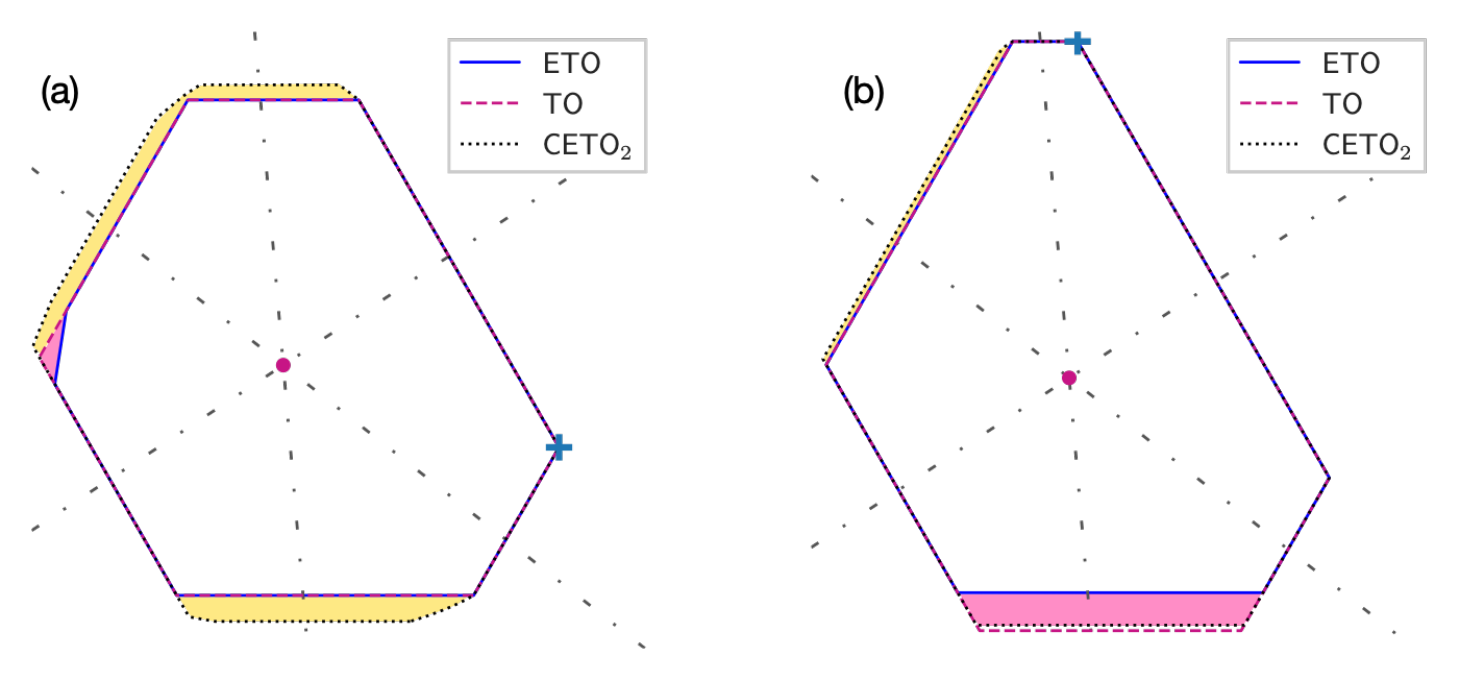}
	\caption{Barycentric representation of $\clSeto$ (blue solid lines), $\clSto$ (violet dashed lines), and $\clScetoqb$ (black dotted lines), for the initial states (a) $ \pstate_1 $ in Fig.~\ref{fig:ETO_freeEs} and (b) $\pstate_2 = (0.7,0.2,0.1)$, such that $\border(\pstate_2) = (1,2,3)$. The initial state, thermal state $ \tbeta $, and $ \beta $-order regimes are denoted similarly as to Fig.~\ref{fig:ETO_freeEs}. Pink shaded areas mark catalytic advantage within $\clSto$, while yellow shaded areas are states that can be reached via ETO+qubit catalyst, but not by TO without catalysis.
	}
	\label{fig:ext_points_213}
\end{figure}

%\begin{figure}[t!]
%	\centering
%	\includegraphics[width=0.9\columnwidth]{Fig4.pdf}
%	\caption{Barycentric representation of $\clSeto$ (blue solid lines), $\clSto$ (violet dashed lines), and $\clScetoqb$ (black dotted lines), for the same initial state $ \pstate_1 $ in Figure \ref{fig:ETO_freeEs}. The initial state $ \pstate_1 $, thermal state $ \tbeta $, and $ \beta -$order regimes are denoted similarly as to Figure \ref{fig:ETO_freeEs}.
%		%The thermal state $\tbeta$ is labelled with a red dot. 
%%	Dashdotted lines again divide the plane into six different $\beta$-orders. 
%Pink shaded areas mark catalytic advantage within $\clSto$, while yellow shaded areas are states that can be reached via ETO+qubit catalyst, but not by TO without catalysis. %The initial state $\pstate_1$ and Hailtonian $\beta\mathbf{H}_S$ are the same as in Fig.~\ref{fig:ETO_freeEs}. 
%}
%	\label{fig:ext_points_213}
%\end{figure}

In Fig.~\ref{fig:ext_points_213}, we present two $\clScetoqb$ sets corresponding to two different initial states that have distinct $\beta$-orders. The sets are displayed in comparison with the non-catalytic set of reachable states $\clSeto$ and $\clSto$. % $\border(\pstate)$, and their comparison with $\clSeto(\pstate_{1(2)})$ and $\clSto(\pstate_{1(2)})$. These figures show the existence of qubit catalytic advantage. 
%In both of the plots, some segments of the black dotted lines lie outside of blue solid lines with non-negligible gap, displaying the existence of catalytic processes. 
In particular, notice that in Fig.~\ref{fig:ext_points_213} (a), where the initial state is set to have $\border = (2,1,3)$, we observe that 
%\begin{equation}\label{key}
$ \clSto\subsetneq\clScetoqb $.
%\end{equation}
%i.e. $ \clSto $ is a strict subset of $ \clScetoqb $. 
This feature persists for a number of randomly chosen initial states with $\border = (2,1,3)$ and $(3,1,2)$.
When this happens, we can on one hand reproduce every TO transition using ETO with a single qubit catalyst; and on the other hand, combat some of the finite-size effects and enable a larger set of transitions than previously allowed by an arbitrary TO. 
Likewise, consider Fig.~\ref{fig:ext_points_213} (b), for an initial state $\border = (1,2,3)$. Here, we observe that $\clScetoqb$ overlaps almost entirely with $\clSto$, but neither is fully contained by the other. This qualitative characteristic is again present for different initial states with the same $ \beta $-ordering.

%\begin{figure}[t!]
%	\centering
%	\includegraphics[width=0.9\columnwidth]{Fig5.pdf}
%	\caption{Comparison of $\clSeto$, $\clSto$, and $\clScetoqb$ with an initial state $ \pstate_2 $, such that $\border(\pstate_2) = (1,2,3)$. $\pstate_2 = (0.7,0.2,0.1)$ and $\beta\mathbf{H}_S = (0,0.2,0.5)$ are chosen to match Fig. 5 of~\cite{Lostaglio_18_ETO}. 
%	}
%	\label{fig:ext_points_123}
%\end{figure}

In Fig.~\ref{fig:ext_points_213}, the set of states that go beyond $\clSto$ (yellow) highlights the additional advantage brought forward by CETO$_2$. In addition, the set of states between $\clSto$ and $\clSeto$ (pink) also has operational merits -- %since we restricted our setup to a minimal model with only six levels to control through two-level swaps, %CETO might be more implementable than TO. That is, 
there exist states which require genuine multi-level TO to achieve, but can be obtained by an alternative pathway that involves only basic, JC-like interactions when a catalyst is present.
A few natural questions emerge: for one, it would be interesting to see how the gap between CETO$_n$ and ETO changes with $n$ being the dimension of the allowed catalyst. In Sec.~\ref{sec:higherd_cat}, some hints for this question is given. A second question would be how would CETO$_n$ fare when compared to CTO, i.e. the \emph{catalytic} version of thermal operations. The question of a generally fixed $n$ is difficult to answer, however, in a follow-up work, we proved that the set of energy-incoherent state transitions reachable by CETO converges to that of CTO, for the special case when $n$ is allowed to be arbitrarily large. In other words, when any catalyst is allowed, catalytic elementary thermal operations are as powerful as catalytic thermal operations for incoherent inital states~\cite{Son23_hierarchy}.

\subsection{Tracking catalytic processes}\label{subsec:tracking}

\begin{figure}[t!]
	\includegraphics[width=\columnwidth]{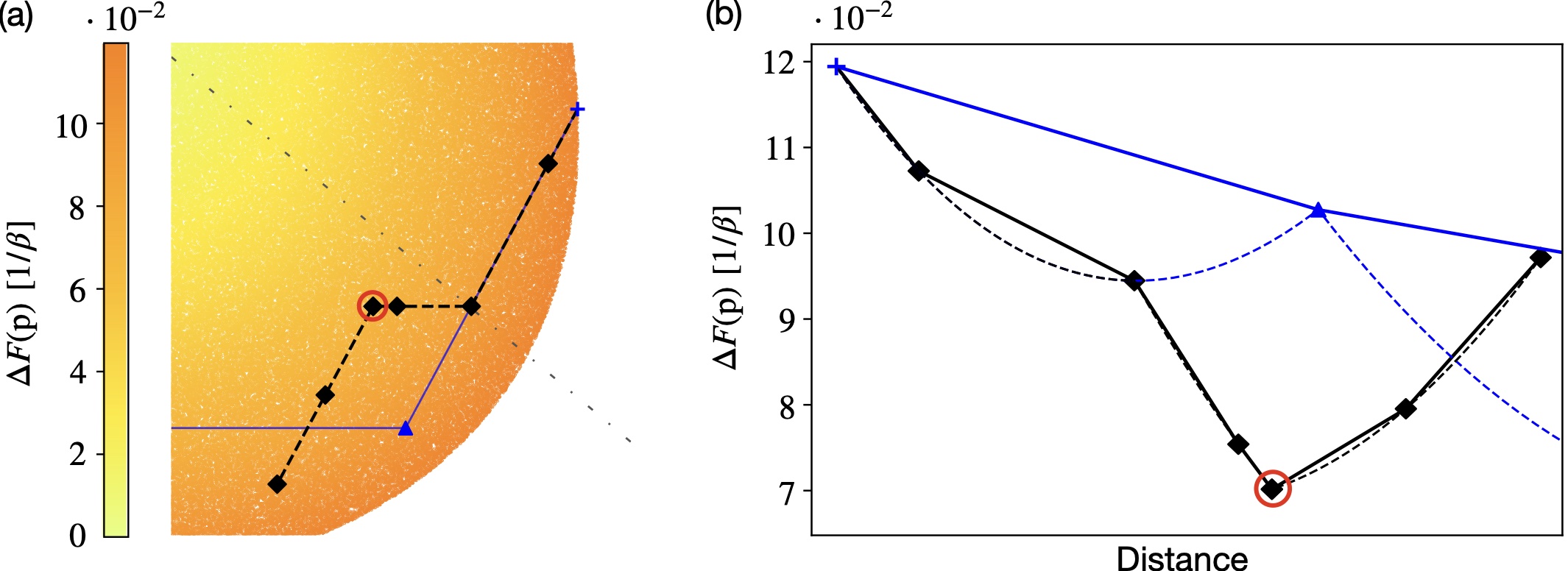}
	\caption{(a): Illustration of system reduced state evolutions in a CETO process (diamond markers and dashed lines) starting from an intial state $\pstate_1$ with final state lying beyond $\clSeto(\pstate_1)$. This state is chosen to be an extreme point of $\clScetoqb(\pstate_1)$. The rest of the plot is the zoomed-in view of Fig.~\ref{fig:ETO_freeEs} (a) around $\beta$-orders $(2,1,3)$ and $(2,3,1)$. 
	(b): Free energy differences from the equilibrium state $\tbeta$. Solid lines connect the values after each swap and dahsed lines mark continuous change between them.
	The fifth swap applied during the catalytic evolution does not alter the system reduced state, making the point before and after that swap not distinguishable in the system reduced picture of (a) and (b). These points are marked with red circles in both (a) and (b). Initial state $\pstate_1$ and its Hamiltonian $\beta\mathrm{H}_S$ are the same to those of Fig.~\ref{fig:ETO_freeEs}. 
%	In this particular example, $ \Delta F $ after CETO is lower than that of the closest ETO final state. Yet, there exist CETO final states that have higher free energy than their ETO counterparts.
}
	\label{fig:CETO_freeE}
\end{figure}

\begin{figure}[h!]
	\includegraphics[width=\columnwidth]{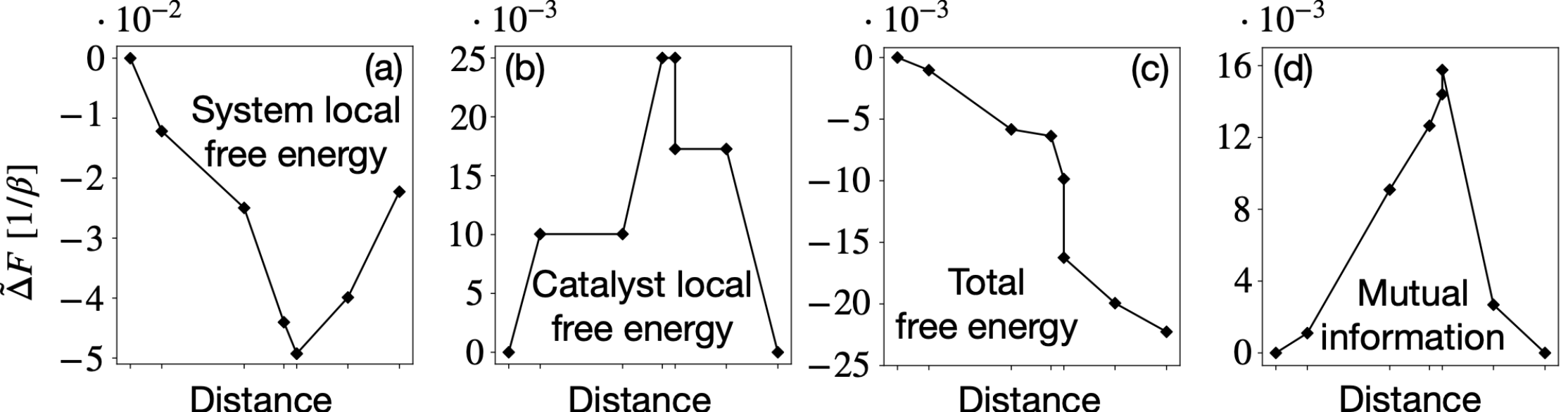}
	\caption{Nonequilibrium free energies of CETO evolution described in Fig.~\ref{fig:CETO_freeE}. The symbol $\tilde{\Delta}$ indicates the difference from the values of the initial state. (a): system local free energy identical to black lines in (b) of Fig.~\ref{fig:CETO_freeE}; (b): catalyst local free energy; (c): total free energy; (d): mutual information which, in this case, is identical to the difference between the total free energy and the sum of two local free energies. 
	$ x $-coordinates of points are identical to ones in Fig.~\ref{fig:CETO_freeE} (b).
	}
	\label{fig:local_freeEs}
\end{figure}

Here we explicitly construct a simple series of ETO swaps that leads to the transformation of an initial state $\pstate$ into an extreme point of $\clScetoqb(\pstate)$, and track changes in the system free energy throughout the process. 

In general, a state $\qstate$ being an extreme point of $\clSceto(\pstate)$ does not guarantee $\qstate\otimes\catstate$ to be an extreme point of the enlarged $\clSeto(\pstate\otimes\catstate)$ in system-catalyst composite space. Thus, (non-trivially) different paths of catalytic processes exist. 
In Fig.~\ref{fig:CETO_freeE}, we choose the shortest swap series, which comprises seven ETO steps, to realise the CETO transition from an initial state $\pstate_1$ to a new extreme state of $\clScetoqb(\pstate_1)$. Note that such a choice may not always be unique; see \ref{appendix:CETO} for details. Since adopting a qubit catalyst doubles the total dimension, and recovering the catalyst also requires additional swaps, more two-level swaps are needed to achieve a similar final state (as compared to a non-catalytic ETO).
%, with each swap altering the system reduced state less severely. 
For the first four steps, the nonequilibrium free energy of the system, $ F_S $ decreases; see Fig.~\ref{fig:CETO_freeE} (b). 
%In particular, as can be seen in {\bf (a)}, CETO process deviates from ETO extremal transition after two steps and approach closer to the thermal state. 
After two consecutive ETO steps on the joint system, the catalytic path seems identical to a non-catalytic ETO on the system. Nevertheless, on the global picture, correlations are already starting to build up with the catalyst, as shown in Fig.~\ref{fig:local_freeEs}.
In the third and fourth steps, $ F_S $ is further reduced until it has almost a half of the original nonequilibrium contribution. Recalling that free energies are the monotones of TO, this rapid depletion of the athermality would have been irreversible, if we have access only to the system. Yet, during catalysis, some of the free energy is transferred to either 1) the catalyst populations or 2) correlations between system and catalyst. This build-up is critical for the sixth and final steps of the process: stored free energy is what enables the system to again increase in local nonequilibrium free energy, therefore achieving a final state $ \qstate$ outside $\clSeto(\pstate_1)$ (and in this case, $ \qstate $ is even outside $\clSto(\pstate_1)$). See \ref{appendix:CETO} for a more detailed analysis. Similar behaviours are observed for other monotones, such as generalised free energies with $\alpha\neq 1$.  
Note that the most conservative form of catalysis is assumed: the catalyst is returned exactly and without correlations with the system. Hence, catalyst local free energy goes back to its original level, and mutual information also returns to zero, as shown in (b) and (d) of Fig.~\ref{fig:local_freeEs}. Their role is restricted to temporary free energy storages. Furthermore, the total free energy always decreases after each swap, which is expected, since the process is an ETO on the joint system.

\section{Higher-dimensional catalysts for transitions between thermal states}\label{sec:higherd_cat}

In general, identifying the whole set of reachable states $ \clScetoqd{d}(\pstate) $ with $ d $-dimensional catalysts is highly challenging, even when $ \pstate $ is three-dimensional and $ d = 3 $. However, by developing a theoretical tool, namely Thm.~\ref{thm:nice_order}, we show that the numerical cost is dramatically reduced for initial states $ \pstate $ whose $ \beta $-orders are monotonic in energy levels. A particularly important class of states that satisfy this property is the set of thermal states with different temperatures. When $ \beta_{h}<\beta $, i.e. the state $ \tau^{\beta_{h}} $ is hotter than the environment with temperature $ \beta^{-1} $, the $ \beta $-order $ \border(\tau^{\beta_{h}}) = (d,d-1,\cdots,1) $. Similarly, colder states with $ \beta_{c}>\beta $ have the order $ \border(\tau^{\beta_{c}}) = (1,2,\cdots,d) $. If we further employ a catalyst $ \catstate $ from the set of states which are sufficiently thermal, the monotonicity of $ \beta $-order would be preserved, i.e. $ \border(\pstate\otimes\catstate) $ is also monotonic in the total energy. In such cases, the analysis of higher-dimensional catalysts becomes computationally tractable. We will refer to such catalyst states as \emph{minimally-disturbing catalysts}, in the sense that they do not disturb the $\beta$-ordering of the system-catalyst composite.

To demonstrate the benefit of this reduction, let us consider the cooling process via (C)ETO starting from a high temperature thermal state $ \tau^{\beta_{h}} $ with $ \beta_{h}<\beta $ to a colder thermal state $ \tau^{\beta_{c}} $. For any temperature $ \beta_{h}^{-1} $, it is always possible to reach the ambient temperature $ \beta^{-1} $ by a full thermalization with the environment. However, with ETO and TO, colder temperatures $ \beta_{c}>\beta $ can be achieved, which has been studied in \cite{Scharlau2018quantumhornslemma} for the non-catalytic case of a qubit. 
One of our main theses is to showcase the effectiveness of small catalysts with practicable procedures. To corroborate this claim and investigate the scaling of catalytic advantage with respect to catalyst size, we apply Thm.~\ref{thm:nice_order} to find the limits of the cooling performance for a qutrit when using catalysts of varying dimensions, ranging from two to thirty. The catalyst Hamiltonian is trivial for simplicity. 

\begin{figure}[t]
	\centering
	\includegraphics[width = 0.7\columnwidth]{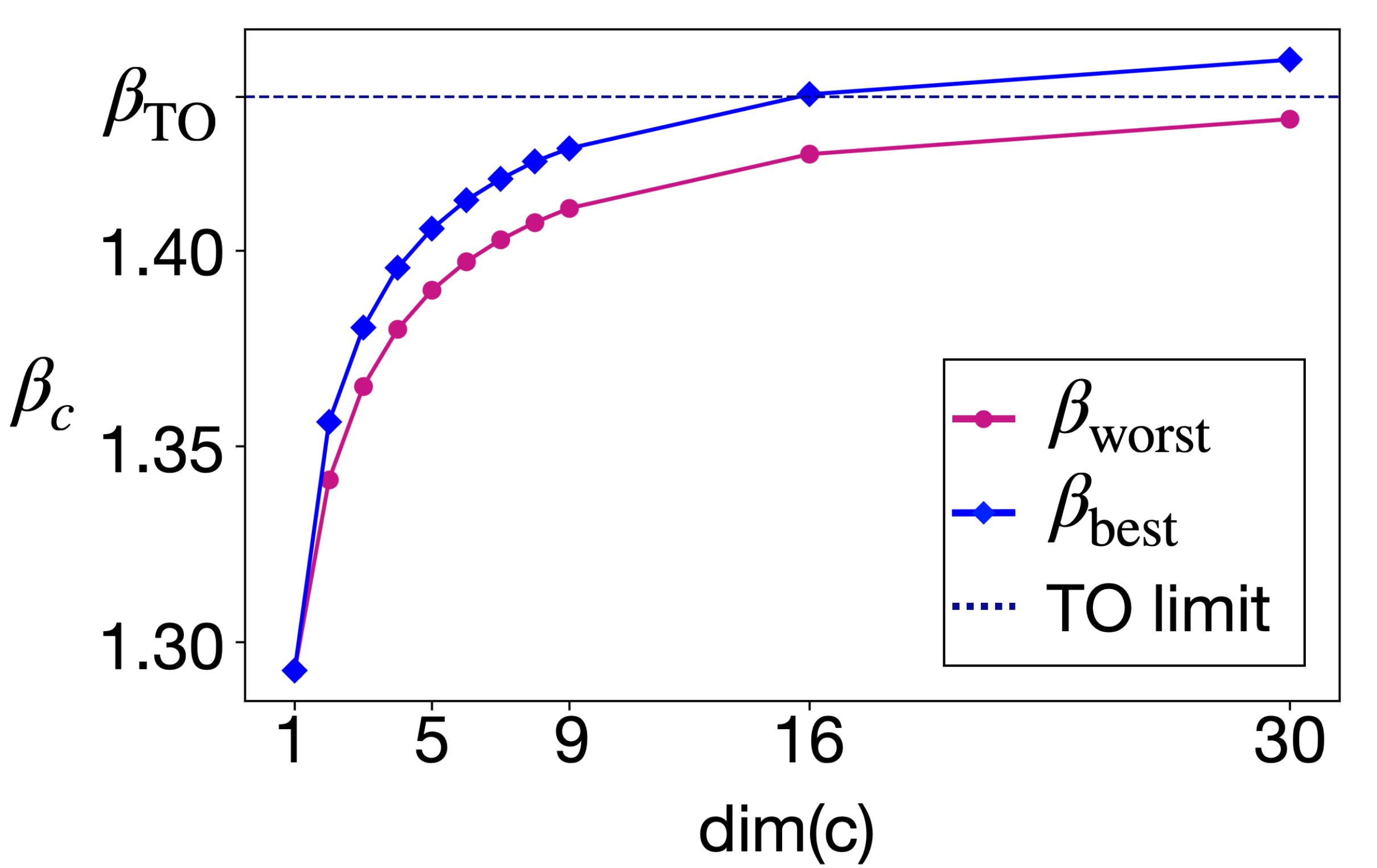}
	\caption{The cooling performance of CETO from a thermal state to another thermal state, quantified by final inverse temperature $ \beta_{c} $ attainable when the catalyst state is fixed. We set the initial system inverse temperature to be $ \beta_{h} = 0.5 $, and the ambient temperature to be $ \beta = 1 $. The system of interest is three-dimensional with energy levels $ (0,0.4,0.5) $ and catalysts of dimension $ {\rm dim}(\catstate) $ from two to nine, sixteen, and thirty are used. $ {\rm dim}(\catstate) = 1 $ case corresponds to non-catalytic ETO, and the blue dashed line above marks the inverse temperature $ \beta_{\rm TO} $ that non-catalytic TO can achieve. We searched over catalyst state distributions among minimally-disturbing catalysts. The results from the worst performing catalysts in each dimension are marked with purple circles, while blue diamonds are from the best catalysts in each dimension.}
	\label{fig:thermal_to_thermal}
\end{figure}

Fig.~\ref{fig:thermal_to_thermal} shows the coldest achievable $ \beta_{c} $ from minimally-disturbing catalysts, where the worst and the best cases are marked with purple circles and blue diamonds, respectively.
Even with qubit catalysts, almost half of the gap between the TO limit (dashed line) and the ETO limit ($ {\rm dim}(\catstate) = 1 $) is covered. The maximal catalytic advantage (blue solid line) gradually increases with the catalyst size, and at $ {\rm dim}(\catstate) = 16 $, best catalysts among the sample surpass TO limit, whilst at $ {\rm dim}(\catstate) = 30 $, most of the samples perform better than TO. Note that we have limited the range of catalyst distribution to fix the initial composite state $ \beta $-order; hence there might exist (not minimally-disturbing) catalyst states that activate a better cooling process than the ones marked in Fig.~\ref{fig:thermal_to_thermal}. Also, even the worst case catalysts provide some advantage from the same reason. Usually, if the catalyst is almost pure or pure, catalytic advantage vanishes. Nevertheless, our results give an efficiently computable lower bound to the achievable amount of cooling when any $ d $-dimensional catalyst is allowed.

To estimate the true minimum and maximum cooling performance achievable by the given set of catalysts, we employed multiple strategies, including uniform random sampling and gradient-descent-like search. Interestingly, regardless of the catalyst dimension, the worst catalysts are given by maximally mixed states which are the Gibbs states for the trivial Hamiltonian. Using Gibbs states as a catalyst can provide no advantage in the framework of TO, since they form the free states. However, for a set of operations with innate Markovianity, such as ETO, Gibbs states can activate state transitions as a catalyst by providing additional non-Markovianity. This is reminiscent of the discussion in Sec.~\ref{subsec:tracking}, where the catalyst functions as a temporary storage during the evolution. See~\cite{Czartowski23_GibbsCat} for a similar setup and~\cite{Son23_hierarchy} for the ultimate power of Gibbs state catalysts. The best catalyst distributions we obtained are non-trivial and would be of interest for future study. 
%One shared feature of those best catalysts at each dimension is that 1) the ratio between the largest and the smallest populations of a catalyst state is the largest ratio permittable for minimally-disturbing catalysts and 2) the majority of the populations have either the minimum or the maximum values. Yet, optimal distributions have some levels whose populations are in between, without any noticeable pattern recognisable. \jswip{Skip this part.}

Overall, the results in this section display that small catalysts do provide substantial advantage in the setting of ETO, where simple two-level swaps are sufficient to execute the procedure. Also, even when the catalysts are not fine-tuned, when they are in a certain regime, catalysis still enhances ETO considerably. In particular, Gibbs states are assumed to be easily preparable, and thus can be good candidate states for a more realistic catalytic protocol.

\section{Characterizing $\clSeto$}\label{sec:characterising_Seto}

Currently, the only known way to determine whether $\pstate \xrightarrow{\rm ETO} \qstate$, is to construct the full set $\clSeto(\pstate)$ by finding all the extreme points of this convex polytope, and check if $\qstate\in\clSeto(\pstate)$. Lostaglio et al.~\cite{Lostaglio_18_ETO} provided a systematic way of finding all extreme states of $\clSeto(\pstate)$ for an arbitrary dimension $ d$, which involves an exhaustive search among all possible $\beta$-swap sequences with a bounded length $\ell_{\rm max}$. In other words, this procedure identifies an upper bound on the number of extreme points 
\begin{equation}\label{eq:number_of_ext}
	\# {\rm ~extremal~points}~ \leq {d \choose 2}^{\ell_{\rm max}}.
\end{equation}
%whose maximum length is upper bounded. 
In~\cite{Lostaglio_18_ETO}, $ \ell_{\rm max}\leq d! $ is shown, which means that Eq.~\eqref{eq:number_of_ext} grows super-exponentially with the dimension of the system.
This presents a serious roadblock to both understanding and determining the possibility of state transitions via ETO, a reason why ETO, despite its strong physical motivation, has not been extensively studied.

We improved this result in the following ways:
\begin{enumerate}
\item For the simplest non-trivial case of $d=3$, we provide a full characterisation of $\clSeto(\pstate)$, showing in particular that $\clSeto$ has at most 8 extreme points, in contrast to the upper bound in~\cite{Lostaglio_18_ETO} of $3^6 = 729$. %in terms of required $\beta$-swaps to generate extreme states, negating the need of an exhaustive search. 
\item For $d>3$, 
\begin{enumerate}
	\item we tighten the upper bound on $\ell_{\rm{max}}$ analytically by the factor of $d-3$ (see Thm.~\ref{thm:ETO_cone}). 
	\item for a subset of initial states where the $ \beta $-ordering is monotonic in energy, we derive a tight upper bound for the number of extreme points, and fully identify the corresponding $ \beta $-swaps (Thm.~\ref{thm:nice_order}). 
\end{enumerate}
\item We use computational algorithms to obtain heuristics by random sampling of initial states. We present a comparison of these results with theoretical analysis in Table \ref{table:len_upper_bound}.
\end{enumerate} 

\subsection{An exact characterisation for $ d=3 $}\label{sec:3d_ETO}
A qutrit system is the simplest setup where $\clSeto$ deviates from $\clSto$. Furthermore, the no-go results established for qutrits, which rule out certain swap series from generating extreme points of $\clSeto$, hold true for any three levels of higher-dimensional scenarios. Here, we provide a full characterisation of $\clSeto(\pstate)$ for any qutrit state $\pstate$ by defining two simple sets.

\begin{enumerate}
		\item The set containing the initial $ \pstate $ after undergoing not more than 2 non-identical neighbouring swaps:
		\begin{align}\label{eq:Theta}
			\Theta_\mathrm{ETO}^{(3)}(\pstate) = \{\pstate, \beta^{( \pi_1, \pi_2)}\pstate, \beta^{( \pi_2, \pi_3)}\pstate,
			\beta^{( \pi_1, \pi_3)}\beta^{( \pi_1, \pi_2)}\pstate,  
			\beta^{( \pi_1, \pi_3)}\beta^{( \pi_2, \pi_3)}\pstate\}.
		\end{align}
		
		\item The set of states that undergo three neighbouring swaps or a non-neighbouring swap:
		\begin{align}\label{eq:Xi}
			\Xi_\mathrm{ETO}^{(3)}(\pstate) = \{\beta^{( \pi_1, \pi_3)}\pstate, \beta^{( \pi_2, \pi_3)}\beta^{( \pi_1, \pi_3)}\beta^{( \pi_1, \pi_2)}\pstate,
			\beta^{( \pi_1, \pi_2)}\beta^{( \pi_1, \pi_3)}\beta^{( \pi_2, \pi_3)}\pstate\}.
		\end{align}
\end{enumerate}

\begin{theorem}\label{thm:3dETO}
		For any $\pstate\in\probspace^3$ with $\border(\pstate)_k = \pi_k$, let $ \Theta_\mathrm{ETO}^{(3)} $ and $ \Xi_\mathrm{ETO}^{(3)} $ be defined as in Eqs.~\eqref{eq:Theta} and~\eqref{eq:Xi}, where we have dropped the explicit $ \pstate $ for notational convenience. 
		Then,  
		\begin{equation}\label{key}
			\Theta_\mathrm{ETO}^{(3)} ~\subset~ {\rm Ext}(\clSeto) ~\subset~ \Theta_\mathrm{ETO}^{(3)} \cup \Xi_\mathrm{ETO}^{(3)}.
		\end{equation} 
		% $\clSeto(\pstate)$ are contained in $\Theta_\mathrm{ETO}^{(3)}(\pstate)\bigcup\Xi_\mathrm{ETO}^{(3)}(\pstate)$ with at least one of them is an element of $\Xi_\mathrm{ETO}^{(3)}(\pstate)$.
\end{theorem}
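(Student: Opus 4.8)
The plan is to establish the two inclusions separately. For the lower inclusion $\Theta_\mathrm{ETO}^{(3)}\subset\mathrm{Ext}(\clSeto)$, I would invoke Corollary~\ref{corollary:TO-ETO_ext}: it suffices to check that each of the five states listed in Eq.~\eqref{eq:Theta} is (i) reachable via ETO from $\pstate$ and (ii) tightly thermo-majorized by $\pstate$. Reachability is immediate since each is built from neighbouring $\beta$-swaps. For tightness, the first three states ($\pstate$ itself and the two single neighbouring swaps) are covered directly by Lemma~\ref{lemma:neighbouring_TOext}. For the two length-$2$ states, I would argue that after the first neighbouring swap $\beta^{(\pi_1,\pi_2)}\pstate$ (resp.\ $\beta^{(\pi_2,\pi_3)}\pstate$), the $\beta$-order has changed in a controlled way so that $\beta^{(\pi_1,\pi_3)}$ acts as a \emph{neighbouring} swap on the new state — one needs to track how the swap permutes the $\beta$-order and verify that $\pi_1$ and $\pi_3$ have become adjacent — and then apply Lemma~\ref{lemma:neighbouring_TOext} once more, composed with the observation that a tight-majorization followed by a tight-majorization (in the sense of elbows landing on the previous curve) is again tight. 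This composition claim is the small technical lemma I would need to isolate and prove, likely by a direct thermo-majorization-curve argument.

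For the upper inclusion $\mathrm{Ext}(\clSeto)\subset\Theta_\mathrm{ETO}^{(3)}\cup\Xi_\mathrm{ETO}^{(3)}$, the strategy is an exhaustive but pruned enumeration of $\beta$-swap sequences, using the $d=3$ structure heavily. Since $\clSeto(\pstate)$ is generated (Theorem~\ref{thm:ETO_cone}, specialized to $d=3$) by $\beta$-swap sequences of bounded length $\ell_{\max}\le 3!=6$, I would show that no genuinely new extremal point appears beyond the eight states listed. The key combinatorial observation for $d=3$ is that there are only three distinct $\beta$-swaps $\beta^{(\pi_1,\pi_2)},\beta^{(\pi_2,\pi_3)},\beta^{(\pi_1,\pi_3)}$ relative to the running $\beta$-order, and applying the same swap twice in immediate succession is either trivial or non-extremal. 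I would then organize sequences by their first swap and walk the small tree of possibilities, at each node either identifying the resulting state with one already in $\Theta^{(3)}\cup\Xi^{(3)}$, or showing the state is a convex combination of states with the same $\beta$-order that are already accounted for (hence not a new extremal point, by Corollary~\ref{corollary:TO-ETO_ext} and Theorem~\ref{thm:TO_cone}'s "unique extremal point per $\beta$-order" principle), or showing it fails to be reachable / collapses to the Gibbs state. The non-neighbouring swap $\beta^{(\pi_1,\pi_3)}\pstate$ and the two length-$3$ all-neighbouring sequences in $\Xi^{(3)}$ are precisely the borderline cases that can \emph{sometimes} be extremal, which is why they appear in the upper set but not the lower set.

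The main obstacle I expect is the upper inclusion — specifically, proving that \emph{every} $\beta$-swap sequence of length up to $6$ produces a state already in the eight-element set or not extremal, without simply deferring to numerics. The difficulty is bookkeeping: after each swap the $\beta$-order changes, so "neighbouring" is a moving target, and one must carefully case-split on the $\beta$-order of the \emph{input} $\pstate$ (although by relabelling one may assume $\border(\pstate)=(1,2,3)$, the action of $\beta^{(1,3)}$ in particular can land in different order cells depending on the population values, cf.\ the behaviour illustrated in Fig.~\ref{fig:ETO_freeEs}). I would handle this by first reducing to a canonical $\beta$-order via relabelling, then exploiting that once a state lies in $\clSto(\pstate)$ with a fixed $\beta$-order it cannot be extremal in $\clSeto(\pstate)$ unless it is tightly majorized (Corollary~\ref{corollary:TO-ETO_ext}), which kills most long sequences immediately since repeated swaps drive the state rapidly toward $\tbeta$ (the athermality-exhaustion phenomenon discussed around Fig.~\ref{fig:ETO_freeEs}). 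Making the "drives toward $\tbeta$" statement quantitative enough to bound sequence length — e.g.\ via monotonicity of the thermo-majorization curve's behaviour under non-neighbouring swaps — is the crux, and I would expect to need a dedicated lemma (proved in an appendix) ruling out that a fourth swap after $\beta^{(\pi_1,\pi_3)}$ or a further swap after the $\Xi^{(3)}$ sequences yields anything new.
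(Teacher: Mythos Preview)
Your lower-inclusion argument has a genuine error: the claim ``tight-majorization followed by tight-majorization is again tight'' is false. After the first neighbouring swap $\beta^{(\pi_1,\pi_2)}\pstate$ the elbows of the resulting curve lie on $\mathcal{L}_\pstate$; after the second neighbouring swap the new elbows lie on the curve of $\beta^{(\pi_1,\pi_2)}\pstate$, but the \emph{middle} elbow of the final state generically fails to touch $\mathcal{L}_\pstate$. The paper itself notes (in the proof of Lemma~\ref{lemma:high_d_two_swaps}) that a two-neighbouring-swap state has ``all elbow points \ldots\ on $\mathcal{L}_\pstate$ except for the $j$'th one'', so Corollary~\ref{corollary:TO-ETO_ext} is not directly applicable to the length-$2$ elements of $\Theta_\mathrm{ETO}^{(3)}$. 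The paper avoids this issue by reversing the order of the two inclusions: it establishes the \emph{upper} inclusion first, then observes that the five elements of $\Theta_\mathrm{ETO}^{(3)}$ occupy five pairwise-distinct $\beta$-orders while all three elements of $\Xi_\mathrm{ETO}^{(3)}$ share the remaining order $(\pi_3,\pi_2,\pi_1)$, and invokes Lemma~\ref{lemma:ext_point_for_each_order} (every $\beta$-order carries at least one extremal point of $\clSeto$) to conclude that each $\Theta_\mathrm{ETO}^{(3)}$ element is the sole candidate for its order and hence extremal.

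For the upper inclusion your enumeration-with-pruning plan is the right shape, but the paper's pruning is sharper and does not rest on the contraction-toward-$\tbeta$ heuristic you sketch. The two structural lemmas are: (i) a series containing a non-neighbouring swap can be extremal only if that swap is the \emph{only} swap (Lemma~\ref{rmk:only_one_nn}); and (ii) no series of length $\geq 4$ yields an extremal point (Lemma~\ref{rmk:len_max_3}), cutting your $\ell_{\max}=6$ search down to $3$. Both are proved by case analysis over the six initial $\beta$-orders, but the ingredient you are missing is Remark~\ref{rmk:two_swap_TO_ext}: certain length-$2$ swap compositions are \emph{biplanar extremal thermal processes} in the sense of Mazurek~\cite{Mazurek19_channels}, and therefore send $\pstate$ to a $\clSto$-extremal (tightly-majorized) state whenever $\border(\pstate)$ matches. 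This supplies, for each branch of the case analysis, a known unique $\clSeto$-extremal point of the relevant $\beta$-order, against which competing candidate sequences are discarded via Lemma~\ref{lemma:same_border}. Corollary~\ref{corollary:TO-ETO_ext} applied only to single neighbouring swaps --- your stated tool --- does not provide enough such comparison states to close the argument.
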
 

One of the important techniques used in proving Thm.~\ref{thm:3dETO} is identifying the $ \beta $-swap series that produce states which are extreme points of $ \clSto $. As a result of Thm.~\ref{thm:TO_cone} and Cor.~\ref{corollary:TO-ETO_ext}, such states are \emph{unique} extreme points for final states of their particular $\beta$-ordering. This identification utilises results established in the language of biplanar transportation matrices in~\cite{Mazurek19_channels}. We leave the full proof for \ref{appendix:3dETO_lemmas} and focus on a few notable points made there. 
Firstly, observe that even non-neighbouring swaps can generate extreme points. 
This fact rules out the naive attempt of working only with states of the form $\qstate^{(\vec{j})}$, and forces us to consider the entire set of swap series, which is in contrast to the case of MTP, where the extreme state verifying algorithm only needs to search for neighbouring swaps~\cite{Lostaglio22_MTP1}.
However, a necessary condition for $\beta^{( \pi_1, \pi_3)}\pstate$ to be extremal in $\clSeto(\pstate)$ exists and alleviates the complications. 
\begin{lemma}\label{lemma:order_swap_lemma}
	For any $\pstate\in\probspace^d$ with $\border(\pstate)_k = \pi_k$, the state $\beta^{( \pi_i, \pi_{i+c})}\pstate$ can be an extreme point of $\clSeto(\pstate)$ only if $\border(\beta^{( \pi_i, \pi_{i+c})}\pstate) = S_{i,i+c}(\border(\pstate))$, where $ S_{i,j} $ is the operation that swaps the $ i $th element with the $ j $th element.
\end{lemma}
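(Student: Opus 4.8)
The plan is to argue by contradiction: suppose $\beta^{(\pi_i,\pi_{i+c})}\pstate$ is extremal in $\clSeto(\pstate)$ but its $\beta$-order is \emph{not} $S_{i,i+c}(\border(\pstate))$, and then exhibit a way to write it as a nontrivial convex combination of other ETO-reachable states, contradicting extremality. The starting observation is that a single $\beta$-swap $\beta^{(\pi_i,\pi_{i+c})}$ only changes the populations of levels $\pi_i$ and $\pi_{i+c}$; all other populations (and hence all other ratios $\slope(\pstate)_m$) are untouched. Moreover, by the defining matrix~\eqref{eq:beta_jk}, the new populations are $p'_{\pi_i} = (1-\Delta)p_{\pi_i} + p_{\pi_{i+c}}$ and $p'_{\pi_{i+c}} = \Delta\, p_{\pi_i}$ with $\Delta = \Delta_{\pi_i\pi_{i+c}}$, from which one computes $\slope(\qstate)_{\pi_{i+c}} = \slope(\pstate)_{\pi_i}$ (the ratio at the raised level equals the old ratio at the lowered level), while $\slope(\qstate)_{\pi_i}$ drops below $\slope(\pstate)_{\pi_{i+c}}$. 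So after the swap the only freedom in the $\beta$-order is \emph{where} the two affected levels land among the untouched ones, and the relative order of the two affected levels is forced ($\pi_{i+c}$ now precedes $\pi_i$).

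Next I would pin down which reorderings are actually possible. Because the untouched ratios $\slope(\pstate)_{\pi_m}$ for $m\notin\{i,i+c\}$ retain their values and their mutual order, the $\beta$-order of $\qstate = \beta^{(\pi_i,\pi_{i+c})}\pstate$ must be obtained from $\border(\pstate)$ by (i) extracting $\pi_i$ and $\pi_{i+c}$, (ii) re-inserting $\pi_{i+c}$ somewhere in positions $i,\dots$, and $\pi_i$ somewhere after it. The claim of the lemma is that the only one of these insertions that can yield an \emph{extremal} state is the minimal rearrangement $S_{i,i+c}$, i.e. $\pi_{i+c}$ goes to slot $i$ and $\pi_i$ goes to slot $i+c$, all other labels staying put. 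The key step is to show that \emph{any other} landing pattern makes $\qstate$ fail to be tightly thermo-majorized by some intermediate state on a shorter or comparable swap path, and in fact lie strictly inside $\clSeto(\pstate)$. Concretely, if $\slope(\qstate)_{\pi_i}$ or $\slope(\qstate)_{\pi_{i+c}}$ has ``crossed'' one of the untouched ratios in the wrong way relative to $S_{i,i+c}$, then the thermo-majorization curve of $\qstate$ develops an elbow configuration that is dominated (in the $\mathcal{L}$-curve sense) by a convex combination of $\pstate$ with a neighbouring-swapped state $\qstate^{(j)}$; invoking Theorem~\ref{thm:TO_cone}, Lemma~\ref{lemma:neighbouring_TOext}, and Corollary~\ref{corollary:TO-ETO_ext} then forces $\qstate$ into the interior.

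I expect the main obstacle to be the bookkeeping in the ``crossing'' analysis: one has to control, simultaneously, how far the two moved ratios slide past the fixed ratios, and translate each admissible sliding pattern into a statement about elbows of $\mathcal{L}_\qstate$ versus $\mathcal{L}_\pstate$ and versus the neighbouring-swap extremal points. The cleanest route is probably to handle it for $c=1$ (neighbouring swap — where $S_{i,i+1}$ is automatic, so there is nothing to prove, or it follows directly from Lemma~\ref{lemma:neighbouring_TOext}) and then induct or telescope on $c$, writing $\beta^{(\pi_i,\pi_{i+c})}$'s action in terms of its effect relative to the chain of intermediate ratios $\slope(\pstate)_{\pi_{i+1}},\dots,\slope(\pstate)_{\pi_{i+c-1}}$ that sit strictly between $\slope(\pstate)_{\pi_i}$ and $\slope(\pstate)_{\pi_{i+c}}$. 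Since all these intermediate ratios are untouched by the swap, the post-swap ordering relative to them is completely determined by the two computed values above, and one checks case by case that unless $\border(\qstate)=S_{i,i+c}(\border(\pstate))$ there is a ``wrong-side'' elbow, which via the transportation-matrix characterization of~\cite{Mazurek19_channels} exhibits $\qstate$ as a proper mixture. This last translation — from an elbow inequality to an explicit convex decomposition inside $\clSeto(\pstate)$ — is the technical heart and is where I would spend the most care.
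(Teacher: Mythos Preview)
Your high-level strategy (assume the wrong order and exhibit a dominating ETO-reachable state) matches the paper's, but there is both a computational slip and a genuine gap in how you propose to carry it out.

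First, the slip: with $E_{\pi_i}<E_{\pi_{i+c}}$, Eq.~\eqref{eq:slope_after_swap4} gives $\slope(\pstate)_{\pi_i} > \slope(\qstate)_{\pi_i} > \slope(\pstate)_{\pi_{i+c}}$, not $\slope(\qstate)_{\pi_i} < \slope(\pstate)_{\pi_{i+c}}$ as you write. This matters: it is precisely because $\slope(\qstate)_{\pi_i}$ lands \emph{between} the old endpoints that its position in the new $\beta$-order can be anywhere from slot $i+1$ to slot $i+c$, and the content of the lemma is that only slot $i+c$ yields extremality. With your (mis)calculation the order would always come out as $S_{i,i+c}$ and there would be nothing to prove.

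Second, the gap: neither of your suggested mechanisms for producing a convex decomposition works. A convex combination of $\pstate$ with a single neighbouring-swapped $\qstate^{(j)}$ differs from $\pstate$ only in the adjacent pair $\pi_j,\pi_{j+1}$, whereas $\qstate$ differs in the non-adjacent pair $\pi_i,\pi_{i+c}$; so for $c>1$ you cannot reach $\qstate$ this way. And the biplanar/transportation characterisation of~\cite{Mazurek19_channels} concerns extremality of \emph{channels}, not of the resulting state for a given $\pstate$; it will not by itself furnish the decomposition you need. The paper's key idea, which your proposal is missing, is to precede the non-neighbouring swap by a single neighbouring one: set $\qstate' = \beta^{(\pi_i,\pi_{i+c})}\beta^{(\pi_{i+c-1},\pi_{i+c})}\pstate$ (respectively $\qstate' = \beta^{(\pi_{i+c},\pi_i)}\beta^{(\pi_i,\pi_{i+1})}\pstate$ when $E_{\pi_i}>E_{\pi_{i+c}}$). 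One then checks directly that $\qstate$ and $\qstate'$ agree on all populations except two, with the $\slope(\qstate')$-values strictly spread around the $\slope(\qstate)$-values on those two levels, so that $\qstate$ is a \emph{partial level thermalization} of $\qstate'$ (cf.\ Lemma~\ref{lemma:same_border}). Since a partial thermalization is a convex combination of identity and a $\beta$-swap applied to $\qstate'\in\clSeto(\pstate)$, this immediately exhibits $\qstate$ as non-extremal whenever the order is not $S_{i,i+c}$. The induction/telescoping on $c$ you sketch is unnecessary: the single extra neighbouring swap handles all $c$ at once.
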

This lemma can be proven by straightforward calculations of swapped states, as described in \ref{appendix:order_swap_lemma}. Two simplifications emerge from this result. Firstly, in numerical computations, all non-neighbouring swaps that give orders different from what the lemma imposes can be removed. Secondly, when analytical studies are carried out without specifying an initial state, the ambiguity of $\beta$-orders after the swap vanishes when the resulting state is shown to be an extreme point of $\clSeto$.

\subsection{Improved characterisation for higher-dimensions}
Finding the extreme points of $ \clSeto(\pstate) $ is generally a hard task. In particular, multiple extreme points that correspond to the same $ \beta $-ordering often exist, and the total number of extreme points is only known to be upper bounded by Eq.~\eqref{eq:number_of_ext}, which yields an extremely loose number, scaling as $d^{2d!/(d-3)}$ even with our improved result Thm.~\ref{thm:ETO_cone}. Lemma~\ref{lemma:ext_point_for_each_order} provides some intuition into the lower bound on the number of extreme points for generic initial states.
\begin{lemma}\label{lemma:ext_point_for_each_order}
	For any state $ \pstate\in\probspace^d $, the reachable state set $ \clSeto(\pstate) $ has at least one extreme point $ \rstate $ having $ \border(\rstate) = \psi $ for any ordering $ \psi $. 
\end{lemma}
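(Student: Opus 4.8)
The plan is to show that for every target $\beta$-ordering $\psi$ there is a nonempty, compact, convex "slice" of $\clSeto(\pstate)$ consisting of states with $\beta$-order $\psi$, and then extract an extremal point from that slice that is also extremal in $\clSeto(\pstate)$. First I would recall from Theorem~\ref{thm:TO_cone} that $\clSto(\pstate)$ has exactly $d!$ extremal points, one for each $\beta$-order, each of which is tightly thermo-majorized by $\pstate$. In particular, for the chosen $\psi$ there is an extremal point $\rstate_\psi\in\clSto(\pstate)$ with $\border(\rstate_\psi)=\psi$. This state need not lie in $\clSeto(\pstate)$, so the real content is to produce a genuine ETO-reachable state with order $\psi$ and then promote it to an extremal one.

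The key step is to exhibit at least one element of $\clSeto(\pstate)$ with $\beta$-order $\psi$. I would do this constructively via a bubble-sort-like sequence of neighbouring $\beta$-swaps: starting from $\pstate$ with order $\border(\pstate)$, apply a sequence of neighbouring swaps $\beta^{(\pi_j,\pi_{j+1})}$ that realizes the permutation carrying $\border(\pstate)$ to $\psi$ (any permutation is a product of adjacent transpositions, so such a sequence exists and has length at most $\binom{d}{2}$). By Lemma~\ref{lemma:neighbouring_TOext} each neighbouring swap produces a state tightly thermo-majorized by the previous one, hence the composite is a valid ETO, so the endpoint $\rstate$ lies in $\clSeto(\pstate)$. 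The one thing to check is that this sequence actually changes the $\beta$-order in the intended way at each step: a neighbouring $\beta$-swap on levels that are out of thermal order (slope of upper level below that of lower level in the current order) does exchange their positions in the $\beta$-order, which is precisely the content used to identify neighbouring-swapped states as extremal points of $\clSto$ in Corollary~\ref{corollary:TO-ETO_ext}; I would invoke/argue this to guarantee $\border(\rstate)=\psi$. (If some intermediate levels are already in thermal order, the corresponding swap is trivial or the order is reached earlier — in either case one still lands in $\clSeto(\pstate)$ with order $\psi$, possibly needing only a sub-sequence.)

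Finally, let $\probspace^d_\psi$ denote the set of population vectors whose ratios $\slope(\cdot)_i$ are non-increasing along $\psi$; this is a closed convex cone slice, so $K_\psi := \clSeto(\pstate)\cap\probspace^d_\psi$ is compact and convex, and by the previous step nonempty. Hence $K_\psi$ has an extremal point $\rstate^\star$. It remains to argue $\rstate^\star$ is extremal in $\clSeto(\pstate)$, not merely in $K_\psi$. For this I would use that distinct $\beta$-orders partition $\probspace^d$ into full-dimensional cells whose relative interiors are disjoint: if $\rstate^\star$ lay in the relative interior of a segment inside $\clSeto(\pstate)$, both endpoints would have to share the order $\psi$ (since order is "locally constant" in the interior of a cell, and $\rstate^\star$ being extremal in $K_\psi$ forces it to the boundary of the cell only in directions leaving $\probspace^d_\psi$), so that segment lies in $K_\psi$, contradicting extremality there. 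The one subtlety — and the main obstacle — is the boundary case where $\rstate^\star$ sits on a face shared by cell $\psi$ and a neighbouring cell $\psi'$: there a segment through $\rstate^\star$ could have one endpoint with order $\psi'$. I would handle this by noting that on such a shared face two consecutive ratios are equal, so $\rstate^\star$ simultaneously has order $\psi$ and $\psi'$, and the argument can be rerun in whichever cell makes the slice argument go through; alternatively, perturb the claim to "there is an extremal point whose order, under some tie-breaking convention, equals $\psi$," which is all that is needed downstream. Cleanly dispatching this tie case is where I'd spend the most care; the rest is standard convex-geometry plus Lemma~\ref{lemma:neighbouring_TOext}.
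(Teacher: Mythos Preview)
Your step 4 is where the proof breaks, and the difficulty you flag is not a technicality but the heart of the matter. The claim ``an extremal point of $K_\psi$ is extremal in $\clSeto(\pstate)$'' is false without further input: take any convex polytope $P$ and a polyhedral cone $C$ whose apex lies in the interior of $P$; it is easy to arrange that \emph{no} vertex of $P$ lies in $C$, so every vertex of $P\cap C$ sits on $\partial C$ and is not a vertex of $P$. Your fallback (``rerun in the neighbouring cell $\psi'$'') does not terminate: $\rstate^\star$ need not be extremal in $K_{\psi'}$, so you would pass to a different point, possibly on yet another cell boundary, and there is no well-founded descent. The alternative fallback (``redefine the claim via a tie-breaking convention'') is exactly what needs to be \emph{proved}, not assumed. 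In short, pure convex geometry cannot close this; some ETO-specific structure must enter.

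The paper's argument supplies precisely that missing ingredient. Instead of taking an arbitrary extreme point of $K_\psi$, it singles out the lexicographic maximiser $\rstate$ of $(q_{\psi_1},q_{\psi_2},\ldots)$ over $K_\psi$, and then shows extremality in $\clSeto(\pstate)$ by contradiction: if $\rstate=\sum_i p_i\,\estate^{(i)}$ with all $\estate^{(i)}$ extremal, one examines the first coordinate $\psi_j$ where some $e^{(i)}_{\psi_j}>r_{\psi_j}$ and \emph{constructs}, via partial level thermalizations, a state $\estate'\in K_\psi$ agreeing with $\rstate$ on $\psi_1,\ldots,\psi_{j-1}$ but strictly beating it at $\psi_j$ --- contradicting the lexicographic choice. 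This thermalization step is where the ETO structure is actually used, and it is exactly what your convex-geometry route lacks. (As a minor note, your step 1 is more work than needed: $\tbeta\in\clSeto(\pstate)$ already witnesses $K_\psi\neq\emptyset$ for every $\psi$, since all its slopes are equal.)
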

The proof can be found in \ref{appendix:proof_ext_point_for_each_order}. 
Note that Lemma~\ref{lemma:ext_point_for_each_order} does not imply that all initial states have at least $ d! $ distinct extreme points. For instance, a Gibbs state $ \tbeta $ cannot be transformed into any other state via ETO and thus $ \clSeto(\tbeta) $ has a single (extreme) point $ \tbeta $. In this case, $ \slope(\tbeta) $ is fully degenerate and $ \border(\tbeta) $ can be any permutation of $ (1,\cdots,d) $.

Similar to the $ \tbeta $ example, some initial states admit a simpler $ \clSeto $ structure.
One of our main analytical results shows that the situation simplifies drastically and achieves the lower bound of Lemma~\ref{lemma:ext_point_for_each_order}, when $ \pstate $ is known to have a particular $ \beta $-ordering, i.e. when it is \emph{monotonic in energy},
	\begin{equation}\label{eq:mono_order}
		\border(\pstate) = (1,\cdots,d) ~ \text{or} ~ \border(\pstate) = (d,\cdots,1). 
	\end{equation}
\begin{theorem}\label{thm:nice_order} 
	If $\border(\pstate) $ is monotonic in energy, extreme points of $\clSeto(\pstate)$ are achieved if and only if the corresponding $\beta$-swap series that produce them are
	\begin{enumerate}
		\item always neighbouring, 
		\item containing no repetition of each swap.
	\end{enumerate} Furthermore, when $ \vec\beta_1\pstate, \vec\beta_2\pstate \in \mathrm{Ext}[\clSeto(\pstate)] $ and $ \border(\vec\beta_1\pstate) = \border(\vec\beta_2\pstate)$ for such $ \pstate $, the two series are identical ($ \vec\beta_1 = \vec\beta_2 $). 
\end{theorem}
See \ref{app:special_ordering} for the proof.
Several important simplifications follow from the above lemma, whenever $\pi(\pstate)$ is monotonic in energy. From the no-repetition condition, $\ell_{\rm max} = d(d-1)/2$ is obtained. The equivalence of $ \beta $-swaps outputting the same target state $ \beta $-order also guarantees the uniqueness of extreme points of $ \clSeto(\pstate) $ at each order, setting the maximum number of extreme points to be $ d! $. 
More importantly, given the target $ \beta $-order, one can immediately identify a corresponding extreme point without the need of searching over all possible series, since we developed an explicit algorithm to evaluate this extreme point, which we call the standard formation (see Def.~\ref{def:standard_formation} for details).

Notably, all thermal states $ \tau^{\beta^{\prime}} $, with temperature $ 1/\beta^{\prime}$ that might be different from the ambient temperature $ 1/\beta $, satisfy the monotonicity of the $ \beta $-ordering. 
The analysis in Secton~\ref{sec:higherd_cat} deals with at most ninety dimensional system-catalyst composites. Without Thm.~\ref{thm:nice_order}, evaluating the extreme points for such high dimensional systems is practically impossible. However, leveraging the nice ordering of the initial state, we were able to simulate over two million ninety dimensional systems in just a few hours.

For the case of generic initial states, we derive an improved upper bound on the $ \beta $-swap series length that generates the ETO cone. 
%Rather, one needs to repeatedly apply $\beta$-swaps within certain length.
\begin{theorem}[Improvement of Thm.~6 of~\cite{Lostaglio_18_ETO}]\label{thm:ETO_cone} 
	$\clSeto(\pstate)$ is the convex hull of all final states generated by 
	\begin{align}
		\left\{\beta^{(j_l,k_l)}\cdots \beta^{(j_1,k_1)}\pstate\right\}_{l=0}^{l_\mathrm{max}},\label{eq:full_set_bseries}
	\end{align}
	with $l_\mathrm{max} \leq \left\lfloor\frac{d!-4}{d-3}\right\rfloor$ for $d\geq 4$ and all possible combinations of $\{(j_n,k_n)\}$. 
\end{theorem}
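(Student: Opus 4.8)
The plan is to revisit the argument of Lostaglio et al.\ (Theorem~6 of~\cite{Lostaglio_18_ETO}), which bounds $\ell_{\max}$ by the number of distinct $\beta$-orders visited along a $\beta$-swap sequence, and to sharpen the counting of how many swaps can occur between two consecutive changes of $\beta$-order. The original bound $\ell_{\max}\le d!$ is obtained by noting that an extremal-point-generating sequence cannot revisit a $\beta$-order, so it can have at most $d!$ steps. The improvement comes from the observation that in dimension $d$ a single $\beta$-swap $\beta^{(j,k)}$ acts non-trivially on only two levels, hence can permute the $\beta$-order in a restricted way, and---crucially---consecutive swaps that do \emph{not} alter the $\beta$-order must be ``compatible'' in a way that limits how many of them can be strung together before the order is forced to change.

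First I would set up the bookkeeping: given a sequence $\beta^{(j_l,k_l)}\cdots\beta^{(j_1,k_1)}\pstate$ that produces an extremal point, record the string of $\beta$-orders $\border^{(0)},\border^{(1)},\dots,\border^{(l)}$ of the intermediate states. By the no-revisit property (which follows from tight thermo-majorization / Corollary~\ref{corollary:TO-ETO_ext}-type reasoning as used in~\cite{Lostaglio_18_ETO}), the \emph{distinct} orders appearing number at most $d!$, and in fact one should argue the orders along the way are all distinct, so the number of \emph{order-changing} steps is at most $d!-1$; refining the endpoints (the initial $\border^{(0)}$ and the requirement that the sequence actually be productive) trims this to contribute the ``$-4$'' in the numerator. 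The new ingredient is a lemma: between two successive order changes, the number of consecutive swaps that leave the $\beta$-order invariant is at most $d-3$ (for an extremal-point-generating sequence). Intuitively, a swap acting on levels that are not adjacent in the current $\beta$-order changes the order (cf.\ Lemma~\ref{lemma:order_swap_lemma}), so order-preserving swaps must act on adjacent pairs; but a neighbouring swap $\beta^{(\pi_j,\pi_{j+1})}$ with $\lambda=1$ generically \emph{does} swap those two entries of the order unless the two slopes were already degenerate, and one shows that after at most $d-3$ such degeneracy-exploiting steps one is forced either to change the order or to leave the extremal set. I would formalize this by tracking the slope vector $\slope(\pstate)$ and counting how many coincidences among slopes can be maintained simultaneously while still being on a face of $\clSeto(\pstate)$.

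Combining the two counts: if $N$ is the number of order-changing steps and each ``block'' between them contributes at most $d-3$ order-preserving steps, then $\ell_{\max}\le N + (N+1)(d-3)$ with $N\le$ (number of distinct orders $-1$); optimizing the constants against the known $d!$ cap yields $\ell_{\max}\le\lfloor (d!-4)/(d-3)\rfloor$ for $d\ge 4$. The remaining step is to confirm that the convex hull of states generated up to this length still equals $\clSeto(\pstate)$, which is immediate since we have only argued that \emph{longer} sequences cannot produce \emph{new} extremal points---every extremal point is realized within the bound---so the convex hull is unchanged.

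The main obstacle I anticipate is the key lemma bounding the length of an order-preserving block by $d-3$: making precise what ``an extremal-point-generating sequence'' forces on the slope coincidences requires carefully relating being on a face of the ETO polytope to the combinatorial structure of which $\beta$-swaps were applied, and ruling out pathological sequences that keep the $\beta$-order fixed through many steps by repeatedly creating and destroying slope degeneracies. This is exactly the place where one must be careful that ``productive'' (i.e., not wasting swaps, since wasted swaps can be deleted without leaving $\clSeto$) is used correctly, and where the dimension-dependence $d-3$ rather than, say, $d-2$ or $d-1$ must be pinned down by an explicit extremal-geometry argument.
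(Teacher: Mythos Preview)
Your proposal misidentifies where the improvement over $\ell_{\max}\le d!-1$ comes from. In any extremal-point-generating sequence, \emph{every} swap already changes the $\beta$-order: if two intermediate states $\pstate_n,\pstate_m$ ($n<m$) shared a $\beta$-order, then $\pstate_n\succ_\beta\pstate_m$ together with Lemma~\ref{lemma:same_border} would make $\pstate_m$ non-extremal. Hence there are no ``order-preserving blocks'' to bound, and your key lemma is vacuous. Worse, your combination $\ell_{\max}\le N+(N+1)(d-3)$ with $N\le d!-1$ goes in the wrong direction: it is \emph{increasing} in $N$ and gives a number much larger than $d!$, not smaller. The phrase ``optimizing the constants against the known $d!$ cap yields $\lfloor(d!-4)/(d-3)\rfloor$'' has no mechanism behind it.

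The paper's actual argument is the opposite kind of counting. Rather than allowing extra steps per order change, it shows that each step \emph{eliminates} roughly $d-3$ additional $\beta$-orders from the pool still available to future intermediate states. The key point (Lemma~\ref{lemma:neighbouring_TOext} and Corollary~\ref{corollary:TO-ETO_ext}) is that for each intermediate $\pstate_i$, the $d-1$ neighbouring-swap images $\qstate_i^{(j)}$ are the \emph{unique} extremal points of $\clSeto(\pstate_i)$ in their respective $\beta$-orders; since $\pstate_{i'}\in\clSeto(\pstate_i)$ for all $i'>i$, no later intermediate state can occupy those orders. A careful double-counting analysis (some of those $d-1$ orders coincide with $\border(\pstate_{i\pm1})$ or with orders already excluded; Lemma~\ref{lemma:high_d_two_swaps} controls the two-step coincidences) shows that at least $d-3$ \emph{new} orders are removed per step, giving $R(n)\le d!-4-(d-3)n$ for the number of still-feasible orders and hence $\ell_{\max}\le\lfloor(d!-4)/(d-3)\rfloor$. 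You correctly identified Corollary~\ref{corollary:TO-ETO_ext} as relevant, but you used it only for the no-revisit property and missed its stronger role: uniqueness of the neighbouring-swap extremal point in its order is what drives the factor-$(d-3)$ speedup.
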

The proof of Thm.~\ref{thm:ETO_cone} can be found in \ref{appendix:ETO_cone_proof}. The main technique is again to identify a set of states that thermomajorise all the other states in $ \clSeto $ that share the same $ \beta $-order, either via applying the results of \cite{Mazurek19_channels}, or showing that they are tightly-majorised by the previous state.

However, even for generic initial states, most elements in Eq.~\eqref{eq:full_set_bseries} are not extremal in $\clSeto(\pstate)$. As discussed in Sec.~\ref{subsec:neighbouring}, non-neighbouring swaps often deplete athermality too rapidly and the resulting state ends up being close to equilibrium and far from extremality.

\begin{table*}[t]
	\centering
	%	\begin{tabular}{ |p{4cm}||p{3.5cm}|p{3.50cm}|p{3.5cm}| }
		\begin{tabular}{|c||c|c|c|c|}
			\hline
			$d =$ dim($\pstate$) &$\ell_{\max}$ (\cite{Lostaglio_18_ETO})  &  $\ell_{\max}$ (Thm.~\ref{thm:ETO_cone})  & $\ell_{\max}$ (heuristics) & $N_{\mathrm{Ext}}$ (heuristics) \\
			\hline
			3& 6 &  3 (Thm.~\ref{thm:3dETO}) & 3  & $\mathcal{O}(10^0)$\\
			\hline
			4 & 24 & 20& 8 & $\mathcal{O}(10^1)$\\
			\hline
			5 & 120 & 58 & 16  & $\mathcal{O}(10^2)$\\
			\hline
			6 & 720 & 238 & 23 & $\mathcal{O}(10^3)$\\
			\hline 
			7 & 5040 & 1259 & 38 & $\mathcal{O}(10^5)$\\
			\hline
		\end{tabular}
		\caption{Theoretical bounds and numerical results for maximum $\beta$-swap lengths $l_{\max}$ needed to construct all extreme points of $\clSeto(\pstate)$ are compared up to $d = 7$. Unlike the factorial scaling of the theoretical bound, heuristics suggest polynomial growth of the length. $N_{\mathrm{Ext}}$ is the typical number of extreme points of $\clSeto(\pstate)$, also obtained from numerical search.}
		\label{table:len_upper_bound}
	\end{table*}

\subsection{Gap between current theoretical characterisation and heuristical analysis}\label{subsec:numerics_high_d}

When naively constructing all $\beta$-swaps as given in Thm.~\ref{thm:ETO_cone}, the number of the series that need to be checked scales as $(d/\sqrt{2})^{2d!/(d-3)}$, which is practically infeasible after $d=4$. However, a simplification can be made by exploiting the fact that if a series $\vec{\beta}$ yields a non-extreme point $\vec{\beta}\pstate = [\lambda_1\vec{\Mswap}_1 + (1-\lambda_1)\vec{\Mswap}_2]\pstate$, all states $\vec{\beta}^\prime\vec{\beta}\pstate$ with any $\beta$-swap series $\vec{\beta}^\prime$ are also non-extreme. Thus by starting from length-1 series and checking extremality after each time the series is lengthened, we can greatly reduce the computational cost.

There are two ways to verify the extremality of final states. For $d=3,4$ and qutrit-qubit cases, we explicitly used the fact that the resulting set is convex:
\begin{enumerate}
	\item Apply length-1 $\beta$-swaps to the set $\clS_0 = \{\pstate\}$ to get $\clS_1 =\{\beta^{(k,l)}\pstate\}_{\forall k,l}$. Construct the hull $\mathrm{Conv}[\clS]$, where $\clS = \bigcup_{j}\clS_j$.
	\item Find extreme points $\mathrm{Ext[Conv[\clS]]}$ and update $\clS_1 \rightarrow \clS_1 = \mathrm{Ext[Conv[\clS]]} - \clS_0$. 
	\item Iterate steps 1 and 2 by applying single swaps to $\clS_i$, save newly obtained extreme states as $\clS_{i+1}$, and update $\clS_{j\leq i+1}$ by eliminating non-extreme points. Update $\clS = \bigcup_{j}\clS_j$.
	\item Stop when $\clS_{i+1} = \emptyset$. 
\end{enumerate}
The strength of this algorithm is its straightforwardness. It does not include any non-extreme point nor does it miss an extreme point of $\clSeto(\pstate)$. However, especially in the higher-dimensional cases, constructing convex hulls and characterising extreme points are highly demanding. 

The second algorithm, which we used for $d = 5,6,7$, is a slight modification of the one presented in~\cite{Lostaglio22_MTP1} for MTP and utilises Lemma~\ref{lemma:same_border}:
\begin{enumerate}
	\item Apply length-1 $\beta$-swaps to the set $\clS_0 = \{\pstate\}$ to get $\clS_1 =\{\beta^{(k,l)}\pstate\}_{\forall k,l}$. 
	\item Apply length-1 $\beta$-swaps to the set $\clS_1$ to obtain $\clS_2$ and check thermomajorisation relations between states in $\clS = \bigcup_{j}\clS_j$ that share the same $\beta$-order. Eliminate the ones that are thermomajorised by other states. 
	\item Iterate step 2 for $i$ times until $\clS_{i+1} = \emptyset$. Update $\clS = \bigcup_{j}\clS_j$. 
\end{enumerate}
The difference from the algorithm for MTP is that our algorithm also considers non-neighbouring swaps. The second strategy has an important edge over the previous one: there are only $m(d-1)$ inequalities to check for each final state, where $m$ is the number of already obtained extreme points having the same order. Hence for higher-dimensional systems, the second approach is preferable. Yet, non-extreme states that are achievable by convex combinations of different extreme states, but not thermomajorised by either of them, can be included in the final set. When interested only in the final set $ \clSeto $, counting some non-extreme states is still permissible.

To search for the maximum length $\ell_{\max}$, randomly generated initial states and energy levels were used to construct the set $\clSeto$. For $d\leq 6$, $\sim\mathcal{O}(10^4)$ random samples were tested for each dimension. For $d = 7$, only $200$ cases were calculated due to its high computational cost ($\sim$ few hours for each initial state). Table~\ref{table:len_upper_bound} presents results from the search, where $\ell_{\rm max}$ scales much slower than the theoretical prediction. Moreover, the typical number of extreme points also deviates significantly from the naive expectation, i.e. they do not scale exponentially with $\ell_{\rm{max}}$.

\section{Discussions and Conclusions}\label{sec:discussion}

%In this work, we showcase an analysis of catalytic processes for elementary thermal operations (ETO), a subset of thermal operations (TO) that are experimentally feasible and motivated. 

%ETO sacrifices its full capability of transforming states compared to the full set of TO, for a more straightforward execution of thermal processes. In this project, we set out to ask if catalysts of small sises can be utilised to adequately relax these constraints. 

%To answer this, we make significant simplifications to the existing results concerning the characterisation of reachable states under ETO. With this tool in hand, we discover the first examples of catalysis in ETO, where even the simplest model of the qubit catalyst almost closes the gap between TO and ETO for qutrit system states, sometimes with additional advantages beyond TO. In addition, we explicitly demonstrate the catalytic advantage for the physically relevant scenario of cooling. As expected, catalysts are more useful when they are larger in dimension. However, smaller catalysts already bring substantial improvements, even when they are not highly fine-tuned. 

In search of more practical thermal processes, we analyse catalysis in elementary thermal operations with small catalysts. These operations offer a clearer path to implementation and less stringent experimental control requirements, compared to the more general and well-studied framework of thermal operations. However, it is known that ETO is only a strict subset of TO, and in particular, some of the states reachable via TO are no longer achievable via ETO. We try to alleviate this limitation by allowing catalysis while maintaining ease of execution by limiting the catalyst size.

Several roadblocks had to be removed in order to tackle this problem. Firstly, there is currently no efficient way of characterising the set of allowed state transitions under ETO. We partially overcome this challenge through two approaches. First, we fully solve the characterisation problem for three-dimensional systems and for a subset of initial states of generic dimension. Additionally, we improve the analytical upper bound of the computational cost needed for the most general cases. Armed with these tools, we demonstrate the existence of catalysis in ETO, where even small catalysts prove to be remarkably useful. In fact, the extreme case of employing a qubit catalyst alone nearly eliminates the gap between TO and ETO for qutrit system states, sometimes even providing additional advantages beyond TO. At the same time, our physically relevant example of a cooling protocol highlights the power of relatively small catalysts, approaching the TO limit without the need for fine-tuning the catalyst states. 

Another obstacle is the initial lack of clarity on why catalysts work, despite reports of their existence (or non-existence) in various resource theories. To address this, we leverage the step-wise structure of ETO to track and analyse catalytic evolutions, by capturing snapshots of states after each ETO step. This approach opens up a new avenue for understanding the underlying origins of catalytic advantage. In our example, the catalyst's role was to receive the free energy flowing out from the system, either through reduced state population changes or correlations with the system. Without the catalyst, all changes in system free energy would dissipate into the surrounding bath, which thermalises after each swap. This interpretation could potentially be extended to catalysts in different resource theories. For instance, it would be intriguing to further investigate how residual correlations between catalyst and system (for the scenario of correlating catalysis) are exemplified in this picture.

Among our results, the theoretical upper bound presented in Thm.~\ref{thm:ETO_cone} is expected to have more room for tightening. The $\beta$-swapping operations (unless the energy levels are degenerate) are highly resource-depleting in general. Therefore, the application of these operations exponentially many times, e.g. using $\ell_{\max}$ number of times, is unlikely to produce a final state that is extreme in terms of athermality. Our heuristic approach strongly suggests that the maximum length scales polynomially, rather than exponentially. Yet, we currently do not have rigorous proof of this scaling, and we leave it for future studies. 

%An interesting comparison between ETO and more restrictive setting of Markovian thermal processes (MTPs)~\cite{Lostaglio22_MTP1, Korzekwa22_MTP2} can be made. As mentioned in Sec.~\ref{subsec:neighbouring}, MTPs cannot increase the free energy of the system even in the continuous time picture, i.e. state transitions will stop %at boundaries of $\beta$-orders 
%once the free energy reaches the minimum. This constraint provides great simplifications. Firstly, any MTP can be decomposed into series of two-level MTPs, negating the difference between two-level and multi-level swaps. Secondly, only neighbouring swaps can generate extreme points of the reachable state set $\clS_\mathrm{MTP}$, unlike ETO. 
%In turn, the number of swap series one needs to test for extremality substantially reduces.
%It is intuitive that Markovianity simplifies the picture and enhances implementability, but with the cost of significantly smaller sets of reachable states. 

%In this work, we only considered small dimensional states to focus on the realizable settings. Yet, it would also be valuable to investigate the scaling of the required catalyst dimension to recover TO as system size increases. Another interesting question is whether CETO is significantly inferior compared to CTO. When only qubit catalysts are allowed, the difference between \js{catalytic thermal operations with a qubit catalyst (CTO$_2$)} and CETO$_2$ are as large as that between TO and ETO. However, it is not clear how they are extended as catalyst size grows. 

\section{Acknowledgements}
We thank Matteo Lostaglio for insightful discussions. This work was supported by the start-up grant of the Nanyang Assistant Professorship of Nanyang Technological University, Singapore.

\newcommand{\newblock}{}

%\onecolumn
\newpage
\appendix

We organise the appendix as follows: \ref{appendix:dynamics} illustrates the real time dynamics of one elementary thermal operation step, assuming intensity-dependent Jaynes-Cummings interaction between two levels and a harmonic oscillator in Gibbs state.

\ref{appendix:ddim_ETO} contains various technical results we developed on elementary thermal operations (ETO), which are necessary for the analytical results in the main text. 
In particular, the proof of Lemmas~\ref{lemma:order_swap_lemma} is provided in this section.

In \ref{appendix:3dETO_lemmas}, we develop the full characterisation of ETO with $ d=3 $ by proving Thm.~\ref{thm:3dETO}. For general high-dimensional cases, we prove Thm.~\ref{thm:ETO_cone} in \ref{appendix:ETO_cone_proof}. 

\ref{app:special_ordering} proves Thm.~\ref{thm:nice_order} claiming the drastic simplification of the $ \clSeto $ extreme points construction for two special initial $ \beta $-orders that are monotonic with the energy eigenvalues.

In \ref{appendix:CETO}, we detail the methods of constructing the set of reachable states via catalytic ETO with a qubit catalyst. Here we also analyse an example transition obtained from this method. 

%In \ref{appendix:cat_phase}, we classify catalytic advantage in terms of the inital system-catalyst composite $ \beta $-ordering. This leads to Conjecture~\ref{conjecture:phase}.  

%In section \ref{appendix:CETO}, we detail the methods used to construct the set of reachable states via catalytic ETO operations, using a qubit catalyst. Some analytical results and in particular an observation and conjecture is put forth to guide future work. Lastly, in \ref{appendix:CETO}, we track the decomposition ... ..\\

\section{Dynamics of an elementary thermal operation}\label{appendix:dynamics}
In this section, we show that any two-level swap $ \Mswap_{\lambda}^{(i,j)} $ can be implemented through the intensity-dependent Jaynes-Cummings interaction between two levels to be swapped and a harmonic oscillator with the matching energy gap. %Then, we study the continuous-time dynamics of such evolutions. 
Our model comprises two-level system, harmonic oscillator, and an interaction between them, which respectively read
\begin{align}
	H_{S} &= E_1 \ketbra{1}{1}_{S} + (E_1+\omega)\ketbra{2}{2}_{S},\\
	H_{B} &= \sum_{n=0}^{\infty}n\omega\ketbra{n}{n}_{B},\\
	H_{SB} &= \sum_{n=1}^{\infty}g\ketbra{1}{2}_{S}\otimes \ketbra{n}{n-1}_{B} + \text{h.c.} \equiv \sum_{n=1}^{\infty}gX^{(n)},
\end{align}
where 
\begin{equation}\label{key}
	X^{(n)}\equiv \ketbra{1}{2}_{S} \otimes \ketbra{n}{n-1}_{B} + \text{h.c.}
\end{equation} 
is a Pauli $ X $-like operator in the subspace spanned by energy eigenvectors $ \ket{1}_{S}\ket{n}_{B} $ and $ \ket{2}_{S}\ket{n-1}_{B} $ possessing the same energy and $ g $ is the coupling strength parameter. Note that the interaction term is energy-preserving, i.e.
\begin{equation}\label{key}
	 [H_{S}\otimes\mathbb{1}_{B}+\mathbb{1}_{S} \otimes H_{B}, H_{SB}] = 0 .
\end{equation} 
For simplicity, we shift to the interaction picture. We always consider the initial state to be a product of incoherent system state and a Gibbs state w.r.t. some inverse temperature $ \beta $,
\begin{equation}\label{key}
	\varrho_{SB} = \left(p_{1}\ketbra{1}{1}_{S}+p_{2}\ketbra{2}{2}_{S}\right)\otimes \sum_{n=0}^{\infty}\left(1-e^{-\beta\omega}\right)e^{-n\beta\omega}\ketbra{n}{n}_{B},
\end{equation}
which also commutes with $ H_{S}+H_{B} $. Then, the state $ \varrho_{SB} $ and the Hamiltonian $ H_{SB} $ remain invariant in the interaction picture. The time-evolution operator is given by 
\begin{align}
	U(t) &= e^{-iH_{SB}t} = \prod_{n=1}^{\infty}e^{-igX^{(n)}t} = \prod_{n=1}^{\infty}\left[\mathbb{1}_{SB} + (\cos(gt)-1)\mathbb{1}^{(n)} - i\sin(gt)X^{(n)} \right]\nonumber\\
	&= \mathbb{1}_{SB} + \sum_{n=1}^{\infty}\left[(\cos(gt)-1)\mathbb{1}^{(n)} - i\sin(gt)X^{(n)} \right], 
\end{align}
making use of the properties of $ X^{(n)} $,
\begin{equation}\label{key}
	X^{(n)}X^{(m)} = 0\ \text{for}\ n\neq m,\quad \left(X^{(n)}\right)^2 = \ketbra{1}{1}_{S}\otimes\ketbra{n}{n}_{B} + \ketbra{2}{2}_{S}\otimes\ketbra{n-1}{n-1}_{B} \equiv \mathbb{1}^{(n)}.
\end{equation}

After time $ t $, the system reduced state becomes
\begin{align}
	\rho_{S}(t) &= \Tr_{B}\left[U(t)\varrho_{SB}U(t)^{\dagger}\right] \label{eq:conti_evolution1}\\
	&= \left(\left[1-e^{-\beta\omega}\sin^{2}(gt)\right]p_{1}+\sin^{2}(gt)p_{2}\right)\ketbra{1}{1}_{S} + \left(e^{-\beta\omega}\sin^{2}(gt)p_{1}+\cos^{2}(gt)p_{2}\right)\ketbra{2}{2}_{S}\nonumber\\
	&\equiv q_{1}(t)\ketbra{1}{1}_{S} + q_{2}(t)\ketbra{2}{2}_{S}. \nonumber
\end{align}
Since the reduced state on $ S $ is quasi-classical at all time, we focus on the population vector $ \qstate(t) = (q_{1}(t),q_{2}(t)) $, which can be written as a elementary thermal operation channel
\begin{equation}\label{eq:conti_evolution2}
	\qstate(t) = \Mswap^{(1,2)}_{\lambda(t)}\pstate,
\end{equation}
with $ \lambda(t) = \sin^{2}(gt) $. Eq.~\eqref{eq:conti_evolution2} specifies the continuous evolution during the two-level swap process, achieving the $ \beta $-swap at $ gt = \pi/2 $.  

%
%Defining
%\begin{equation}\label{key}
%	\rho_{S}(t) = q_{1}(t)\ketbra{1}{1}_{S} + 
%\end{equation}
%This transition can be written as an application of $ \Mswap_{\lambda}^{(1,2)} $ by recalling $ \Delta \equiv e^{-\beta\omega} $ and putting $ \lambda = \sin^{2}(gt) $. When $ gt = \pi/2 $, $ \beta $-swap is achieved. 

\section{Elementary thermal operations for $d$-dimensional systems}\label{appendix:ddim_ETO}

\subsection{Basic transformation of thermomajorisation curves after a $\beta$-swap}\label{appendix:therm_curves_bswaps}

Given a system characterised by Hamiltonian $ H $, we denote its thermal state as $ \tbeta $, and describe the initial state with respect to its energy population vector $ \pstate $. Under a $ \beta $-swap $\beta^{(k,l)}$ as defined in Eq.~\eqref{eq:beta_jk}, 
\begin{align}\label{eq:betaswap}
	\begin{pmatrix}
		q_k\\q_l
	\end{pmatrix} &\equiv \beta^{(k,l)}\begin{pmatrix}
		p_{k}\\p_{l}
	\end{pmatrix}
	= \begin{pmatrix}
		(1-\Delta_{kl})p_{k}+p_{l} \\ \Delta_{kl}p_{k}
	\end{pmatrix},\ \text{and }	q_m = p_m\ \forall m\neq k,l, 
\end{align}
when $E_k\leq E_l$.
Then the element-wise ratios, $ \slope(\qstate) $ as defined in Eq.~\eqref{eq:slopes}, are transformed accordingly
\begin{align}
	\slope(\qstate)_{k} &= (1-\Delta_{kl})\slope(\pstate)_{k}+ \Delta_{kl}\slope(\pstate)_{l}\label{eq:slope_after_swap1},\\
	\slope(\qstate)_{l} &= \slope(\pstate)_{k},\label{eq:slope_after_swap2}\\
	\slope(\qstate)_{m} &= \slope(\pstate)_{m},\ \forall m\neq k,l\label{eq:slope_after_swap3},
\end{align}
where 
\begin{equation}
	\begin{cases}
		\slope(\pstate)_k\geq\slope(\qstate)_k\geq\slope(\pstate)_l, \quad \text{if}\quad \slope(\pstate)_k\geq\slope(\pstate)_l,\\
		\slope(\pstate)_k\leq\slope(\qstate)_k\leq\slope(\pstate)_l,\quad \text{if}\quad \slope(\pstate)_k\leq\slope(\pstate)_l.
	\end{cases}\label{eq:slope_after_swap4}
\end{equation}
Furthermore, equalities for the above equations hold only under the following circumstances:
	\begin{align}
		\slope(\pstate)_k=\slope(\pstate)_l\qquad &\implies \qquad 	\slope(\pstate)_k = \slope(\qstate)_k = \slope(\pstate)_l, \\
		E_k=E_l \qquad&\implies\qquad  \slope(\qstate)_k = \slope(\pstate)_l.
\end{align}  
Naturally, the $ \beta $-swap operations also alter $ \beta $-orderings of states.
%At this point it is helpful to observe how $ \beta $-ordering is transformed under $ \beta $-swaps. 
Let us denote the initial $\beta$-order as $\border(\pstate) = (\pi_1,\cdots,\pi_d)$. 
%$ \beta $-order after the swap $ \border(\qstate) $ can be easily determined when $k = \pi_i$ and $l = \pi_{i \pm 1}$ for some $i$, that is if $\beta^{(k,l)}$ is a neighbouring swap for a state $\pstate$:
If $k = \pi_i$ and $l = \pi_{i \pm 1}$ for some $i$, that is if $\beta^{(k,l)}$ is a neighbouring swap for a state $\pstate$, then $\border(\qstate)$ can be easily determined: 
	\begin{align}\label{eq:order_after_swap1}
		\border(\qstate)= S_{i,i \pm 1} (\border(\pstate)),
			%\border(\qstate)_{i+1} = \pi_i,\ \border(\qstate)_i = \pi_{i+1},\ 
		%	\border(\qstate)_h = \pi_h,\ \forall h\neq i,i+1.
	\end{align}
where $ S_{i,j} $ is a swap between $ i $'th and $ j $'th elements as introduced in Lemma~\ref{lemma:order_swap_lemma}.

\subsection{Useful technical remarks}\label{appendix:sec2_misc}

Here we present sundry remarks on $\beta$-swaps that are utilised in proofs of lemmas and theorems in the later part of the appendix. These results hold as equalities in the channel level and do not depend on the states these channels are acting on.

\begin{remark}\label{rmk:no_repetition}
	The $ \beta $-swap series $\beta^{(k,l)}\beta^{(k,l)}=\Mswap_{\lambda}^{(k,l)}$ for some $\lambda\neq 0,1$, and thus always produces a non-extreme point except for in the trivial case where $E_k = E_l$. In that trivial case, two repeated swaps always result in identity, and $ \lambda=0 $.
	%(since the swap operation in Eq.~\eqref{eq:Mswap_def} reduces to identity).
\end{remark}
%From Eq.~\eqref{eq:Mswap_def}, $\beta^{(k,l)}\beta^{(k,l)} = \Mswap_{(1-\Delta_{kl})}^{(k,l)}$, which is not a $\beta$-swap. When $E_k=E_l$, this becomes an identity map.

\begin{remark}\label{rmk:degenerate_swap}
	If $E_k=E_l$ for some levels $k,l$, $\beta^{(k,l)}$ always connect one extreme point of $\clSeto$ to another.
\end{remark}

\begin{proof}
	If this statement does not hold, there exists a state $\qstate$ that is extremal for $\clSeto(\pstate)$ but $\beta^{(k,l)}\qstate = \sum_i\lambda_i \rstate_i$ for some $\lambda_i\geq 0$ and $\rstate_i\in\clSeto(\pstate)$. But then $\qstate = (\beta^{(k,l)})^2\qstate = \sum_i\lambda_i\beta^{(k,l)}\rstate_i$. Since $\beta^{(k,l)}\rstate_i\in\clSeto(\pstate)$, $\qstate$ cannot be extremal.  
\end{proof}

\begin{remark}\label{rmk:commute}
	When $k,l,m,n$ are all distinct, $ \beta $-swaps commute, i.e.
	\begin{equation}
		\beta^{(k,l)}\beta^{(m,n)} = \beta^{(m,n)}\beta^{(k,l)}.
	\end{equation}
\end{remark}
%This remark follows from the fact that 
%\begin{align}
%	\beta^{(k,l)} &= \beta^{(k,l)}\otimes\mathbb{1}_{(m,n)}\otimes\mathbb{1}_{\setminus(k,l,m,n)},\\
%	\beta^{(m,n)} &= \mathbb{1}_{(k,l)}\otimes\beta^{(m,n)}\otimes\mathbb{1}_{\setminus(k,l,m,n)}.
%\end{align}

\begin{remark}\label{rmk:swap_series_eq}
	If $E_{k}\leq E_{l}\leq E_{m}$, 
	\begin{equation}
		\beta^{(k,l)}\beta^{(k,m)}\beta^{(l,m)} = \beta^{(l,m)}\beta^{(k,m)}\beta^{(k,l)}.
	\end{equation}
\end{remark}
The equality is obtained through direct calculations,
\begin{align}
	\beta^{(k,l)}\beta^{(k,m)}\beta^{(l,m)} = \beta^{(l,m)}\beta^{((k,m)}\beta^{(k,l)}
	= \begin{pmatrix}
		(1-\Delta_{kl})(1-\Delta_{km}) & 1-\Delta_{km} & 1\\
		\Delta_{kl}(1-\Delta_{km}) & \Delta_{km} & 0\\
		\Delta_{km} & 0 & 0
	\end{pmatrix},
\end{align}
where we omit identities acting on irrelevant levels when writing ETO maps (and do so consistently in the rest of the appendix for notational brevity).

\subsection{Proof of Lemma \ref{lemma:neighbouring_TOext} }\label{app:proof_lem3}
\begin{proof}
For the first part of the Lemma: by  Eq.~\eqref{eq:order_after_swap1}, a single neighbouring swap $ \beta^{\pi_{j},\pi_{j+1}} $gives us 
\begin{align}
	\sum_{i=1}^{m}\tbeta_{\pi_i} = \sum_{i=1}^{m}\tbeta_{\border(\qstate)_i}, \quad \forall m\neq j,
\end{align}
which, combined with Eqs.~\eqref{eq:slope_after_swap1}--\eqref{eq:slope_after_swap3}, yields 
\begin{equation}	
	\mathcal{L}_\qstate(a_m) = \mathcal{L}_\pstate(a_m),\quad a_m\equiv\sum_{i=1}^{m}\tbeta_{\border(\qstate)_i},\quad m = 1,2,\cdots,d.
%	\mathcal{L}_\qstate\left(\sum_{j=1}^{m}\tbeta_{\border(\qstate)_j}\right)= \mathcal{L}_\pstate\left(\sum_{j=1}^{m}\tbeta_{\border(\qstate)_j}\right), \quad \forall m,
\end{equation}
Each $ (a_m,\mathcal{L}_\qstate(a_{m})) $ corresponds to an elbow point of $ \mathcal{L}_\qstate $, which coincides with $ \mathcal{L}_\pstate $, implying that $\qstate$ is tightly thermomajorised by $\pstate$.

For the second case when $\beta^{(\pi_{i},\pi_{i+c})}$ acts on non-neighbouring levels ($ c>1 $) in $\pstate$, the change in $ \border(\qstate) $ becomes a little less straightforward. That is, 
\begin{equation}
	\border(\qstate) \neq S_{i,i+c}(\border(\pstate)),
\end{equation}
in general.
First suppose $ E_{\pi_{i}} < E_{\pi_{i+c}}$. Then 
\begin{equation}
	\slope(\qstate)_{\pi_{i+c}} = \slope(\pstate)_{\pi_{i}}> \slope(\qstate)_{\pi_{i}} = (1-\Delta_{\pi_{i}\pi_{i+c}})\slope(\pstate)_{\pi_{i}} + \Delta_{\pi_{i}\pi_{i+c}}\slope(\pstate)_{\pi_{i+c}} > \slope(\pstate)_{\pi_{i+c}},\label{eq:slope_after_nnswap}
\end{equation}
which implies $ \pi_{i+c} = \border(\qstate)_{i} $ and $ \pi_{i} = \border(\qstate)_{i+c^{\prime}} $ with $ 1\leq c^{\prime}\leq c $.
Here we imposed  
\begin{equation}\label{eq:strict_concavity}
	\slope(\pstate)_{\pi_{i}}>\slope(\pstate)_{\pi_{i+1}}\geq \slope(\pstate)_{\pi_{i+c-1}}>\slope(\pstate)_{\pi_{i+c}},
\end{equation}  
to make $ \pi_{i} $ and $ \pi_{i+c} $ truly non-neighbouring.
%For $a\leq  a_i$ or $ a\geq a_{i+c} $, $\mathcal{L}_\pstate(a) = \mathcal{L}_\qstate(a)$ from Eq.~\eqref{eq:slope_after_nnswap}. 
From the first equality of Eq.~\eqref{eq:slope_after_nnswap}, $ \mathcal{L}_{\qstate}(a_{i}) = \mathcal{L}_{\pstate}(a_{i}) $ follows. However, $ (i+1) $'th elbow for $ \mathcal{L}_\qstate $ is strictly separated from $ \mathcal{L}_\pstate $, which is sufficient to prove the second part of the Lemma. We demonstrate this for each possible case:\\

\emph{Case i:} $a_{i+1} - a_{i-1} \leq \tbeta_{\pi_{i}}$. Since $\slope(\pstate)_{\pi_{i}}>\slope(\qstate)_{\border(\qstate)_{i+1}}$,
\begin{equation}
	\mathcal{L}_\pstate(a_{i+1}) = \mathcal{L}_\pstate(a_i)+\slope(\pstate)_k \tbeta_{\border(\qstate)_{i+1}}>  \mathcal{L}_\pstate(a_i)+\slope(\qstate)_{\border(\qstate)_{i+1}} \tbeta_{\border(\qstate)_{i+1}} = \mathcal{L}_\qstate(a_{i+1}).
\end{equation}

\emph{Case ii-a):} $a_{i+1} - a_{i-1} > \tbeta_{\pi_{i}}$ and $c^{\prime}\neq1$. Then $ \border(\qstate)_{i+1} = \pi_{i+1}$. Using $\slope(\pstate)_{\pi_{i}} > \slope(\pstate)_{\pi_{i+1}}$,
\begin{align}
	\mathcal{L}_\pstate(a_{i+1}) &= \mathcal{L}_\pstate(a_i)+\slope(\pstate)_{\pi_{i}} (\tbeta_{\pi_{i}}-\tbeta_{\pi_{i+c}}) + \slope(\pstate)_{\pi_{i+1}} (\tbeta_{\pi_{i+1}} - \tbeta_{\pi_{i}} + \tbeta_{\pi_{i+c}})\\
	&= \mathcal{L}_\pstate(a_i)+(\slope(\pstate)_{\pi_{i}} -\slope(\pstate)_{\pi_{i+1}}) (\tbeta_{\pi_{i}}-\tbeta_{\pi_{i+c}}) + \slope(\pstate)_{\pi_{i+1}} \tbeta_{\pi_{i+1}}\nonumber\\
	&>  \mathcal{L}_\pstate(a_i)+\slope(\pstate)_{\pi_{i+1}} \tbeta_{\pi_{i+1}} = \mathcal{L}_\qstate(a_{i+1}).\nonumber
\end{align}

\emph{Case ii-b):} $ \border(\qstate)_{i+1} = \pi_{i} $, i.e. $a_{i+1} = a_i + \tbeta_{\pi_{i}}$. From $p_{\pi_{i}}+p_{\pi_{i+c}} = q_{\pi_{i}}+q_{\pi_{i+c}}$,
\begin{align}
	\mathcal{L}_\qstate(a_{i+1}) = \mathcal{L}_\pstate(a_{i-1}) + p_l + p_k = \mathcal{L}_\pstate(a_{i-1}+\tbeta_{\pi_{i}}) +
	\slope(\pstate)_{\pi_{i+c}}\tbeta_{\pi_{i+c}}. 
\end{align}
\hspace{2.2cm} The strict concavity of $\mathcal{L}_\pstate$ (Eq.~\eqref{eq:strict_concavity}) imposes 
\begin{align}
	\frac{\mathcal{L}_\pstate(a_{i+1}) - \mathcal{L}_\pstate(a_{i-1}+\tbeta_{\pi_{i}})}{\tbeta_{\pi_{i+c}}} > \frac{\mathcal{L}_\pstate(a_{i+c}) - \mathcal{L}_\pstate(a_{i+c} - \tbeta_{\pi_{i+c}})}{\tbeta_{\pi_{i+c}}} = \slope(\pstate)_{\pi_{i+c}},
\end{align}
\hspace{2.2cm} which leads to $\mathcal{L}_\pstate(a_{i+1})> \mathcal{L}_\qstate(a_{i+1})$.\\

The argument above is easily generalizable for $E_{\pi_{i}}>E_{\pi_{i+c}}$, so this concludes the proof.
\end{proof}

\subsection{Proof of Lemma~\ref{lemma:order_swap_lemma}}\label{appendix:order_swap_lemma}
Here, we want to prove that a $ \beta $-swap involving two levels $ E_j,E_k $ produces an extreme point only if its sole effect on the $ \beta $-ordering on the final state is a swap of $ j $ and $ k $. This technical result is later used in establishing Lemma \ref{rmk:only_one_nn}, a key tool used throughout in several subsequent proofs. We start by proving the lemma for the case of $\pstate\in\probspace^3$.
Denote the initial $\beta$-order as $\border(\pstate) = (\pi_1,\pi_2,\pi_3)$. Consider the following three cases:
\begin{enumerate}
	\item Suppose $E_{\pi_1} < E_{\pi_3}$ and $\border(\qstate) = (\pi_3,\pi_1,\pi_2)$, where $\qstate = \beta^{(\pi_1,\pi_3)}\pstate$. The same final ordering is obtained after two neighbouring swaps, i.e. 
	\begin{equation}\label{key}
		\qstate'= \beta^{(\pi_1,\pi_3)}\beta^{(\pi_2,\pi_3)}\pstate, \qquad \pi(\qstate') =  \pi(\qstate).
	\end{equation}
	Note that $q_{\pi_3} = q'_{\pi_3}$~(this is seen from Eq.~\eqref{eq:betaswap} and \eqref{eq:Deltas}). Therefore, the thermomajorisation curves $  \mathcal{L}_\qstate,\mathcal{L}_{\qstate'}$ are identical up to the first elbow
	\begin{equation}
		\mathcal{L}_\qstate(x) = \mathcal{L}_{\qstate'}(x), \qquad x\in[0,\tbeta_{\pi_3}].
		%\mathcal{L}_\qstate(\tbeta_{\pi_3}) = \mathcal{L}_{\qstate'}(\tbeta_{\pi_3}).
	\end{equation}
	The third elbow also coincides as $\mathcal{L}_\qstate(1) = \mathcal{L}_{\qstate'}(1) = 1$. Finally, the second elbow points of $\qstate$ and $\qstate'$ curves are given by 
	\begin{equation}\label{eq:lemma6_proof_eq1}
		\mathcal{L}_\qstate(1-\tbeta_{\pi_2}) = 1 - p_{\pi_2} < 1 - (\beta^{(\pi_2,\pi_3)}\pstate)_{\pi_2}= \mathcal{L}_{\qstate'}(1-\tbeta_{\pi_2}), 
	\end{equation} 
	from $q'_{\pi_2} = (\beta^{(\pi_2,\pi_3)}\pstate)_{\pi_2}<p_{\pi_2}$. Therefore, $\qstate'$ strictly thermomajorises $\qstate$ and the non-extremality of $\qstate$ then follows from Lemma~\ref{lemma:same_border}. 
	\item For $E_{\pi_3} > E_{\pi_1}$, comparison between $\qstate$ with $\border(\qstate) = (\pi_2,\pi_3,\pi_1)$ and $\qstate^{\prime} = \beta^{(\pi_3,\pi_1)}\beta^{(\pi_1,\pi_2)}\pstate$ gives the same result.
	\item If $E_{\pi_3} = E_{\pi_1}$, we always get $ \border(\qstate) = (\pi_3,\pi_2,\pi_1) $.
\end{enumerate}
Note that from Eqs.~\eqref{eq:slope_after_swap1}--\eqref{eq:slope_after_swap4}, cases 1 and 2 cover all possible ways of obtaining $ \border(\beta^{\pi_{1},\pi_{3}}\pstate) \neq S_{1,3}(\border(\pstate)) $.

For the general case of $\pstate\in\probspace^d$, if $\border(\qstate =\beta^{(\pi_i,\pi_{i+c})}\pstate) \neq S_{i,i+c}(\border(\pstate))$, then the equivalent of $ \qstate' $ above can be chosen as follows:

\begin{enumerate}
	\item If $E_{\pi_i}<E_{\pi_{i+c}}$ and $ \pi_{i}\neq \border(\qstate)_{i+c} $, then $ \qstate^{\prime} = \beta^{(\pi_i,\pi_{i+c})}\beta^{(\pi_{i+c-1},\pi_{i+c})}\pstate \succ_\beta \qstate$.
	Although $ \border(\qstate)\neq\border(\qstate^{\prime}) $ in general, $ \qstate $ can always be obtained from $ \qstate^{\prime} $ by partial level thermalization between $ \pi_{i} $ and $ \pi_{i+c-1} $. To see this, notice that $ q_{\pi_{k}} = q^{\prime}_{\pi_{k}}$,  $\forall k\neq i,i+c-1 $, i.e. 
	\begin{equation}\label{eq:lemma6_proof_eq2}
		q_{\pi_{i}} + q_{\pi_{i+c-1}} = q^{\prime}_{\pi_{i}} + q^{\prime}_{\pi_{i+c-1}}.
	\end{equation}  
	Using the same argument to Eq.~\eqref{eq:lemma6_proof_eq1}, $ q^{\prime}_{\pi_{i}}>q_{\pi_{i}} $, and thus
	\begin{align}
		\slope(\qstate^{\prime})_{\pi_{i}} &> \slope(\qstate)_{\pi_{i}} > \slope(\qstate)_{\pi_{i+c-1}} > \slope(\qstate^{\prime})_{\pi_{i+c-1}}.
	\end{align}
	Combining with Eq.~\eqref{eq:lemma6_proof_eq2}, $ \qstate $ is obtained $ \qstate^{\prime} $ by partial thermalization between levels $ i $ and $ i+c-1 $. 
	 
	\item If $E_{\pi_i}>E_{\pi_{i+c}}$ and $ \pi_{i+c}\neq\border(\qstate)_{i} $, then $ \qstate^{\prime} = \beta^{(\pi_{i+c},\pi_{i})}\beta^{(\pi_{i},\pi_{i+1})}\pstate \succ_\beta \qstate$. Likewise, $ q_{\pi_{k}} = q^{\prime}_{\pi_{k}}$,  $\forall k\neq i+1,i+c $, and $ q_{\pi_{i+1}} + q_{\pi_{i+c}} = q^{\prime}_{\pi_{i+1}} + q^{\prime}_{\pi_{i+c}}$. With 
	\begin{align}
		\slope(\qstate^{\prime})_{\pi_{i+1}} &> \slope(\qstate)_{\pi_{i+1}} > \slope(\qstate)_{\pi_{i+c}} > \slope(\qstate^{\prime})_{\pi_{i+c}},
	\end{align}
	$ \qstate $ is obtained $ \qstate^{\prime} $ by partial thermalization between levels $ i+1 $ and $ i+c $.
	
	\item If $ E_{\pi_{i}} = E_{\pi_{i+c}} $, we always get $ \pi_{i} = \border(\qstate)_{i+c} $.
\end{enumerate}
Therefore, the state $ \qstate $ with order $\border(\qstate) \neq S_{i,i+c}(\border(\pstate))$ is always non-extremal in $ \clSeto(\pstate) $.

\subsection{Each $ \beta $-order has at least one $ \clSeto $ vertice: proof of Lemma~\ref{lemma:ext_point_for_each_order}}\label{appendix:proof_ext_point_for_each_order}

In this section, we show that it is impossible to have a $ \beta $-order that has no vertice of $ \clSeto $. Lemma \ref{lemma:ext_point_for_each_order} hints the lower bound scaling of the number of extreme points in worst cases. 
%The second remark shows the ramification of including convex combination of points in the definition of $ \clSeto $.

%\begin{lemma}\label{lemma:ext_point_for_each_order}
%	For any initial state $ \pstate\in\probspace^d $, the reachable state set $ \clSeto(\pstate) $ has at least one extreme point $ \rstate $ having $ \border(\rstate) = \psi $ for any ordering $ \psi $. 
%\end{lemma}
\begin{proof}
To start, we construct a series of sets $ S_d\subset S_{d-1}\subset\cdots\subset S_0 $ defined as follows:
	\begin{itemize}
		\item $ S_0 = \{\qstate\vert \qstate\in\clSeto(\pstate)\ \text{and}\ \border(\qstate) = \psi\} $, 
		\item $ S_j = \{\qstate\vert\qstate\in S_{j-1}\ \text{and}\ q_{\psi_j}\geq q^\prime_{\psi_j}, \forall \qstate^\prime\in S_{j-1}\} $ for $ 1\leq j\leq d $.
	\end{itemize}
	In other words, $ S_1 $ is the set of states in $ \clSeto(\pstate) $ with a specific $ \beta $-order $ \psi $, and with maximal $ \psi_1 $ population $ r_{\psi_1} $. Likewise, $ S_2 $ is a subset of $ S_1 $, having \textit{additionally} the maximal $ \psi_2 $ population, and so on. Note that $ S_d $ always has a single element for each fixed choice of $ \psi $, which we denote as $ \rstate $~\footnote{It should be noted at this point that $ \rstate $ may not be the only extreme point that has $ \beta $-ordering $ \psi $ -- there could be states $ \rstate' $ of the same $ \beta$-order, where the first elbow is lower that of $ \rstate $, and the second elbow higher. Such states are not, however, contained in $ S_d $.}. 
	
	Suppose there is no extreme point of $ \clSeto(\pstate) $ corresponding to $ \psi $. Then $ \rstate $ can be written as a strict convex combination of extreme states $ \estate^{(i)} $, i.e.
	\begin{equation}\label{key}
	\rstate = \sum_i 	p_i	\estate^{(i)} , \qquad p_i \in (0,1).
	\end{equation}
Starting from $ j=1 $, check the following:
	\begin{enumerate}
		\item For $ j > 1 $, we have $ e^{(i)}_{\psi_k} = r_{\psi_{k}} $, $ \forall k< j $ from the last iteration. For $ j=1 $, we do not need any condition yet. 
		
		\item If $ e^{(i)}_{\psi_j}>r_{\psi_j} $ for some $ i $, a state $ \estate^\prime = \vec\beta\estate^{(i)} $ with $ \border(\estate^\prime) = \psi $ and $ e^\prime_{\psi_k} = e^{(i)}_{\psi_k} $, $ \forall k\leq j $ can be found. Let us show how to do this:
		\begin{itemize}
			\item If $ \slope(\estate^{(i)})_{\psi_{j-1}} = \slope(\rstate)_{\psi_{j-1}} \geq \slope(\estate^{(i)})_{\psi_{j}} >\slope(\rstate)_{\psi_{j}}$, we can simply thermalise all the levels $ \psi_{k} $, $ \forall k>j $ of $ \estate^{(i)} $ to have the same slope, which is smaller than $ \slope(\estate^{(i)})_{\psi_{j}} $. Then we obtain the desired state $ \estate^{\prime} $, since levels with degenerate slopes -- all $ \psi_{k>j} $ in this case -- can be permuted within themselves in the $ \beta $-order. 
			\item If $ \slope(\estate^{(i)})_{\psi_{j}}> \slope(\rstate)_{\psi_{j-1}} >\slope(\rstate)_{\psi_{j}}$\footnote{Requiring $ \slope(\rstate)_{\psi_{j-1}} >\slope(\rstate)_{\psi_{j}} $ is always possible by putting all the levels having the same slope $ \slope(\rstate)_{\psi_{j}} $ to come after $ \psi_{j} $ in the order $ \psi $.}, we can first reduce $ e^{(i)}_{\psi_{j}} $ by partially thermalizing with populations of levels $ \psi_{k>j} $ until $ \slope(\rstate)_{\psi_{j-1}} \geq e^{\prime}_{\psi_{j}} > \slope(\rstate)_{\psi_{j}} $. Then, as in the previous case, thermalizing all the levels $ \psi_{k>j} $ will give $ \estate^{\prime} $.
		\end{itemize}
		
		However, such $ \estate^{\prime} $ satisfies $ \estate^\prime\in S_{j-1} $ and $ e^\prime_{\psi_j}>r_{\psi_j} $, which contradicts the assumption that $ \rstate \in S_j $.
		\item If $ e^{(i)}_{\psi_j}\leq r_{\psi_j} $ for all $ i $, from convexity of the combination, $ e^{(i)}_{\psi_j}=r_{\psi_j} $ for all $ i $. 
		Proceed to $ j\rightarrow j+1 $.
	\end{enumerate}
If $ e^{(i)}_{\psi_j} = r_{\psi_j} $ for all $ i $ and $ j $, $ \estate^{(i)} = \rstate $, which contradicts the assumption that $ \rstate $ is not extremal.

\end{proof}
%However, the above lemma does not imply that there always exists a distinct extreme point for each $ \beta $-order. For some initial state, a single state $ \rstate $ is the only extreme point for multiple orders, e.g. $ \tau^\beta $ is an only extreme point for $ \clSeto(\tau^\beta) $ for all orders.  

\section{Full characterisation of elementary thermal operations for $d=3$ and the proof of Theorem.~\ref{thm:3dETO}}\label{appendix:3dETO_lemmas}
%This appendix contains the proofs of statements made in Sec. \ref{sec:3d_ETO}.
%
%
%
%\subsection{Proof of Theorem~\ref{thm:3dETO}}
To prove Thm.~\ref{thm:3dETO}, we first establish three preliminary results for $ d=3 $. 
\begin{itemize}
	\item Remark~\ref{rmk:two_swap_TO_ext} identifies two-neighbouring-swap series that generate $\clSto$ extreme points, which are uniquely extremal for $\clSeto$ in their orders by Cor.~\ref{corollary:TO-ETO_ext}. This remark is used repeatedly to prove the other two lemmas.
	\item Lemma~\ref{rmk:only_one_nn} shows that non-neighbouring swaps cannot be used in series with more than one swap. 
	\item Lemma~\ref{rmk:len_max_3} sets the maximum length of swap series to be three. 
\end{itemize}
The set of candidates for extreme points is achieved after ruling out all the swaps yielding provably non-extreme states.

\begin{remark}\label{rmk:two_swap_TO_ext}
	For $\pstate\in\probspace^3$, 
	\begin{enumerate}
		\item the states $\beta^{(2,3)}\beta^{(1,2)}\pstate$ and  $\beta^{(2,3)}\beta^{(1,3)}\pstate $  are extreme points of $\clSto(\pstate)$, if $\border(\pstate) = (2,1,3)$ or $(3,1,2)$;
		\item the state $\beta^{(1,3)}\beta^{(2,3)}\pstate$ is an extreme point of $\clSto(\pstate)$, if $\border(\pstate) = (1,2,3)$ or $(3,2,1)$; and
		\item the state $\beta^{(1,2)}\beta^{(2,3)}\pstate$ is an extreme point of $\clSto(\pstate)$, if $\border(\pstate) = (1,3,2)$ or $(2,3,1)$.
	\end{enumerate}  
\end{remark}

\begin{proof}
	Direct calculation gives
	\begin{align}
		\beta^{(2,3)}\beta^{(1,2)} = \begin{pmatrix} 1-\Delta_{12} & 1 & 0\\ \Delta_{12} - \Delta_{13} & 0 & 1\\ \Delta_{13} & 0 & 0\end{pmatrix}, \quad
		\beta^{(2,3)}\beta^{(1,3)} = \begin{pmatrix} 1-\Delta_{13} & 0 & 1\\ \Delta_{13} & 1-\Delta_{23} & 0\\ 0 & \Delta_{23} & 0\end{pmatrix},\label{eq:ext_biplanar_matrices}\\
		\beta^{(1,2)}\beta^{(2,3)} = \begin{pmatrix} 1-\Delta_{12} & 1-\Delta_{23} & 1\\ \Delta_{12} & 0 & 0\\ 0 & \Delta_{23} & 0\end{pmatrix}, \quad
		\beta^{(1,3)}\beta^{(2,3)} = \begin{pmatrix} 1-\Delta_{13} & \Delta_{23} & 0\\ 0 & 1-\Delta_{23} & 1\\ \Delta_{13} & 0 & 0\end{pmatrix}.\nonumber
	\end{align}
	By using the algorithm in Def. 6 of \cite{Mazurek19_channels}, one can verify that the above channels are \emph{biplanar extreme points} of the set of thermal processes. 
	According to Thm. 4 in \cite{Mazurek19_channels}, such biplanar extremal channels generate extreme points of $\clSto(\pstate)$, when the initial state corresponds to a particular $ \beta $-ordering that can be found in the process of decomposing the graph structure of the channel matrix. Performing this procedure according to \cite{Mazurek19_channels} reveals that for $ \beta^{(2,3)}\beta^{(1,2)}  $ and $ \beta^{(2,3)}\beta^{(1,3)} $, the relevant input state $ \beta $-order is given by $ (2,1,3) $ and $ (3,1,2) $; similarly for statements 2 \& 3 in remark.
\end{proof}

\begin{lemma}\label{rmk:only_one_nn}
	
	Given $\pstate\in\probspace^3$ with $\border(\pstate) = (\pi_1,\pi_2,\pi_3)$, $\vec{\beta}\pstate$ is extremal for $\clSeto(\pstate)$ only if i) $\vec{\beta}$ is always neighbouring when applied to $\pstate$ or ii) $\vec{\beta} = \beta^{(\pi_1,\pi_3)}$.
	
\end{lemma}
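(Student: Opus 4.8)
The goal is to rule out, as generators of extremal points of $\clSeto(\pstate)$, every $\beta$-swap series that is not all-neighbouring on $\pstate$ and is not the single swap $\beta^{(\pi_1,\pi_3)}$. Two reductions set things up. By Remark~\ref{rmk:no_repetition} I may assume the series contains no two consecutive identical swaps, and by Remark~\ref{rmk:degenerate_swap} the degenerate-energy situations can be peeled off. The workhorse is an elementary propagation fact: each $\beta$-swap is an invertible linear map on population vectors (its non-trivial $2\times 2$ block has determinant $-\Delta_{kl}\ne 0$), so if some prefix $\vec\gamma\pstate$ of the series is already a proper convex combination $\lambda\rstate_1+(1-\lambda)\rstate_2$ of two distinct states of $\clSeto(\pstate)$, then for any continuation $\vec\delta$ the state $\vec\delta\vec\gamma\pstate=\lambda\vec\delta\rstate_1+(1-\lambda)\vec\delta\rstate_2$ is again such a combination, hence non-extremal. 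It therefore suffices to exhibit one non-extremal prefix.

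First I would single out the first swap of the series that is non-neighbouring with respect to the running $\beta$-order; say it acts on the intermediate state $\rstate$, reached from $\pstate$ by a (necessarily neighbouring) prefix. If $\rstate$ is itself non-extremal in $\clSeto(\pstate)$ we are done by the propagation fact, so assume $\rstate$ is extremal, with $\border(\rstate)=(\psi_1,\psi_2,\psi_3)$; in $d=3$ the unique non-neighbouring swap available is $\beta^{(\psi_1,\psi_3)}$. By Lemma~\ref{lemma:order_swap_lemma}, and more precisely by the explicit construction inside its proof, unless $\border(\beta^{(\psi_1,\psi_3)}\rstate)=(\psi_3,\psi_2,\psi_1)$ the state $\beta^{(\psi_1,\psi_3)}\rstate$ is obtained from a two-neighbouring-swap state of $\rstate$ by a non-trivial partial level thermalization, hence is a proper convex combination of two states of $\clSeto(\rstate)\subseteq\clSeto(\pstate)$, hence non-extremal; the propagation fact then finishes. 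The entire difficulty collapses onto the ``clean'' case, in which $\beta^{(\psi_1,\psi_3)}$ realizes exactly the reversal $(\psi_1,\psi_2,\psi_3)\mapsto(\psi_3,\psi_2,\psi_1)$.

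For the clean case the plan is to compare $\beta^{(\psi_1,\psi_3)}\rstate$, and its images under the remaining swaps, against a tightly thermo-majorized state of the reversed order. Remark~\ref{rmk:two_swap_TO_ext} supplies, order by order, explicit short neighbouring-swap series producing extremal points of $\clSto$, which by Corollary~\ref{corollary:TO-ETO_ext} are the \emph{unique} extremal points of $\clSeto$ in their $\beta$-orders, and the channel identities of Remarks~\ref{rmk:commute} and~\ref{rmk:swap_series_eq} are needed to see that the relevant extremal representative of the reversed order lies inside $\clSeto(\pstate)$. Since a single non-neighbouring swap is never tightly thermo-majorized by its input away from the degenerate-energy case (Lemma~\ref{lemma:neighbouring_TOext}), $\beta^{(\psi_1,\psi_3)}\rstate$ is then \emph{strictly} thermo-majorized by that witness, so Lemma~\ref{lemma:same_border} rewrites it as a composition of partial level thermalizations of the witness -- a proper convex combination of states of $\clSeto(\pstate)$ -- and non-extremality follows, propagating as before. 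One has to be attentive to the sub-case where the non-neighbouring swap is the very first one ($\rstate=\pstate$): there $\beta^{(\pi_1,\pi_3)}\pstate$ is \emph{allowed} to be extremal by the lemma, so the argument must use the remaining, non-empty part of the series together with the propagation fact rather than trying to reduce $\beta^{(\pi_1,\pi_3)}\pstate$ itself.

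I expect the clean case to be the main obstacle. The crux is to verify that the short neighbouring-swap extremal states of Remark~\ref{rmk:two_swap_TO_ext}, closed up under the identities in Appendix~\ref{appendix:sec2_misc}, really do furnish a tightly thermo-majorized state of every $\beta$-order that can arise as the reversal image of a non-neighbouring swap; this entails a somewhat delicate case run over the six possible $\beta$-orders of the intermediate state, together with the observation that the energy-monotonic orders are exactly the ones for which the reversed-order extremal point is not reached by two neighbouring swaps, so that one must instead identify it explicitly (using Remark~\ref{rmk:swap_series_eq} to cut down the candidate list). The remaining bookkeeping -- degenerate energy levels, degenerate slopes (where the $\beta$-order of $\pstate$ is ambiguous and a nominally non-neighbouring swap can be relisted as neighbouring), and the legitimacy of the ``first non-neighbouring swap'' reduction -- is routine but must be carried out carefully.
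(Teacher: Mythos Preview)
Your overall reduction --- locate the first non-neighbouring swap, use linearity/invertibility to propagate non-extremality forward, dispose of the ``non-clean'' order change via the partial-thermalization construction inside Lemma~\ref{lemma:order_swap_lemma} --- is exactly the paper's strategy. The gap is in your treatment of the clean case. You propose to compare $\beta^{(\psi_1,\psi_3)}\rstate$ against a witness of the \emph{reversed} order $(\psi_3,\psi_2,\psi_1)$ that is tightly thermo-majorized by $\rstate$ and lies in $\clSeto(\pstate)$, with Remark~\ref{rmk:two_swap_TO_ext} supplying it. But Remark~\ref{rmk:two_swap_TO_ext} never furnishes a state of the fully reversed order: for \emph{every} $\psi$ the reversal is three neighbouring swaps away, not two, so none of the two-swap $\clSto$-extremal points lands there. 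Your diagnosis that this failure is confined to the energy-monotonic orders is therefore wrong --- it fails for all six orders. Equivalently, the $\clSto(\rstate)$-extremal point of the reversed order (which is tightly majorized by $\rstate$) need not lie in $\clSeto(\rstate)$ or $\clSeto(\pstate)$, and Remarks~\ref{rmk:commute}--\ref{rmk:swap_series_eq} do not place it there.

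The paper sidesteps this by never seeking a reversed-order witness. Instead it always compares a \emph{two}-swap composite --- either (previous neighbouring swap)$\cdot$(non-neighbouring), analysed from the state one step earlier, or (non-neighbouring)$\cdot$(next swap) --- against a witness of the order reached \emph{after} those two swaps. That order is only two neighbouring swaps from the relevant starting point, so Remark~\ref{rmk:two_swap_TO_ext} frequently applies directly and Corollary~\ref{corollary:TO-ETO_ext} gives uniqueness; in the residual sub-cases (e.g.\ $\psi=(1,2,3)$ with the pair $\beta^{(1,2)}\beta^{(2,3)}$, or the lone non-neighbouring swap for energy-monotonic $\psi$) the paper writes out both states explicitly and checks $\qstate\succ_\beta\rstate$ coordinate-by-coordinate rather than invoking tight majorization. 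Your sketch would go through once you drop the reversed-order witness idea and adopt this two-step comparison, running the six cases as the paper does.
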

\begin{proof}
	To prove this, we need to show that i) a neighbouring swap following a non-neighbouring swap produces non-extreme point and ii) a non-neighbouring swap following a neighbouring one also yields a non-extreme point. Since only extreme points are of our interest, using Lemma~\ref{lemma:order_swap_lemma}, we can safely assume that $ \border(\beta^{(\pi_{i},\pi_{j})}\pstate) = S_{i,j}(\pi) $ for any $ \pi = \border(\pstate) $ and $ i,j $.
	We tackle each problem by further dividing cases.

	\emph{Case i-(a):} $\border = (1,2,3)$ or $(3,2,1)$ experiencing a neighbouring swap followed by a non-neighbouring swap. Swap $ \beta^{(2,3)}\beta^{(1,2)} $ produces final states with order $ (3,1,2) $ or $ (2,1,3) $. From Remark~\ref{rmk:two_swap_TO_ext}, these orders have a unique extreme point produced from $ \beta^{(1,3)}\beta^{(2,3)} $. Another swap gives \begin{equation}\label{key}
		\rstate = \beta^{(1,2)}\beta^{(2,3)}\pstate = \begin{pmatrix}
			(1-\Delta_{12})p_1+(1-\Delta_{23})p_2 + p_{3} \\ \Delta_{12}p_1\\ \Delta_{23}p_{2}
		\end{pmatrix},
	\end{equation} 
	whereas two consecutive neighbouring swaps give
	\begin{equation}\label{key}
		\qstate = \beta^{(1,3)}\beta^{(1,2)}\pstate = \begin{pmatrix}
			(1-\Delta_{12})(1-\Delta_{13})p_1+(1-\Delta_{13})p_2 + p_{3} \\ \Delta_{12}p_1\\ \Delta_{13}(1-\Delta_{12})p_{1}+\Delta_{13}p_{2}
		\end{pmatrix},
	\end{equation}
	with order $ (2,3,1) $ or $ (1,3,2) $ depending on the initial order. 
	Note that $ r_{2} = q_{2} $ and 
	\begin{align}
		q_{3} - r_{3} = (1-\Delta_{12})\Delta_{23}(\Delta_{12}p_{1} - p_{2}),
	\end{align}
	i.e. $ q_{3} \geq r_{3} $ if initial $ \pi = (1,2,3) $, and $ q_{3} \leq r_{3} $ if initial $ \pi = (3,2,1) $. Either way, $ \qstate\succ_\beta\rstate $ while $ \border(\qstate) = \border(\rstate) $ and thus $ \rstate $ cannot be extremal from Lemma~\ref{lemma:same_border}.
	
	\emph{Case i-(b):} $\border = (1,2,3)$ or $(3,2,1)$ experiencing a non-neighbouring swap followed by a neighbouring swap. We can in fact prove that non-neighbouring swap already always produces non-extreme points. Compare two states 
	\begin{eqnarray}
		\rstate = \beta^{(1,3)}\pstate &=& \begin{pmatrix}
			(1-\Delta_{13})p_1+p_3\\ p_2\\ \Delta_{13}p_1
		\end{pmatrix},\label{eq:non-neighbouring-non-extremal1}\\
%	\end{eqnarray} 
%	and
%	\begin{eqnarray}
		\qstate = \beta^{(2,3)}\beta^{(1,3)}\beta^{(1,2)}\pstate
		&=& \begin{pmatrix}
			(1-\Delta_{12})(1-\Delta_{13})p_1 + (1-\Delta_{13})p_2+p_3\\ \Delta_{12}(1-\Delta_{13})p_1 + \Delta_{13}p_2\\ \Delta_{13}p_1
		\end{pmatrix},\label{eq:non-neighbouring-non-extremal2}
	\end{eqnarray}
	with $\border(\rstate) = \border(\qstate) = (3,2,1)$ or $ (1,2,3) $ depending on the initial $ \pi $. 
	Then we may observe the following:
	\begin{align}
		r_3 &=q_3, \label{eq:non-neighbouring-non-extremal3}\\ 
		r_2 - q_2 &= (1-\Delta_{13})(p_2-\Delta_{12}p_1)\leq 0,  \ \text{for}\ \pi = (1,2,3), \label{eq:non-neighbouring-non-extremal4}\\
		r_2 - q_2 &= (1-\Delta_{13})(p_2-\Delta_{12}p_1)\geq 0,  \ \text{for}\ \pi = (3,2,1), \label{eq:non-neighbouring-non-extremal5}
	\end{align}
	i.e. $\qstate \succ_{\beta} \rstate$ and thus $\rstate$ is not extremal for $\clSeto(\pstate)$.

	\emph{Case ii-(a):} $\border = (1,3,2)$ or $(2,3,1)$ experiencing a neighbouring swap followed by a non-neighbouring swap. The two possible neighbouring swaps for these initial orders are $\beta^{(2,3)}$ and $\beta^{(1,3)}$. First, consider the neighbouring swap $\beta^{(2,3)}$: this modifies the order into $(1,2,3)$ or $(3,2,1)$ and \emph{Case i-(b)} forbids a non-neighbouring swap to come next. The other neighbouring and non-neighbouring swap pair $\beta^{(2,3)}\beta^{(1,3)}$ gives output orders $(2,1,3)$ or $(3,1,2)$. But Remark~\ref{rmk:two_swap_TO_ext} states that $\beta^{(1,2)}\beta^{(2,3)}\pstate$ is a unique extreme point for that output order. 
	
	\emph{Case ii-(b):} $\border = (1,3,2)$ or $(2,3,1)$ experiencing a non-neighbouring swap followed by a neighbouring swap. After a non-neighbouring swap, $\border(\beta^{(1,2)}\pstate) = (2,3,1)$ or $(1,3,2)$. If a following neighbouring swap is $\beta^{(1,3)}$, the resulting orders are $(2,1,3)$ or $(3,1,2)$, which again cannot be extremal from the Remark~\ref{rmk:two_swap_TO_ext}. The remaining possibility is to apply $\beta^{(2,3)}$, which results in 
	\begin{align}
		\rstate &= \beta^{(2,3)}\beta^{(1,2)}\pstate=\begin{pmatrix}
			(1-\Delta_{12})p_1+p_2 \\ (\Delta_{12}-\Delta_{13})p_1+p_3\\ \Delta_{13}p_1
		\end{pmatrix}.
	\end{align}
	Compare this with a state having the same $\beta$-order $ \border(\rstate) = \border(\qstate) =  (3,2,1) $ or $ (1,2,3) $,
	\begin{equation}
		\qstate = \beta^{(1,2)}\beta^{(1,3)}\pstate\label{eq:rmk6_klkm}
		=\begin{pmatrix}
			(1-\Delta_{12})(1-\Delta_{13})p_1 + p_2 + (1-\Delta_{12})p_3 \\ \Delta_{12}(1-\Delta_{13})p_1 + \Delta_{12}p_3\\ \Delta_{13}p_1
		\end{pmatrix}.
	\end{equation}
	Then 
	\begin{align}
		q_1 - r_1 &= -\Delta_{13}(1-\Delta_{12})p_1+(1-\Delta_{12})p_3\leq0,\label{eq:ineq_rmk6_3}
	\end{align}
	for initial $\pi = (2,1,3)$ and becomes positive for $(3,1,2)$. Plus, $ q_3 = r_3 $, which in turn gives $\qstate\succ_{\beta}\rstate$; thus, $\rstate$ is not extremal for $\clSeto(\pstate)$.
%	Plus, 
%	\begin{align}
%		q_3 - r_3 =0,\label{eq:ineq_rmk6_4}
%	\end{align}
%	indicating 
%	\begin{equation}
%		q_2 - r_2 \geq 0\quad \text{for} \quad \pi = (2,1,3),\quad q_2 - r_2 \leq 0\quad \text{for} \quad \pi = (3,1,2).
%	\end{equation}
%	Combining all three, 
	
	\emph{Case iii-(a):} $\border = (2,1,3)$ or $(3,1,2)$ experiencing a neighbouring swap followed by a non-neighbouring swap. If the first neighbouring swap is $\beta^{(1,2)}$, the output $ \beta $-order becomes $ (1,2,3) $ or $ (3,2,1) $, which does not allow non-neighbouring swap to follow as stated in \emph{Case i-(b)}. The other series, $\beta^{(1,2)}\beta^{(1,3)}$ outputs orders $(1,3,2)$ or $(2,3,1)$, but these orders have unique extreme points for $\clSeto$ given by Remark~\ref{rmk:two_swap_TO_ext}.
	
	\emph{Case iii-(b):} $\border = (2,1,3)$ or $(3,1,2)$ experiencing a non-neighbouring swap followed by a neighbouring swap. Two candidate series are $\beta^{(1,3)}\beta^{(2,3)}$ and $\beta^{(1,2)}\beta^{(2,3)}$, which respectively produces orders $(1,3,2)$ and $(3,2,1)$ when applied to an initial state with $\pi = (2,1,3)$; $(2,3,1)$ and $(1,2,3)$ when applied to $\pi = (3,1,2)$. All output states obtained here have different unique extreme points for $\clSeto$ given in Remark~\ref{rmk:two_swap_TO_ext}, and the states generated by the considered swaps are therefore non-extremal.
	
	We exhausted all possible cases and none of the swaps can create an extreme point of $\clSeto(\pstate)$. 
	
\end{proof}

\begin{lemma}\label{rmk:len_max_3}
	For $\pstate\in\probspace^3$ having non-degenerate energy levels, $\prod_{i=1}^{l}\beta_i\pstate$ can be extremal in $\clSeto(\pstate)$ only when $l\leq 3$. 
\end{lemma}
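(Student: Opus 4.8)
\section*{Proof proposal for Lemma~\ref{rmk:len_max_3}}

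The plan is to lean on the structural restrictions already proved. By Lemma~\ref{rmk:only_one_nn}, if $\prod_{i=1}^{l}\beta_i\pstate$ is extremal in $\clSeto(\pstate)$ and $l\geq 2$ then every $\beta_i$ is neighbouring for the state it acts on; and if the series is moreover of minimal length realising a given point, then no two consecutive swaps are identical, since $\beta^{(k,l)}\beta^{(k,l)}=\Mswap^{(k,l)}_\lambda=(1-\lambda)\mathbb{1}+\lambda\beta^{(k,l)}$ (Remark~\ref{rmk:no_repetition}) would write the state as a convex combination of shorter-series states, forcing it to be either non-extremal or reachable by a shorter series. A minimal series of length $\geq 2$ therefore traces a non-backtracking walk on the Cayley graph of $S_3$ with adjacent transpositions -- a hexagon -- which, since every vertex has degree two, is fully forced after its first step, circulates in a fixed direction, and has swap \emph{types} (labelled by energy eigenstates) recurring with period three through all of $\beta^{(1,2)},\beta^{(1,3)},\beta^{(2,3)}$. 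First I would reduce the lemma to the claim that \emph{any length-$4$ such walk, applied to any qutrit state $\sigma$ with non-degenerate energies, produces a state that is either non-extremal in $\clSeto(\sigma)$ or reachable from $\sigma$ by at most three swaps.} Granting this, take an extremal point $\qstate$ of $\clSeto(\pstate)$, realised (via Theorem~\ref{thm:ETO_cone}) by a swap series of minimal length $l$; if $l\geq 4$, apply the claim to the last four swaps acting on $\sigma=\prod_{i=1}^{l-4}\beta_i\pstate$, and since $\clSeto(\sigma)\subseteq\clSeto(\pstate)$ this makes $\qstate$ either non-extremal in $\clSeto(\pstate)$ or reachable by at most $l-1$ swaps -- both contradictions -- so $l\leq 3$.

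For the length-$4$ claim I would split on whether the ``diagonal'' swap $\beta^{(1,3)}$ -- the unique swap that is \emph{non}-neighbouring at the two energy-monotonic orders, hence neighbouring exactly at the four non-monotonic orders -- occurs as the first (equivalently fourth) swap type of the walk. If it does not, it is the second or third swap type, so one of the two contiguous length-$3$ subproducts of the length-$4$ channel carries $\beta^{(1,3)}$ in the middle; Remark~\ref{rmk:swap_series_eq} then interchanges the two swaps flanking it, after which the walk's first swap becomes adjacent to a copy of itself, and collapsing that pair with $\Mswap^{(a,b)}_\lambda=(1-\lambda)\mathbb{1}+\lambda\beta^{(a,b)}$ exhibits the length-$4$ image of $\sigma$ as a convex combination of a two-swap and a three-swap image of $\sigma$, both in $\clSeto(\sigma)$ -- hence non-extremal unless it equals one of them. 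If $\beta^{(1,3)}$ is the first swap type, then $\sigma$ has a non-monotonic order and the walk terminates at the order lying at hexagon-distance two from $\border(\sigma)$; here I would, in each of the four cases, use Remark~\ref{rmk:two_swap_TO_ext} to produce a $\clSto$-extremal state $\rstate$ of that endpoint order reachable from $\sigma$ by exactly two neighbouring swaps. Since $\rstate$ is tightly thermo-majorized by $\sigma$ (Theorem~\ref{thm:TO_cone}) it thermo-majorizes every element of $\clSeto(\sigma)$ with its $\beta$-order, so the length-$4$ image -- having the same $\beta$-order as $\rstate$ -- is thermo-majorized by $\rstate$, and Lemma~\ref{lemma:same_border} then makes it non-extremal unless it coincides with $\rstate$, in which case it is itself a two-swap image of $\sigma$.

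The remaining work is bookkeeping: for each of the six starting orders and each of the two walk directions one tabulates the explicit swap sequence and the endpoint order, checking that the two arguments between them cover all twelve length-$4$ walks -- the two monotonic starting orders fall entirely under the first argument (there $\beta^{(1,3)}$ is never neighbouring, hence never the first swap), while each non-monotonic starting order contributes exactly one walk to each argument. The hard part will be making this split consistent: the second argument needs Remark~\ref{rmk:two_swap_TO_ext} to actually supply a two-swap $\clSto$-extremal representative of the endpoint order, which is \emph{not} always available -- for instance no two-swap $\clSto$-extremal point reaches the order $(2,3,1)$ from a monotonic start, yet that is precisely where a length-$4$ walk from $(1,2,3)$ ends. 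So one must verify that these gaps in Remark~\ref{rmk:two_swap_TO_ext} occur only in the cases where $\beta^{(1,3)}$ is \emph{not} the first swap, i.e. exactly where the identity of Remark~\ref{rmk:swap_series_eq} is applicable, so that at least one argument always applies; confirming this alignment across all cases is the crux of the proof.
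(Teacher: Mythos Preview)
Your proposal is correct and uses the same core ingredients as the paper's proof --- Lemma~\ref{rmk:only_one_nn} to restrict to all-neighbouring series, Remark~\ref{rmk:swap_series_eq} to manufacture a consecutive repetition collapsible via Remark~\ref{rmk:no_repetition}, and Remark~\ref{rmk:two_swap_TO_ext} together with Corollary~\ref{corollary:TO-ETO_ext} to rule out endpoints at orders admitting a two-swap $\clSto$-extremal representative. The organisation, however, differs. The paper splits by the initial $\beta$-order of $\pstate$ into three cases: for monotonic orders it appeals to Lemma~\ref{lemma:nice_order} (no repetition of any swap type is allowed, so length four is impossible outright); for the remaining four orders it handles one walk direction via Remark~\ref{rmk:swap_series_eq} and the other via Remark~\ref{rmk:two_swap_TO_ext}, or both via Remark~\ref{rmk:two_swap_TO_ext}. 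You instead split by whether $\beta^{(1,3)}$ is the first swap type, which lets you treat the monotonic-order case uniformly under the Remark~\ref{rmk:swap_series_eq} argument and thereby avoid any dependence on Lemma~\ref{lemma:nice_order} --- a non-trivial result with its own substantial proof. Your hexagon framing also makes the period-three recurrence of swap types, hence the coincidence $t_1=t_4$, transparent. The verification you flag as the crux --- that Remark~\ref{rmk:two_swap_TO_ext} supplies a two-swap $\clSto$-extremal at the endpoint order precisely in the four $t_1=\beta^{(1,3)}$ walks --- does check out case by case, so the split is indeed consistent.
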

\begin{proof}
	We prove this lemma by showing $l=4$ swaps always produce non-extreme states. From Lemma~\ref{rmk:only_one_nn}, non-neighbouring swaps do not need to be considered. Then the remaining two final states are $\qstate^{(1,2,1,2)}$ and $\qstate^{(2,1,2,1)}$.
	
	\emph{Case i:} $\border(\pstate) = (1,2,3)$ or $(3,2,1)$. There are only three distinct $\beta$-swaps for three-dimensional systems. Then length-four series always include a repetition of the same swap, and Thm.~\ref{thm:nice_order} forbids such series to make extreme states. 
	
	\emph{Case ii:} $\border(\pstate) = (1,3,2)$ or $(2,3,1)$. Corresponding swaps are $\beta^{(1,3)}\beta^{(2,3)}\beta^{(1,2)}\beta^{(1,3)}$ or $\beta^{(2,3)}\beta^{(1,3)}\beta^{(1,2)}\beta^{(2,3)}$. From Remark~\ref{rmk:swap_series_eq}, the second swap becomes
	\begin{equation}
		\beta^{(2,3)}\beta^{(1,3)}\beta^{(1,2)}\beta^{(2,3)} = \beta^{(1,2)}\beta^{(1,3)}\left(\beta^{(2,3)}\right)^2,
	\end{equation}
	and thus the channel itself is non-extremal. If the first swap is applied, the resulting orders are $(2,1,3)$ for $\pi = (1,3,2)$ and $(3,1,2)$ for $\pi = (2,3,1)$. Different states obtained from Remark~\ref{rmk:two_swap_TO_ext} are known to be uniquely extremal for these orders, rendering the states after the first swap non-extremal.  
	
	\emph{Case iii:}	$\border(\pstate) = (2,1,3)$ or $(3,1,2)$. Corresponding output orders are $(1,3,2)$ and $(3,2,1)$ for $\pi = (2,1,3)$;  $(2,3,1)$ and $(1,2,3)$ for $\pi = (3,1,2)$. Again, Remark~\ref{rmk:two_swap_TO_ext} provides different states that are uniquely extremal in the orders above.
	
\end{proof}

When there is a degeneracy in energy levels, e.g., $E_k=E_l$, the equality $\beta^{(k,l)}\beta^{(k,m)} =  \beta^{(l,m)}\beta^{(k,l)}$ let us rearrange the swap to make $\beta^{(k,l)}$ appear only at the end of the series. From $(\beta^{(k,l)})^2 = \mathbb{1}$, we can always make $\beta^{(k,l)}$ never appear or appear only once in $\vec\beta$. After this reduction, Lemma~\ref{rmk:len_max_3} also holds for degenerate energy systems.

Ruling out series with length $>1$ having non-neighbouring swaps (Lemma~\ref{rmk:only_one_nn}) and series with length $l>3$ (Lemma~\ref{rmk:len_max_3}), we are only left with $\beta^{(\pi_1,\pi_3)}$ and all-neighbouring swaps with length $\leq 3$. $\Theta_\mathrm{ETO}^{(3)}(\pstate) \cup \Xi_\mathrm{ETO}^{(3)}(\pstate)$ is the collection of all states after such swaps, and this proves $\mathrm{Ext}(\clSeto)~\subset~\Theta_\mathrm{ETO}^{(3)}(\pstate) \cup \Xi_\mathrm{ETO}^{(3)}(\pstate)$ part of Thm.~\ref{thm:3dETO}. To prove the other part, notice that elements of $\Xi_\mathrm{ETO}^{(3)}(\pstate)$ all have the same $\beta$-order $(\pi_3,\pi_2,\pi_1)$, while the ones in $\Theta_\mathrm{ETO}^{(3)}(\pstate)$ all have distinct $\beta$-orders different from $(\pi_3,\pi_2,\pi_1)$. Since each $\beta$-order possesses at least one extreme state (Lemma~\ref{lemma:ext_point_for_each_order}), all $\Theta_\mathrm{ETO}^{(3)}(\pstate)$ elements should be extremal, which concludes the proof.

\section{Simplifications for initial orderings monotonic in energy levels: proof of Theorem~\ref{thm:nice_order}}\label{app:special_ordering}

The technical proof of Thm.~\ref{thm:nice_order} can be sketched in the following steps:
\begin{itemize}
	\item Firstly, we introduce a specific series of $\beta$-swaps that transforms an initial $\beta$-ordering to a target ordering (Def.~\ref{def:standard_formation}). 
	\item This structure, which we refer to as the \emph{standard formation},	is then shown to be equivalent to any $\beta$-swap series, where i) all swaps are \emph{neighbouring} to initial states of the form Eq.~\eqref{eq:mono_order} and ii) each swap is applied \emph{at most once} (Lemma~\ref{lemma:standard_form}). 
	\item Finally, to prove Thm.~\ref{thm:nice_order}, we show that whenever the initial $ \beta $-ordering is monotonic in energy, then a transformation that is not according to a standard formation always leads to a non-extreme state. 
	This Lemma allows us to conclude that the number of extreme points for such a $ \clSeto(\pstate) $ is at most $ d! $, similar to that of $ \clSto (\pstate)$.
\end{itemize}

\begin{definition}[Standard formation]\label{def:standard_formation}
	 Given a tuple of $d$-dimensional $ \beta $-orderings $ (\pi,\pi') $, a standard formation is a $\beta$-swap series $\vec{\beta}_{\rm sf}$ that transforms an initial state $ \pstate $ with an order $\pi$ into some final state $ \pstate' $ having an order $\pi^\prime$, %can be 
	with the construction below:
\begin{enumerate}
	\item Set an initial index of $ j=1 $, and identify $ m $ such that $\pi^\prime_j = \pi_{m}$. If $m=j$, define $\vec\beta^{(j)}$ as an identity. Otherwise, since $ \pi_{1},\cdots,\pi_{m-1} $ are already occupied by  $ \pi^{\prime}_{1},\cdots,\pi^{\prime}_{m-1} $, we get $ m>j $. Then, define a swap-series $\vec\beta^{(j)} = \beta^{(\pi_{j},\pi_{m})}\beta^{(\pi_{j+1},\pi_{m})}\cdots\beta^{(\pi_{m-1},\pi_{m})}$. Note that these swaps are always neighbouring when applied to a state initially having the order $\pi$, due to Eq.~\eqref{eq:order_after_swap1}. After $ j=1 $ round, this swap series will take the initial ordering $ \pi $ to the new ordering $ (\pi_1',\pi_1,\cdots,\pi_{m-1},\pi_{m+1},\cdots,\pi_d) $ if $ \vec{\beta}^{(1)}\neq\mathbb{1} $.
	\item Iterate the above step for $ j=2,\cdots,d-1 $, defining $\lbrace \vec\beta^{(j)}\rbrace_{j=1}^{d-1}$. 
\end{enumerate} 
	The standard formation series is then simply the concatenation
\begin{equation}
	\vec{\beta}_{\rm sf} = \vec\beta^{(d-1)} \cdots \vec\beta^{(2)} \vec\beta^{(1)}.
\end{equation}
By construction, there is no repetition of a swap in this series and they are all neighbouring when applied to an initial state with the order $\pi$. 
\end{definition}

\begin{mybox}{An example of the standard formation}
	For the orderings $ \pi = 1234 $ and $ \pi'=4231 $, the standard formation is $ \vec{\beta}_{\rm sf} = \vec{\beta}^{(3)}\vec{\beta}^{(2)}\vec{\beta}^{(1)} $, where
	\begin{align}\label{key}
		\vec{\beta}^{(1)} &=\beta^{(1,4)}\beta^{(2,4)}\beta^{(3,4)},\qquad
		\vec{\beta}^{(2)} =\beta^{(1,2)},\qquad
		\vec{\beta}^{(3)} =\beta^{(1,3)}. 
	\end{align}
	The intermediate $ \beta $-orderings given by this process are
	\begin{equation}\label{key}
		\pi = 1234 \quad \xrightarrow[\vec{\beta}^{(1)}]{}~ \quad 4123 \quad \xrightarrow[\vec{\beta}^{(2)}]{}~ \quad 4213 \quad
		\xrightarrow[\vec{\beta}^{(3)}]{}~\quad \pi ' = 4231.
	\end{equation}
\end{mybox}
%\begin{example}
%		For the orderings $ \pi = 1234 $ and $ \pi'=4231 $, the standard formation is $ \vec{\beta}_{\rm sf} = \vec{\beta}^{(3)}\vec{\beta}^{(2)}\vec{\beta}^{(1)} $, where
%		\begin{align}\label{key}
%			\vec{\beta}^{(1)} &=\beta^{(1,4)}\beta^{(2,4)}\beta^{(3,4)},\qquad
%			\vec{\beta}^{(2)} =\beta^{(1,2)},\qquad
%			\vec{\beta}^{(3)} =\beta^{(1,3)}. 
%		\end{align}
%		The intermediate $ \beta- $orderings given by this process are
%		\begin{equation}\label{key}
%			\pi = 1234 \quad \xrightarrow[\vec{\beta}^{(1)}]{}~ \quad 4123 \quad \xrightarrow[\vec{\beta}^{(2)}]{}~ \quad 4213 \quad
%			\xrightarrow[\vec{\beta}^{(3)}]{}~\quad \pi ' = 4231.
%		\end{equation}
%\end{example}

This formation has a nice property, namely all swaps in each block $\vec{\beta}^{(j)}$ acts on level $\pi^\prime_j$ and the ones in $\vec{\beta}^{(k)}$ for any $k>j$ does not act on level $\pi^\prime_j$. In the lemma below, we illustrate how certain classes of swap series, which turns out to be the ones producing extreme states, can always rearranged into a standard formulation.

\begin{lemma}\label{lemma:standard_form}
	Given an initial state $ \pstate $ with ordering $ \pi $ monotonic in energy, denote a $\beta$-swap series 
	\begin{equation}\label{key}
		\vec\beta = \prod_{i} \beta_i,
	\end{equation} 
and $ \pi' $ to be the final $ \beta $-ordering of the state $\pstate' = \vec{\beta}\pstate $. If $ \vec{\beta} $ is such that: \\
1) each $ \beta_i $ is a distinct swap, and  \\
2) when applied to $ \pstate $, is always a neighbouring swap, \\
then $ \vec{\beta} $ can always be expressed in the form of a standard formation for $ (\pi,\pi') $.%is equivalent (in the matrix level) to a series in the standard formation. 
\end{lemma}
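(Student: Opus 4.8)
The plan is to set up a dictionary between $\beta$-swap series applied to $\pstate$ and words in the symmetric group $S_d$, and then invoke the word property of Coxeter groups to show that any admissible series is turned into the standard formation using precisely the relations supplied by Remarks~\ref{rmk:commute} and~\ref{rmk:swap_series_eq}. First I would fix, without loss of generality, $\pi = (1,\dots,d)$; the case $\pi = (d,\dots,1)$ is obtained by reversing the role of positions and is entirely analogous. Label the positions of the $\beta$-order by $1,\dots,d$ and write $s_i$ for the transposition of positions $i$ and $i{+}1$; these generate $S_d$. Given any $\beta$-swap series $\vec\beta$ satisfying conditions (1)--(2), applying it step by step to $\pstate$ yields a chain of $\beta$-orderings $\pi=\pi^{(0)},\pi^{(1)},\dots,\pi^{(N)}=\pi'$; by (2) each step is neighbouring, so $\pi^{(t)} = s_{i_t}\cdot\pi^{(t-1)}$ for some position $i_t$ (Eq.~\eqref{eq:order_after_swap1}), and the swap applied at step $t$ is precisely the $\beta$-swap on the two \emph{levels} occupying positions $i_t,i_t{+}1$ just before that step. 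Thus $\vec\beta$ corresponds to the word $w := s_{i_N}\cdots s_{i_1}$, and conversely the pair $(\pi,w)$ recovers the level-labelled series. The same construction applied to the standard formation $\vec\beta_{\rm sf}$ produces a word $w'$; by Definition~\ref{def:standard_formation} both $w$ and $w'$ move $\pi$ to $\pi'$, hence represent the same element $\rho\in S_d$.

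Next I would argue that $w$ and $w'$ are both \emph{reduced} words for $\rho$. Each generator $s_i$ flips the relative order of exactly one pair of levels — the two it acts on — so over the whole word the relative order of a pair $\{j,k\}$ is flipped a number of times congruent mod $2$ to whether $\rho$ inverts $\{j,k\}$. A neighbouring swap of levels $\{j,k\}$ is by definition $\beta^{(j,k)}$, so distinctness (1) forces the pair $\{j,k\}$ to be flipped at most once; hence it is flipped exactly once iff $\beta^{(j,k)}$ occurs in $\vec\beta$, and $N=\#\{\text{swaps in }\vec\beta\}=\#\{\text{inversions of }\rho\}=\ell(\rho)$. Since $\vec\beta_{\rm sf}$ is, by Definition~\ref{def:standard_formation}, also a distinct all-neighbouring series from $\pi$ to $\pi'$, the identical count shows $w'$ is reduced of the same length; in fact both involve the same set of generators-as-level-swaps, namely $\{\beta^{(j,k)} : \rho\text{ inverts }\{j,k\}\}$.

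Now I would invoke the Matsumoto--Tits theorem: any two reduced words for $\rho$ in $S_d$ are connected by a finite chain of \emph{commutation moves} $s_is_j\leftrightarrow s_js_i$ ($|i-j|\ge 2$) and \emph{braid moves} $s_is_{i+1}s_i\leftrightarrow s_{i+1}s_is_{i+1}$, with every intermediate word again reduced for $\rho$. The crux is to translate each such move, at a given site of a reduced word, into a channel identity among $\beta$-swaps. For a commutation move the two generators act on disjoint pairs of positions, hence on disjoint pairs of levels, so the corresponding $\beta$-swaps have disjoint support and commute by Remark~\ref{rmk:commute}. For a braid move at positions $i,i{+}1,i{+}2$, let $a,b,c$ be the levels occupying those positions just before the triple is applied; the three steps flip the pairs $\{a,b\},\{a,c\},\{b,c\}$, and because the ambient word is reduced these pairs are flipped nowhere earlier, so their relative order still agrees with $\pi$ — i.e., since $\pi$ is monotonic, $a<b<c$ after relabelling. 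A direct check of the two length-three position words then shows the move is exactly the identity $\beta^{(a,b)}\beta^{(a,c)}\beta^{(b,c)} = \beta^{(b,c)}\beta^{(a,c)}\beta^{(a,b)}$ of Remark~\ref{rmk:swap_series_eq} (for $\pi=(d,\dots,1)$ the only change is that the initial triple reads $(c,b,a)$, which yields the same relation). Reading off the associated level-labelled series along the chain of moves connecting $w$ to $w'$, one thus transforms $\vec\beta$ into $\vec\beta_{\rm sf}$ through a finite sequence of channel equalities, proving the lemma.

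The main obstacle — and the unique place where monotonicity of $\pi$ is actually used — is precisely this last step: one must be sure that every braid move one is \emph{forced} to perform appears in the orientation matching Remark~\ref{rmk:swap_series_eq}. When $\pi$ is monotonic, any triple of levels that has not yet been touched sits in monotone order, which is exactly the hypothesis $k\le l\le m$ of that remark; for a generic (non-monotone) initial ordering a not-yet-flipped triple could sit in a mixed order, and the corresponding braid move would not reduce to any $\beta$-swap identity available to us. A minor bookkeeping point to handle cleanly is the matrix-versus-application ordering (products act right-to-left on population vectors), but this only affects notation, not the argument.
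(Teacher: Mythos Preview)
Your argument is correct and takes a genuinely different route from the paper. The paper proceeds by induction on $d$: it singles out the highest-energy level $d$, tracks the two sets $A=\{i:\slope_i>\slope_d\}$ and $B=\{i:\slope_i<\slope_d\}$ along the trajectory, observes that no level can cross back from $B$ to $A$ without repeating $\beta^{(i,d)}$, and then uses Remarks~\ref{rmk:commute} and~\ref{rmk:swap_series_eq} to push the $A_f$-only swaps to the front and the $B_f\cup\{d\}$ swaps to the back, invoking the inductive hypothesis on each sub-block. A separate case $d=\pi'_1$ is handled with an additional explicit relabelling of the $d$-swaps.

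Your proof instead recognises the problem as a word problem in the Coxeter group $S_d$: conditions (1)--(2) force the position word associated to $\vec\beta$ to be reduced (length = number of inversions of the permutation taking $\pi$ to $\pi'$), and the same holds for $\vec\beta_{\rm sf}$; Matsumoto--Tits then connects them through commutation and braid moves, each of which you translate back into a channel identity via Remarks~\ref{rmk:commute} and~\ref{rmk:swap_series_eq}. The point where monotonicity of $\pi$ enters --- that in a reduced word a braid triple always meets its three levels in monotone energy order, so that Remark~\ref{rmk:swap_series_eq} is applicable --- is exactly right, and you identify it cleanly. What you gain is brevity and a transparent explanation of \emph{why} the monotone-$\pi$ hypothesis is the natural one; what the paper's approach gains is self-containment (no appeal to an external combinatorial theorem) and an explicit algorithmic rearrangement that it reuses verbatim in the proof of Lemma~\ref{lemma:nice_order}.
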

\begin{proof}
	
	We prove this by induction. Suppose the above lemma is true for $\pstate\in\probspace^{d-1}$. The first goal is to prove this for initial ordering $\pi =  (1,2,\cdots,d)$ and $ \vec{\beta} $ that satisfies the conditions in the statement of the lemma. By identifying the first and the last swaps acting on the level $d$, one can decompose $ \vec{\beta} $ into 
	\begin{equation}\label{key}
		\vec{\beta} = \vec{\beta}^{(\text{Post})}\vec{\beta}^{(d-\text{rel})}\vec{\beta}^{(\text{Pre})}.
	\end{equation}
	Here, $\vec{\beta}^{(\text{Pre})}$ are the swaps coming before the first swap acting on the level $d$, and $\vec{\beta}^{(\text{Post})}$ are the ones after the last swap acting on the level $d$. 
	
	\emph{Case i:} $d = \pi^\prime_{m}$, where $ m\neq 1 $ is the position of level $ d $ in the $ \beta $-ordering of the final state $ \pstate' $. To make a rearrangement of swaps, we first remark a few points using sets $A = \{i\vert\slope_i>\slope_d\}$ and $B = \{i\vert\slope_i<\slope_d\}$. We will update these sets after each swap. At the beginning, there is no element in $B$ and all the other levels except $ d $ are in the set $A$. Next, we note the following:
	\begin{enumerate}
		\item Swapping $i\in A$ and $j\in B$ is not allowed, since they are non-neighbouring.
		\item Any $i\in A$ can move to $B$ only when $ \beta^{(i,d)} $ is implemented.
		\item Since initially $ B = \emptyset $, any given level either stays in $A$ at all times, or it moves to $B$ at some point and remains so thereafter. This comes from the restriction that in order for $ i $ to move between $ A $ and $ B $, the swap $  \beta^{(i,d)}$ must be used.
		\item If $k,l \in A$ when $\beta^{(k,l)}$ is applied, $\beta^{(k,l)}$ precedes $\beta^{(k,d)}$ and $\beta^{(l,d)}$ when they exist in $\vec\beta$.
	\end{enumerate}
%	  The first and the second points come from the fact that only neighbouring swaps are allowed, and the third property  
	  The sets $A$ and $B$ after the whole transformation is determined by the target $ \beta $-ordering $ \pi' $, where we denote them as $A_f = \{\pi^\prime_i\vert i<m\}$ and $B_f = \{\pi^\prime_i\vert i>m\}$. 
%	  Now the following properties can also be found. 
%	  \begin{enumerate}
%	  	\setcounter{enumi}{3}
%	  	\item Elements in $A_f$ never experience a swap with level $d$ whereas the ones in $B_f$ experience it once.
%	  	\item If $k,l\in A$, if $\beta^{(k,l)}$ is in $\vec{\beta}$, it precedes $\beta^{(l,d)}$ due to property 1.
%	  \end{enumerate}

  	Given the constraints above, we know that $ \vec{\beta} $ describes a special process. See Fig.~\ref{fig:lemma8_illustration}, for instance, for a visualization of this operation. $A$ and $B$ are separated by level $d$ (point 1 above). Starting from $B = \emptyset$, some elements $i\in A$ are transferred to $B$ whenever $ \beta^{(i,d)} $ is implemented. 
  	Once this happens, $ i $ cannot go back to $A$, since it would require the repetition of $  \beta^{(i,d)} $ to do so. Visually, this is understood by saying that the bar representing level $d$ in Fig.~\ref{fig:lemma8_illustration} is penetrable from the left only. At the end, $A = A_f$ and $B=B_f$, where elements of $A_f$ never passed through level $d$ (point 3 above). Lastly, if $k,l\in A$, then they have not experienced a swap with level $d$ yet, explaining the point 4 above.  

%  	starting from $ B = \emptyset $, the process begins by swapping some of the levels with $ d $, so that $ B $ is populated. 
	\begin{figure}[h!]\centering
		\includegraphics[width = 0.88\columnwidth]{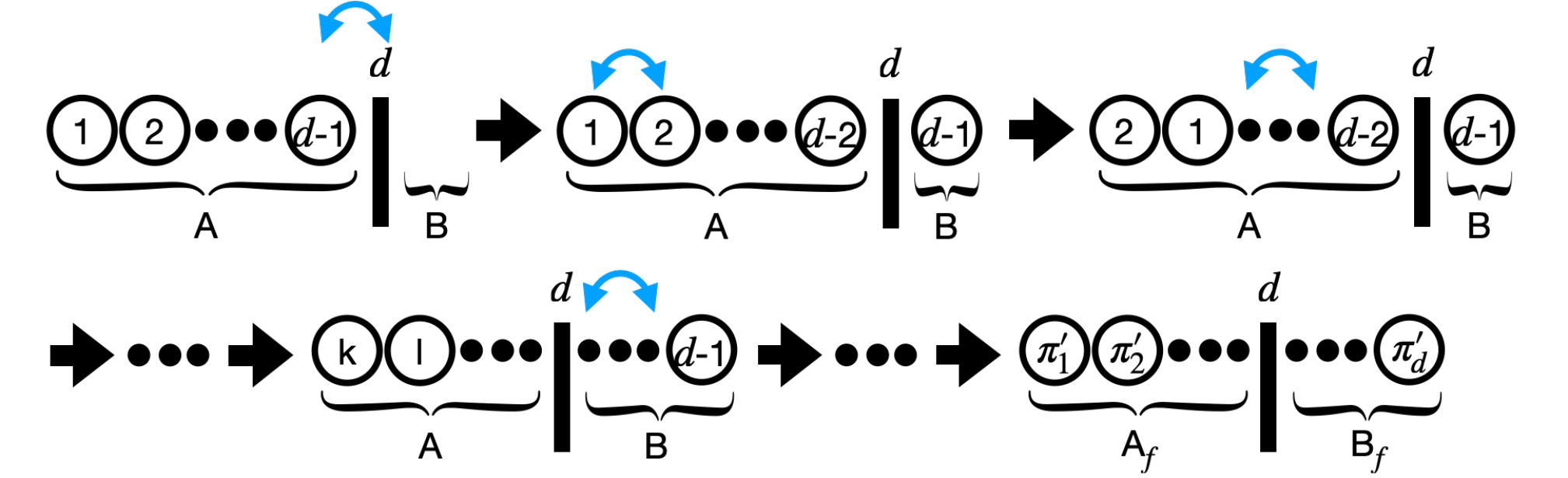}
		\caption{
			Illustration of changes in $\beta$-order starting from $\pi = (1,2,\cdots,d)$ undergoing a swap series $\vec\beta$, where it is assumed that $ \vec\beta $ is always neighbouring and allows no repetition of a particular swap. 
			Level $d$ is depicted as a bar, dividing the levels into sets $A$ and $B$. For an initial state whose $ \beta $-order is monotonically increasing in energy, $ A = \lbrace 1, \cdots, d-1\rbrace $ while $ B = \emptyset$. Three different types of swaps are possible: i) swapping between level $d$ and its neighbouring element which is in $A$ (1st swap above), ii) swapping among neighbouring levels in $A$ (2nd, 3rd steps above), and iii) swaps among levels in $B$ (4th explicit step above). When the type ii) and iii) swaps occur, $A$ and $B$ remain the same. On the other hand, type i) swaps move one element of $A$ into $B$. The process of elements going from $B$ to $A$ is forbidden due to the no-repetition constraint. At the end of applying $\vec\beta$, $A$ and $B$ becomes $A_f$ and $B_f$. The elements of $A_f$ never experience a swap with level $d$, while the elements of $ B_f $ experienced exactly once swapping with level $ d $.}
		\label{fig:lemma8_illustration}
	\end{figure}
  	
  	The next step we want to show is that w.l.o.g., a rearrangement, where $ \vec{\beta}^{\rm (Pre)} $ contains all $ \beta$-swaps that are part of $ \vec{\beta} $, acting on elements $ i\in A_f $, but not involving level $ d $, is possible.
  	To do so, we identify all swaps $\beta^{(k,l)}$ in $\vec{\beta}^{(\text{Post})}\vec{\beta}^{(d-\text{rel})}$  such that $k,l\in A$ at the time of swap. From the rightmost one, make a decomposition
  	\begin{equation}\label{key}
  		\vec{\beta}^{(\text{Post})}\vec{\beta}^{(d-\text{rel})} = \vec{\beta}^{(a)}\beta^{(k,l)}\vec{\beta}^{(b)}.
  	\end{equation}
  	Notice that $\vec{\beta}^{(b)}$ only contains swaps among $B\cup\{d\}$, which includes neither $k$ nor $l$. From Remark \ref{rmk:commute}, we then have that $ \beta^{(k,l)} $ and $  \vec{\beta}^{(b)}$ commute, i.e. $\beta^{(k,l)}\vec{\beta}^{(b)} = \vec{\beta}^{(b)}\beta^{(k,l)}$. Therefore, w.l.o.g., 
  	\begin{equation}\label{key}
  		\vec{\beta}^{(\text{Post})}\vec{\beta}^{(d-\text{rel})} \vec{\beta}^{(\text{Pre})} = \vec{\beta}^{(a)}\beta^{(k,l)}\vec{\beta}^{(b)} \vec{\beta}^{(\text{Pre})} = \vec{\beta}^{(a)}\vec{\beta}^{(b)} \beta^{(k,l)}\vec{\beta}^{(\text{Pre})}
  	\end{equation}$ \beta^{(k,l)} $ can be integrated into $ \beta^{(\text{Pre})} $ to update $ \beta^{(\text{Pre})} \rightarrow \beta^{(k,l)}\beta^{(\text{Pre})} $ and $ \vec{\beta}^{(\text{Post})}\vec{\beta}^{(d-\text{rel})} \rightarrow \vec{\beta}^{(a)}\vec{\beta}^{(b)} $.

 If $k\in A_f$, the existence of $\beta^{(k,l)}$ in $\vec\beta$ indicates $l\in A$ when the swap is applied. Hence, by repeating this until the end, all swaps $\beta^{(k,l)}$ with $k\in A_f$ are merged into $\vec{\beta}^{(\text{Pre})}$. 
 Since $\vec{\beta}^{(\text{Pre})}$ does not contain swaps acting on level $d$, it acts on at most $d-1$ levels and can be reordered in the standard formation by the assumption that the lemma holds for states in $ \probspace^{d-1} $. 
Until now, there is no swap acting on level $ d $ and thus $ d = \border(\vec{\beta}^{(\text{Pre})}\pstate)_{d} $. Then levels in $ B_{f} $, which should be swapped with $ d $ in $ \vec{\beta}^{(d-\text{rel})} $, occupy later $d-m$ slots in the $ \beta $-order: $ \border(\vec{\beta}^{(\text{Pre})}\pstate)_{m}, \border(\vec{\beta}^{(\text{Pre})}\pstate)_{m+1},\cdots,\border(\vec{\beta}^{(\text{Pre})}\pstate)_{d-1}$. By construction of the standard formation, $\vec{\beta}^{(\text{Pre})}$ then can be decomposed into $\vec{\beta}^{(\text{Pre})} = \vec{\beta}^{(B_f)}\vec{\beta}^{(A_f)}$ with $\vec{\beta}^{(B_f)}$ swapping only between $ B_{f} $ elements.

  	Finally, $\vec{\beta}^{(\text{Post})}\vec{\beta}^{(d-\text{rel})}\vec{\beta}^{(B_f)}$ consists of swaps among the levels in $B_f\cup \{d\}$ ($d-m+1< d$ elements). Again, by assumption this swap can be rearranged as a standard formation. Concatenating standardised series $\vec{\beta}^{(\text{Post})}\vec{\beta}^{(d-\text{rel})}\vec{\beta}^{(B_f)}$ and $ \vec{\beta}^{(A_f)} $, we obtain the standard formation for the entire series. \\

	\emph{Case ii:} $d = \pi^\prime_1$. The only difference here is that level $d$ swaps with every other level and $A_f$ is an empty group. Again, we locate $\beta^{(k,l)}$ such that  $\vec{\beta}^{(\text{Post})}\vec{\beta}^{(d-\text{rel})} = \vec{\beta}^{(a)}\beta^{(k,l)}\vec{\beta}^{(b)}$ from the rightmost swap. If $k,l\in A$ after $\vec{\beta}^{(b)}$, move $\beta^{(k,l)}$ to be included in $\vec{\beta}^{(\text{Pre})}$ as before. In addition, repeat this process for $ \vec{\beta}^{(d-\text{rel})} $ but starting from the leftmost swap $\beta^{(k,l)}$ with $k,l\neq d$ in $\vec{\beta}^{(d-\text{rel})} = \vec{\beta}^{(a)}\beta^{(k,l)}\vec{\beta}^{(b)}$. 
	Since we already moved all $k,l\in A$ swaps to go before $ \vec{\beta}^{(d-\text{rel})} $, at the point of swap $ \beta^{(k,l)} $ all $ k,l\in B $, which leads to the equality $\vec{\beta}^{(a)}\beta^{(k,l)} = \beta^{(k,l)}\vec{\beta}^{(a)}$ and enables $ \vec{\beta}^{(\text{Post})} \rightarrow \vec{\beta}^{(\text{Post})}\vec{\beta}^{(k,l)} $.
	After merging all such $\beta^{(k,l)}$ into $\vec{\beta}^{(\text{Pre})}$ or $\vec{\beta}^{(\text{Post})}$, we get
	\begin{equation}\label{key}
		\vec{\beta}^{(d-\text{rel})} =\beta^{(\delta_1,d)}\beta^{(\delta_2,d)}\cdots\beta^{(\delta_{d-1},d)},
	\end{equation} 
where $\delta = (\delta_1,\delta_2,\cdots,\delta_{d-1},d)$ is the $\beta$-order after $\vec{\beta}^{(\text{Pre})}$. Both $\vec{\beta}^{(\text{Pre})}$ and $\vec{\beta}^{(\text{Post})}$ act at most $d-1$ levels, and can be modified into the standard formation. Now to put the entire series into the standard formation, $\delta$ need to be rearranged. We do this starting from $ j=1 $.
	\begin{enumerate}
		\item Find $m$ such that $\delta_m = j$. If $m = j$, proceed to the last step. If not, previous iterations guarantee that $ \delta_{k}=k $, $ \forall k<j $, which leads to $ m>j $ and $ \delta_{m-1}>j $.
		Defining $L = \{\delta_n\vert n<m-1\}$ and $R = \{\delta_n\vert n>m\}$, we get 
		\begin{equation}\label{key}
			\vec{\beta}^{(d-\text{rel})} = \vec{\beta}^{(dL)}\beta^{(\delta_{m-1},d)}\beta^{(j,d)}\vec{\beta}^{(dR)},
		\end{equation}
	where $ \vec{\beta}^{dL(R)} $ denotes the series swapping $ d $ and elements of $ L(R) $.
		Moreover, from the standardization $\vec{\beta}^{(\text{Pre})} = \vec{\beta}^{(R)}\beta^{(j,\delta_{m-1})}\vec{\beta}^{(L)}$, where $\vec{\beta}^{(R)}$ does not act on levels $j$ and $\delta_{m-1}$ and we can rearrange it into  $\vec{\beta}^{(dR)}\vec{\beta}^{(\text{Pre})} = \beta^{(j,\delta_{m-1})}\vec{\beta}^{(dR)}\vec{\beta}^{(R)}\vec{\beta}^{(L)}$. From Remark~\ref{rmk:swap_series_eq}, $\beta^{(\delta_{m-1},d)}\beta^{(j,d)}\beta^{(j,\delta_{m-1})} = \beta^{(j,\delta_{m-1})}\beta^{(j,d)}\beta^{(\delta_{m-1},d)}$ since $ d>\delta_{m-1}>j $. 
		This procedure updates $\delta \rightarrow  (\cdots,\delta_{m-2},j,\delta_{m-1},\delta_{m+1},\cdots )$.
		\item Repeat the first step until $\delta_{j} = j$.
		\item Repeat the first and the second step with $ j\rightarrow j+1 $.
	\end{enumerate}
	At the end, one gets $ \delta = (1,2,\cdots,d-1) $ and $ \vec{\beta}^{(\text{Pre})}  = \mathbb{1} $. By standardizing $ \vec{\beta}^{(\text{Post})} $ and concatenating with $ \vec{\beta}^{(d-\text{rel})} $, the standard formation is obtained. 
	
	For $d=2$, the Lemma is trivially true. Thus by induction, the lemma is proved for $\pi = (1,2,\cdots,d)$. Following the same logic, this can also be proven for $\pi = (d,d-1,\cdots,1)$.

\end{proof}

% Following the same logic, we can also show that Thm.~\ref{thm:nice_order} is equivalent to the fact that the $\beta$-swap series can be rearranged into the `reverse form' where we start from swaps manipulating the flattest slope level. 

Now we prove Thm.~\ref{thm:nice_order}.

\begin{proof}

We prove the Lemma for the case $\border(\pstate) = (1,2,\cdots,d)$ and argue that the proof also holds for $\border(\pstate) = (d,d-1,\cdots,1)$.

We first prove the only if statement of the lemma, as follows:
\begin{enumerate}
	\item We show that the repetition of any particular $ \beta $-swap always leads a to non-extreme state if the series is all-neighbouring. This is done by contradiction: suppose that $\vec\beta = \beta^{(k,l)}\vec{\beta}^\prime$, w.l.o.g. assuming $k<l$. Furthermore, assume $\vec{\beta}^\prime$ to be a series satisfying the only if part of the statement but contains $\beta^{(k,l)}$, causing $\beta^{(k,l)}$ to occur twice in  $\vec\beta$. From Lemma~\ref{lemma:standard_form}, $\vec{\beta}^\prime$ can be written in a standard formation, which reads
	\begin{equation}
		\vec\beta = \beta^{(k,l)}\vec{\beta}^\prime = \beta^{(k,l)}\vec{\beta}^{(\text{Post})}\beta^{(k,l)}\vec{\beta}^{(\text{Pre})}.
	\end{equation}   
	Notice that since $\vec\beta$ is an all-neighbouring swap, this implies that after $\vec\beta^\prime$, levels $k$ and $l$ should be neighbouring with $\slope(\vec\beta^\prime\pstate)_k\leq\slope(\vec\beta^\prime\pstate)_l$, which then implies  
	\begin{equation}
		\vec{\beta}^{(\text{Post})} = \vec{\beta}^{(\text{Irrel})} \left(\prod_{m_i\in M}\beta^{(m_i,k)}\right)\left(\prod_{m_i\in M}\beta^{(m_i,l)}\right),
	\end{equation} 
	for some set of levels $M \subset \{m\vert m<k<l\}$, 
	by construction of the standard formation. Here, $\vec{\beta}^{(\text{Irrel})}$ is a series that acts on neither $k$ nor $l$. Using Remarks~\ref{rmk:commute} and~\ref{rmk:swap_series_eq},
	\begin{align}
		\vec\beta &= \vec{\beta}^{(\text{Irrel})}\beta^{(k,l)}\prod_{m_{i}\in M}\left(\beta^{(m_i,k)}\beta^{(m_i,l)}\right)\beta^{(k,l)}\vec{\beta}^{(\text{Pre})}\nonumber\\ 
		&= \vec{\beta}^{(\text{Irrel})}   \left(\beta^{(k,l)}\right)^2\prod_{m_{i}\in M}\left(\beta^{(m_i,l)}\beta^{(m_i,k)}\right)  \vec{\beta}^{(\text{Pre})},\label{eq:rearrange_using_comm}
	\end{align}
	which always generates a non-extreme point from Remark~\ref{rmk:no_repetition}.
	\item Now we show that non-neighbouring swaps are also not allowed. 
	\begin{enumerate}
		\item \textit{For $ d=3 $}: from Eqs.~\eqref{eq:non-neighbouring-non-extremal1}--\eqref{eq:non-neighbouring-non-extremal5}, $ \beta^{(1,3)} $ yields non-extreme point of $ \clSeto(\pstate) $ with $ \border(\pstate) = (1,2,3) $ or $ (3,2,1) $. Also, Lemma~\ref{rmk:only_one_nn} forbids any other occasions having a non-neighbouring swap. 
%		consider a state $ \pstate\in\probspace^3 $ with order $\border = (1,2,3)$. We can compare two states 
%		\begin{align}
%			\qstate = \beta^{(1,3)}\pstate = \begin{pmatrix}
%				(1-\Delta_{13})p_1+p_3\\ p_2\\ \Delta_{13}p_1
%			\end{pmatrix}
%		\end{align} 
%		and\footnote{Recall that $\qstate^{(1,2,1)}$ is a state after three neighbouring swaps acting on the first, the second, and the third pairs of levels in the $\beta$-order.}
%		\begin{align}
%			\qstate^{(1,2,1)} = \beta^{(2,3)}\beta^{(1,3)}\beta^{(1,2)}\pstate
%			= \begin{pmatrix}
%				(1-\Delta_{12})(1-\Delta_{13})p_1 + (1-\Delta_{13})p_2+p_3\\ \Delta_{12}(1-\Delta_{13})p_1 + \Delta_{13}p_2\\ \Delta_{13}p_1
%			\end{pmatrix},
%		\end{align}
%		with $\border(\qstate) = \border(\qstate^{(1,2,1)}) = (3,2,1)$. The possibility of $\border(\qstate) = (3,1,2)$ is excluded via Lemma~\ref{lemma:order_swap_lemma}.
%		Then we may observe the following:
%		\begin{align}
%			q_3 &=(\qstate^{(1,2,1)})_3, \\ 
%			q_2 - (\qstate^{(1,2,1)})_2 &= (1-\Delta_{13})(p_2-\Delta_{12}p_1)\leq 0, 
%		\end{align}
%		i.e. $\qstate^{(1,2,1)}$ thermomajorises $\qstate$ and from Lemma~\ref{lemma:same_border}, $\qstate$ is not extremal for $\clSeto(\pstate)$. 
		
		\item \textit{For $ d>3 $:} suppose that $\beta^{(k,m)}$ is the only non-neighbouring swap in the series $ \vec{\beta} $, i.e. 
				\begin{equation}\label{key}
						\vec{\beta} = \beta^{(k,m)}\vec{\beta}^{( \text{NS})} 
					\end{equation} and  \begin{equation}\label{key}
					\exists l\quad \text{s.t.}\quad \slope(\vec{\beta}^{(\text{NS})}\pstate)_k>\slope(\vec{\beta}^{(\text{NS})}\pstate)_l>\slope(\vec{\beta}^{(\text{NS})}\pstate)_m,
				\end{equation} after all-neighbouring series $ \vec{\beta}^{(\text{NS})} $. From the first part of the proof, $ \vec{\beta}^{(\text{NS})} $ also cannot include any repetition. Using Lemma~\ref{lemma:standard_form}, we rearrange $ \vec{\beta}^{(\text{NS})} $ into the standard formation. 
		\begin{enumerate}
			\item  $ \exists l $ satisfying $k<l<m$ or $k>l>m$: firstly, note that the procedures from Eqs.~\eqref{eq:non-neighbouring-non-extremal1}--\eqref{eq:non-neighbouring-non-extremal5} can be generalised for higher dimensions. If $\slope(\rstate)_k>\slope(\rstate)_l>\slope(\rstate)_m$ and $k<l<m$ for some $\rstate$, 
					\begin{align}
							(\beta^{(k,m)}\rstate)_{i} &= (\beta^{(l,m)}\beta^{(k,m)}\beta^{(k,l)}\rstate)_{i},\quad \forall i\neq k,l,\\
							(\beta^{(k,m)}\rstate)_{k} + (\beta^{(k,m)}\rstate)_{l} &= (\beta^{(l,m)}\beta^{(k,m)}\beta^{(k,l)}\rstate)_{k} + (\beta^{(l,m)}\beta^{(k,m)}\beta^{(k,l)}\rstate)_{l},
						\end{align}
					and
					\begin{equation}
							\slope(\beta^{(l,m)}\beta^{(k,m)}\beta^{(k,l)}\rstate)_{l} \geq \slope(\beta^{(k,m)}\rstate)_{l} \geq  \slope(\beta^{(k,m)}\rstate)_{k} \geq \slope(\beta^{(l,m)}\beta^{(k,m)}\beta^{(k,l)}\rstate)_{k}, 
						\end{equation}
					which implies that $\beta^{(k,m)}\rstate$ can be obtained from $\beta^{(l,m)}\beta^{(k,m)}\beta^{(k,l)}\rstate$ via partial thermalization of levels $ k $ and $ l $ (cf. $ d $-dimensional case for \ref{appendix:order_swap_lemma}), and thus not extremal in $ \clSeto(\rstate) $. Similarly, the result also holds when $ k>l>m $. By putting $ \rstate = \vec{\beta}^{(\text{NS})}\pstate $, the state $ \vec{\beta}\pstate $ is not extremal in $ \clSeto(\vec{\beta}^{(\text{NS})}\pstate) $ and thus not extremal in $ \clSeto(\pstate) $.
					
			\item $ \exists l $ satisfying $m<l$: $l$ initially has a smaller slope than $m$ in $ \pstate $ and thus swapped with $ m $ during $ \vec{\beta}^{(\text{NS})} $. Denote the last such $ l $ swapped with $ m $ as $ l_{0} $. 
			\begin{enumerate}
				\item If $ l_{0} $ and $ m $ are neighbouring in $ \vec{\beta}^{(\text{NS})}\pstate $, standard formation indicates
				\begin{equation}\label{key}
					\vec{\beta}^{(\text{NS})} = \vec{\beta}^{(\text{Irrel})}\left(\prod_{n_{i}\in N}\beta^{(n_{i},m)}\right)\left(\prod_{n_{i}\in N}\beta^{(n_{i},l_{0})}\right)\beta^{(m,l_{0})}\vec{\beta}^{(\text{Pre})},
				\end{equation}
				for some set of levels $ N\subset\{n\vert n<m\} $. Here, $ \vec{\beta}^{(\text{Irrel})} $ acts on neither $ m $ nor $ l_{0} $.
				Then as in Eq.~\eqref{eq:rearrange_using_comm}, 
				\begin{align}
					\beta^{(k,m)}\vec{\beta}^{(\text{NS})} &= \beta^{(k,m)}\beta^{(m,l_{0})}\vec{\beta}^{(\text{Irrel})}\left(\prod_{n_{i}\in N}\beta^{(n_{i},l_{0})}\right)\left(\prod_{n_{i}\in N}\beta^{(n_{i},m)}\right)\vec{\beta}^{(\text{Pre})}\nonumber\\
					 &= \beta^{(k,m)}\beta^{(m,l_{0})}\vec{\beta}^{(\text{Pre-}2)}.\label{eq:rearrange_using_comm2}
				\end{align}
				$ \border(\vec{\beta}^{(\text{Pre-}2)}\pstate) = (\cdots,k,\cdots,m,l_{0},\cdots) $ and thus $ \beta^{(k,m)}\beta^{(m,l_{0})} $ produces a non-extreme state from Lemma~\ref{rmk:only_one_nn}.
				\item If $ l_{0} $ and $ m $ are not neighbouring and there is no $ l $ satisfying $k<l<m$ or $k>l>m$, the levels between $ l_{0} $ and $ m $ are $ \{j_{i}\} $ such that $ j_{i}<k,m $. By the construction of the standard formation and the resulting order  
				\begin{equation}\label{key}
					\slope(\vec{\beta}^{(\text{NS})} \pstate)_{k}>\slope(\vec{\beta}^{(\text{NS})} \pstate)_{l_{0}}\geq\slope(\vec{\beta}^{(\text{NS})} \pstate)_{j_{i}}>\slope(\vec{\beta}^{(\text{NS})} \pstate)_{m},
				\end{equation}
			we know that for all $ j_{i} $: i) $ \beta^{(j_{i},l_{0})} $ and $ \beta^{(j_{i},k)} $ exist in $ \vec{\beta}^{(\text{NS})} $, ii) $ \beta^{(j_{i},k)} $ proceeds $ \beta^{(j_{i},l_{0})} $, and iii) $ \beta^{(j_{i},m)} $ does not exist in $ \vec{\beta}^{(\text{NS})}$. The last condition also implies that after $ \beta^{(j_{i},l_{0})} $, swaps acting on $ j_{i} $ are only acting within the set $ \{j_{i}\} $, which we will denote as $ \vec{\beta}^{(\{j_{i}\})} $.
		Then 
			\begin{align}
				\beta^{(k,m)}\vec{\beta}^{(\text{NS})} &= \beta^{(k,m)}\vec{\beta}^{(\text{Irrel})}\vec{\beta}^{(\{j_{i}\})}\left(\prod_{i}\beta^{(j_{i},l_{0})}\right)\beta^{(l_{0},m)}\vec{\beta}^{(\text{Pre})}\nonumber\\
				 &=  \vec{\beta}^{(\{j_{i}\})}\left(\prod_{i}\beta^{(j_{i},l_{0})}\right)\beta^{(k,m)}\vec{\beta}^{(\text{Irrel})}\beta^{(l_{0},m)}\vec{\beta}^{(\text{Pre})},
			\end{align}
			since $ \beta^{(k,m)}\vec{\beta}^{(\text{Irrel})} $ does not act on levels $ l_{0} $ and all $ j_{i} $. Finally, $ l_{0} $ is again neighbouring to $ m $ in $ \border(\vec{\beta}^{(\text{Irrel})}\beta^{(l_{0},m)}\vec{\beta}^{(\text{Pre})}\pstate) $ and we can use the argument from case ii.A to prove this state is non-extremal.
			\end{enumerate}
			\item All $ l<k,m $: denote the last such $ l $ swapped with $ k $ as $ l_{0} $ and they are neighbouring from the structure of the standard formation. Similar to Eq.~\eqref{eq:rearrange_using_comm2}, we get $ \beta^{(k,m)}\beta^{(k,l_{0})} $ part, which produces a non-extreme state by Lemma~\ref{rmk:only_one_nn}.
		\end{enumerate}
		As a result, non-neighbouring swaps are completely ruled out from the candidate of extreme point producing $\beta$-swaps when starting from monotonic order states. 
	\end{enumerate}
\end{enumerate}

	The sufficient condition of the lemma can be shown by recalling two properties: i) there exists at least one extreme point of $\clSeto$ for each $\beta$-order (Lemma~\ref{lemma:ext_point_for_each_order}) and ii) $\beta$-swap series satisfying the conditions of the lemma are all equivalent if the pair ($\pi,\pi^\prime$) is identical (Lemma~\ref{lemma:standard_form}), which makes them the only candidate for an extreme point with order $ \pi^{\prime} $. 
	
	Lastly, we note that the proof holds for $\pi = (d,d-1,\cdots,1)$ initial states since all we have used are Remark~\ref{rmk:swap_series_eq}, Lemma~\ref{rmk:only_one_nn}, and Lemma~\ref{lemma:standard_form}, which hold even when the energy ordering is inverted. 

\end{proof}

\section{Proof of Theorem~\ref{thm:ETO_cone}}\label{appendix:ETO_cone_proof}
Before proving Thm.~\ref{thm:ETO_cone}, we establish a result connecting lower dimensional results to higher dimensions. Lemma~\ref{lemma:high_d_two_swaps}, combined with the fact that two disjoint neighbouring $ \beta $-swaps produce $ \clSto $ extreme points (using Remark~\ref{lemma:neighbouring_TOext} twice), shows that states after two different neighbouring $ \beta $-swaps are always uniquely extremal in their order.

\begin{lemma}
	For any $ \pstate\in\probspace^d $ with the corresponding $ \beta $-ordering $ \pi = \border(\pstate) $, and for any index $ j\in [2,d-1] $, the extreme states of $ \clSeto(\pstate) $ that have $ \beta $-orders
	\begin{align}\label{key}
		S_{j-1,j}S_{j,j+1}\pi = (\pi_1,\cdots,\pi_{j+1},\pi_{j-1},\pi_{j}, \cdots,\pi_d), \\ 
		S_{j,j+1}S_{j-1,j}\pi = (\pi_1,\cdots,\pi_{j},\pi_{j+1},\pi_{j-1},\cdots,\pi_d), 
	\end{align} 

are unique and given by the respective states
\begin{align}
	\qstate^{(j,j-1)} &= \beta^{(\pi_{j-1},\pi_{j+1})}\beta^{(\pi_j,\pi_{j+1})}\pstate, \\
	 \qstate^{(j-1,j)} &= \beta^{(\pi_{j-1},\pi_{j+1})}\beta^{(\pi_{j-1},\pi_{j})}\pstate.
\end{align}%respectively.
	\label{lemma:high_d_two_swaps}
\end{lemma}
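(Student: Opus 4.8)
The plan is to localise the whole question to the three consecutive levels $\{\pi_{j-1},\pi_j,\pi_{j+1}\}$ of the $\beta$-order, settle it there via the $d=3$ characterisation, and then lift back. Write $a=\pi_{j-1}$, $b=\pi_j$, $c=\pi_{j+1}$, so that $\qstate^{(j,j-1)}=\beta^{(a,c)}\beta^{(b,c)}\pstate$ and $\qstate^{(j-1,j)}=\beta^{(a,c)}\beta^{(a,b)}\pstate$ (here $\beta^{(\cdot,\cdot)}$ just denotes the $\beta$-swap of the indicated pair); both act as the identity on every level outside this block, and since $a,b,c$ are consecutive in $\border(\pstate)$ the sub-vector $(p_a,p_b,p_c)$ has three-level $\beta$-order $(a,b,c)$. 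I would first verify, using Eq.~\eqref{eq:order_after_swap1} together with Eqs.~\eqref{eq:slope_after_swap1}–\eqref{eq:slope_after_swap4}, that in each series both swaps are genuinely neighbouring (the first for $\pstate$, the second for the once-swapped state), so that the output $\beta$-orders are indeed $S_{j-1,j}S_{j,j+1}\pi$ and $S_{j,j+1}S_{j-1,j}\pi$; this step also fixes the energy-ordering conventions hidden in the swap labels.

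Read inside the block, $\qstate^{(j,j-1)}$ and $\qstate^{(j-1,j)}$ are exactly the two length-two members of the set $\Theta_\mathrm{ETO}^{(3)}((p_a,p_b,p_c))$ of Eq.~\eqref{eq:Theta}. By Theorem~\ref{thm:3dETO}, $\Theta_\mathrm{ETO}^{(3)}\subset\mathrm{Ext}(\clSeto)$, so both are vertices of the three-level $\clSeto((p_a,p_b,p_c))$; their $\beta$-orders are the two $3$-cycles of $(a,b,c)$, which are distinct from each other and from the common order of the $\Xi_\mathrm{ETO}^{(3)}$ elements of Eq.~\eqref{eq:Xi}, so each is the \emph{unique} three-level vertex of its order. (This uses that Theorem~\ref{thm:3dETO} and the thermo-majorization tools behind it are blind to the overall normalisation of the sub-vector and its Gibbs weights.) Complementarily, and this is what carries the lift, each series is one neighbouring swap after another, so by Lemma~\ref{lemma:neighbouring_TOext} the final state is tightly thermo-majorized by the intermediate state $\qstate'=\beta^{(b,c)}\pstate$ (resp. $\beta^{(a,b)}\pstate$), while $\qstate'$ — a single neighbouring swap of $\pstate$ — is tightly thermo-majorized by $\pstate$ and hence, by Corollary~\ref{corollary:TO-ETO_ext}, the unique $\clSeto(\pstate)$-vertex of its order.

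To lift: since $\beta$-swaps conserve the total population on the two levels they touch, $\mathcal{L}_{\qstate^{(j,j-1)}}$ coincides with $\mathcal{L}_\pstate$ on $[0,\sum_{i\le j-2}\tbeta_{\pi_i}]$ and on $[\sum_{i\le j+1}\tbeta_{\pi_i},1]$, and on the middle window it equals the three-level thermo-majorization curve of the block (translated). I would then prove the domination statement $\qstate^{(j,j-1)}\succ_\beta\rstate$ for every $\rstate\in\clSeto(\pstate)$ with $\border(\rstate)=S_{j-1,j}S_{j,j+1}\pi$: as both curves have the same breakpoints it suffices to compare at each breakpoint, the breakpoints outside the window are handled by $\mathcal{L}_\rstate\le\mathcal{L}_\pstate$, and the two window breakpoints are handled by identifying them with elbows of the tight curve $\mathcal{L}_{\qstate'}$ and invoking the three-level uniqueness. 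Given this domination, $\qstate^{(j,j-1)}$ is the unique $\succ_\beta$-maximal state of its order inside $\clSeto(\pstate)$; together with Lemma~\ref{lemma:ext_point_for_each_order} (the order does carry a vertex) and Lemma~\ref{lemma:same_border} (any other candidate vertex of the same order lies in $\clSeto(\qstate^{(j,j-1)})$), this forces it to be the unique vertex of $\clSeto(\pstate)$ of that order. The reasoning for $\qstate^{(j-1,j)}$ is identical.

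The hard part is precisely the comparison at the $l=j-1$ and $l=j$ breakpoints. There the bound $\mathcal{L}_\rstate\le\mathcal{L}_\pstate$ points the \emph{wrong} way, since $\mathcal{L}_{\qstate^{(j,j-1)}}\le\mathcal{L}_\pstate$ as well, so one cannot argue by thermo-majorization from $\pstate$ alone; one must use that $\rstate$ is genuinely reachable by elementary thermal operations, exploiting the tight majorization of $\qstate^{(j,j-1)}$ by the once-neighbour-swapped $\qstate'$ and the extremality/uniqueness already established inside the three-level block. The only other friction is the routine case split over the (three, up to energy reflection) orderings of $E_a,E_b,E_c$, which governs the $\beta$-swap conventions used above.
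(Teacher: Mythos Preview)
Your localise-then-lift scheme has a genuine gap at exactly the place you flag as the ``hard part.'' The three-level uniqueness furnished by Theorem~\ref{thm:3dETO} constrains only states reachable by ETOs that act inside the block $\{a,b,c\}$; an arbitrary $\rstate\in\clSeto(\pstate)$ with the target order may have been produced by $\beta$-swaps that exchange mass between $\{a,b,c\}$ and the remaining $d-3$ levels, so the three-level result places no bound on its block populations. Tight majorisation of $\qstate^{(j,j-1)}$ by $\qstate'$ likewise locates the elbows of $\qstate^{(j,j-1)}$, not those of $\rstate$; ``identifying the window elbows with elbows of $\mathcal{L}_{\qstate'}$'' does not deliver the needed inequality $\mathcal{L}_\rstate\le\mathcal{L}_{\qstate^{(j,j-1)}}$ at those points. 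You are right that this is where the difficulty lives, but you have not supplied an argument that closes it.

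The paper's proof splits on the relative energies of $\pi_{j-1},\pi_j,\pi_{j+1}$, and this is not ``routine friction'' but the crux. In three of the five cases the two-swap channel is a biplanar extremal thermal process in $d=3$ (Remark~\ref{rmk:two_swap_TO_ext}), and one checks that padding by $\mathbb{1}_{d-3}$ preserves biplanar extremality; hence the output is an extremal point of $\clSto(\pstate)$ itself, is tightly thermo-majorised by $\pstate$, and \emph{all} of its elbows (including the window ones) already sit on $\mathcal{L}_\pstate$---so in those cases your domination argument would go through trivially. In the remaining two cases the padded channel is \emph{not} TO-extremal: one verifies that every elbow of $\qstate^{(j-1,j)}$ still lies on $\mathcal{L}_\pstate$ except the $j$-th (only one window elbow drops, not two). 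To control that single elbow the paper introduces a global invariant: with $R=\{\pi_{j-1}\}\cup\{\pi_{j+2},\ldots,\pi_d\}$, the gap $1-\mathcal{L}_\rstate$ at the $j$-th breakpoint equals $\sum_{k\in R}r_k$, and one argues which $\beta$-swaps on the full $d$-level system can decrease this mass, concluding that $\qstate^{(j-1,j)}$ attains the minimum over all of $\clSeto(\pstate)$. This probability-mass argument is precisely the global ingredient that your outline gestures at but does not provide.
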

\begin{proof}
	One can prove the above statement for the following independent cases: 
	\begin{enumerate}
		\item %When level $ j $ is higher compared with $ j\pm 1 $ in the initial $ \beta $-ordering: 
		$S_{j-1,j}S_{j,j+1}\pi $ and $ S_{j,j+1}S_{j-1,j}\pi $ admit unique extreme points for $ \pi_{j}< \pi_{j+1},\pi_{j-1} $: out of the three consecutive levels in $ \beta $-ordering, the middle term corresponds to the lowest energy level (e.g. a $ \beta $-ordering (2,1,3)),
		\item  $ S_{j-1,j}S_{j,j+1}\pi $ for $ \pi_{j-1}< \pi_{j},\pi_{j+1} $,
		\item $ S_{j,j+1}S_{j-1,j}\pi $ for $ \pi_{j+1}<\pi_{j-1},\pi_{j} $, 
		\item $ S_{j,j+1}S_{j-1,j}\pi $ for $ \pi_{j-1}< \pi_{j},\pi_{j+1} $, 
		\item $ S_{j-1,j}S_{j,j+1}\pi $ for $ \pi_{j+1}<\pi_{j-1},\pi_{j} $.
\end{enumerate}
	Cases 1-3 follow from a slight generalization of Remark~\ref{rmk:two_swap_TO_ext}. If a matrix $M$ is a biplanar extremal thermal process (see~\cite{Mazurek19_channels} Def. 6 and Sec. IV), then $M\oplus \mathbb{1} $ is also a biplanar extremal thermal process, since an identity connects $ i $'th element only to $ i $'th element and thus biplanar; and all elements of thermal processes are non-negative and upper bounded by $ 1 $, making an identity extremal.
	Hence, we only need to prove for cases 4 \& 5.
	
	Case 4: 
This can be proven by showing that all states $ \qstate'\in\clSeto(\pstate) $ such that $ \border(\qstate') = \border(\qstate^{(j-1,j)}) $ are thermomajorised by $ \qstate^{(j-1,j)} $. % (e.g. by Lemma~\ref{lemma:same_border}).
	We will show this by contradiction. To do so, first note that $ \qstate^{(j-1,j)} $ is very close to being tightly-majorised by the initial $ \pstate $ -- in particular, all elbow points of $ \mathcal{L}_{\qstate^{(j-1,j)}} $ lie on $ \mathcal{L}_{\pstate} $ except for the $ j $'th one. 
	Therefore, for any $ \qstate\in\clSeto(\pstate) $ with the same $ \beta $-ordering as $  \qstate^{(j-1,j)} $, their elbows are aligned, and we know that $ \mathcal{L}_{\qstate^{(j-1,j)}} \geq \mathcal{L}_{\qstate'} $ at all elbows other than $ j $'th. 
	Now, assume that in fact $ \qstate^{(j-1,j)}\nsucc_\beta\qstate' $. However, this can only happen if $ \mathcal{L}_{\qstate'} > \mathcal{L}_{\qstate^{(j-1,j)}} $ at the $ j $'th elbow.
	Since $ \border(\qstate^{(j-1,j)}) = (\cdots,\pi_{j},\pi_{j+1},\pi_{j-1},\cdots) $, the above condition translates into 
		\begin{equation}
			q^{(j-1,j)}_{\pi_{j-1}} + \sum_{k=j+2}^{d}q^{(j-1,j)}_{\pi_k} =  q^{(j-1,j)}_{\pi_{j-1}} + \sum_{k=j+2}^{d}p_{\pi_k} > q'_{\pi_{j-1}} + \sum_{k=j+2}^{d}q'_{\pi_k}, \label{eq:reducing_pops}
		\end{equation}
	where we have simply used the fact that $  \mathcal{L}_{\qstate'} $ at the $ j $-th elbow is equal to one minus the probability mass corresponding to the $ j+1 $-th up to $ d $-th elements in the $ \beta $-order $ \border(\qstate') $.

Let us denote  $A_R= \sum_{k\in R}p_{k} $ to be the probability mass over a set of levels $ R $. Note that this quantity changes whenever a level $ i\in R $ is $ \beta $-swapped with a level in $ i'\in R^{\mathsf{c}} $. When level $ i' $ has a steeper slope than $ i $, then $ A_R$ increases after the $ \beta $-swap; conversely, $ \sum_{k\in R}p_{k} $ decreases. To make use of this observation for Eq.~\eqref{eq:reducing_pops}, take
	\begin{equation}\label{key}
	 R = \{\pi_{l}\vert l=j-1 \text{ or } l\geq j+2\} .
	\end{equation} 
The only $ \beta $-swaps that can reduce $ A_{R} $ would be $ \beta^{(\pi_{j-1},\pi_{j})} $ and  $ \beta^{(\pi_{j-1},\pi_{j+1})} $.
Among different combinations of two swaps, $ \qstate^{(j-1,j)} = \beta^{(\pi_{j-1},\pi_{j+1})}\beta^{(\pi_{j-1},\pi_{j})}\pstate $ achieves the minimum probability mass over a set of levels $ R $ as can be seen from Thm.~\ref{thm:3dETO}. Thus, Eq.~\eqref{eq:reducing_pops} is impossible. 

%To reduce $ \sum_{k\in R}p_{k} $ via ETO, for some set of levels $ R $, one always need to $ \beta $-swap one element of $ R $ with another level in $ R^{\mathsf{c}} $ only when the latter has a lower slope. 
%In this case, $ R = \{\pi_{k}\vert k=j-1 \text{ or } k\geq j+2\} $ and the only way to do it is to swap $ \pi_{j-1} $ with $ \pi_{j} $ and / or $ \pi_{j+1} $. The best way to do this is $ \qstate^{(j-1,j)} $, as can be seen from Theorem~\ref{thm:3dETO}. Thus, Eq.~\eqref{eq:reducing_pops} is impossible. 

	Case 5: $ \pi_{j+1}< \pi_{j-1},\pi_{j} $, we must try to maximise $ \sum_{k=1}^{j-2}q_{\pi_k} + q_{\pi_{j+1}} $, which is achieved when $ \qstate = \qstate^{(j,j-1)} $ as in Case 4. This concludes the proof.
\end{proof}

Now we prove Thm.~\ref{thm:ETO_cone}.

\begin{proof}
Given an initial state $ \pstate_0 \in\mathcal{V}^d$, suppose that a $\beta$-swap series $\vec{\beta} = \beta^{(j_l,k_l)}\cdots\beta^{(j_1,k_1)}$ of length $ l $ produces an extreme point $\pstate_l = \vec{\beta}\pstate_0$ of $\clSeto(\pstate_0)$. To prove Thm. \ref{thm:ETO_cone} is to prove that $ \ell \leq  \left\lfloor\frac{d!-4}{d-3}\right\rfloor$. 

We start by denoting 
\begin{equation}\label{eq:series_of_ext}
	\pstate_n \equiv \beta^{(j_n,k_n)}\cdots\beta^{(j_1,k_1)} \pstate_0 \qquad \text{for } n=1,\cdots, \ell. 
\end{equation}
Because of the assumption that $ \pstate_l $ is extremal, by necessity all the intermediate states $ \pstate_1,\cdots,\pstate_{\ell-1} $ are also extreme points of $ \clSeto(\pstate_0) $. Furthermore, we may use the fact that 
\begin{equation}\label{eq:no_same_beta_order}
	\border(\pstate_n)\neq\border(\pstate_m)\qquad \text{ for all }n\neq m,
\end{equation}
since together with Eq.~\eqref{eq:series_of_ext} this would imply that $ \pstate_{n}\succ_{\beta}\pstate_{m} $ if $ n<m $. Then the original bound $\ell_{\rm{max}} \leq d! - 1$ of~\cite{Lostaglio_18_ETO} is immediately obtained. 

This bound can be improved by carefully studying the total number of feasible $\beta$-orders that $\pstate_{n+1}$ can take, given its history $(\pstate_0,\cdots,\pstate_n)$, which we denote as $R(n)$. For instance, $R(0)\leq d! - 1$, where the upper bound refers to choosing any $\beta$-order for $\pstate_1$ except $\border(\pstate_0)$. This is an upper bound, likely to be a loose one, because for a fixed $ \pstate $, not all $ R(0) $ orders are possible, given that $ \pstate_1 $ is obtained only by a single $ \beta $-swap.
%it's not clear that all orders are possible, given that $ \pstate_1 $ is obtained only by a single $ \beta- $swap. 
If $R(n)<1$ for all trajectories $(\pstate_0,\cdots,\pstate_n)$, then $p_{n+1}$ cannot be extremal and therefore $\ell_{\rm max}\leq n$.

Now, given $\pstate_1$, we know the following:
\begin{enumerate}
	\item The state $ \pstate_2 \in \clSeto(\pstate_1) \subsetneq \clSeto(\pstate_0)$,
	\item Since $ \pstate_2 $ is extremal for $\clSeto(\pstate_0)$, it is also extremal for $\clSeto(\pstate_1)$,
	\item By Lemma~\ref{lemma:neighbouring_TOext}, the states $\qstate_0^{(j)} = \beta^{(\border(\pstate_0)_j,\border(\pstate_0)_{j+1})}\pstate_0$ are unique extreme points of $\clSeto(\pstate_0)$ with order $\border(\qstate_0^{(j)})$.
\end{enumerate}  
Firstly, by Eq.~\eqref{eq:no_same_beta_order}, $\border(\pstate_2) \neq \border(\pstate_0),\border(\pstate_1)$, leading to $R(1)\leq d! - 2$. The third observation above allows us to conclude that $\border(\pstate_2) \neq \border(\qstate_0^{(j)})$ for $j = 1,\cdots, d-1$. However, we should not subtract $d-1$ orders from $R(1)$ for all $(\pstate_0,\pstate_1)$. When $\pstate_1 = \qstate_0^{(k)}$ for some $k$, $\border(\qstate_0^{(k)})$ and $\border(\pstate_1)$ must not be double-counted, giving upper bound $R(1)\leq d!-d$ for any trajectory $(\pstate_0,\pstate_1)$.

Similar as to the above reasoning, for $ \pstate_{i+1} $, we know that 
	\begin{enumerate}
		\item The state $ \pstate_{i+1} \in \clSeto(\pstate_{i}) \subsetneq \cdots \subsetneq \clSeto(\pstate_0)$,
		\item Since $ \pstate_{i+1}\in\mathrm{Ext}(\clSeto(\pstate_0)) $, therefore $ \pstate_{i+1}\in\mathrm{Ext}(\clSeto(\pstate_{k})) $ for all $ k\leq i $,
		\item By Lemma~\ref{lemma:neighbouring_TOext}, $ \qstate_{i}^{(j)} = \beta^{(\border(\pstate_{i})_{j}, \border(\pstate_{i})_{j+1})}\pstate_{i} $ are unique extreme points of $ \clSeto(\pstate_{i}) $ with order $ \border(\qstate_{i}^{(j)}) $.
	\end{enumerate} 
Therefore, given $(\pstate_0,\cdots,\pstate_n)$, we can repeat the same using $\border(\pstate_{n+1}) \neq \border(\pstate_{k})$ for $k\leq n$, and $\border(\pstate_{n+1}) \neq \border(\qstate_{k}^{(j)})$ for $k\leq n-1$ and $1\leq j\leq d-1$. The former condition excludes $n+1$ orders and the latter one occupies $n(d-1)$ orders. Again, we need to account for double counting. 
\begin{enumerate}
	\item $\border(\pstate_i) = \border(\qstate_{i^\prime}^{(j)})$ for some $i,i^\prime$, and $j$.
	\item $\border(\qstate_i^{(j)}) = \border(\qstate_{i^\prime}^{(j^\prime)})$ for some $j, j^\prime$ and $i^\prime>i$.
\end{enumerate}
We will show that above coincidences can happen only when $\pstate_{i^\prime\pm1} = \qstate_{i^\prime}^{(j)}$ (case 1), or $\qstate_{i^\prime-2}^{(j,j^\prime)} = \pstate_{i^\prime}$ (case 2).

If $\pstate_{i} = \qstate_{i^\prime}^{(j)}$ for $i\leq i^\prime -2$, $\border(\pstate_{i^\prime}) = S_{j,j+1}(\border(\pstate_i)) = \border(\qstate_i^{(j)})$. But $\border(\qstate_i^{(j)})$ is already removed from $ R(i^\prime-1) $ for all $i^\prime\geq i+2$. $\pstate_{i} = \qstate_{i^\prime}^{(j)}$ for $i\geq i^\prime +2$ is also not allowed from the same reason.

The second case implies that $S_{j,j+1}(\border(\pstate_i)) = S_{j^\prime,j^\prime+1}(\border(\pstate_i^\prime))$ or equivalently, there exists some $ \beta $-ordering $ \chi $ such that 
\begin{equation}\label{key}
\chi =	S_{j^\prime,j^\prime+1}(S_{j,j+1}(\border(\pstate_i)))= \border(\pstate_i^\prime).
\end{equation}
Consider then three cases: 
\begin{enumerate}[label=\alph*)]
	\item $j^\prime = j$: $\border(\pstate_i) = \border(\pstate_{i^\prime})$ and $\pstate_{i^\prime}$ is not extremal for $\clSeto(\pstate_0)$.
	\item $j^\prime = j\pm1$: by 
	%there are only three relevant levels and we can use 
	Lemma~\ref{lemma:high_d_two_swaps}, we know that $\clSeto(\pstate_i)$ has a unique extreme point that has the $ \beta $-ordering $ \chi $. This is given by the state $\qstate_{i}^{(j,j^{\prime})}$, that is, $\pstate_{i^\prime} = \qstate_{i}^{(j,j^{\prime})}$, $i^\prime = i+2$, and $ \border(\qstate_i^{(j)}) = \border(\qstate_{i^\prime}^{(j^\prime)}) = \border(\pstate_{i+1}) $. We do not have to worry about double counting $ \border(\qstate_{i^\prime}^{(j^\prime)}) $ since it is already dealt as case 1: $\border(\pstate_{i^\prime-1}) = \border(\qstate_{i^\prime}^{(j^\prime)})$.
	
	\item if $j^\prime \neq j, j\pm1 $, $\beta^{(j^\prime,j^\prime+1)}$ and $\beta^{(j,j+1)}$ commute and $\beta^{(j^\prime,j^\prime+1)}\beta^{(j,j+1)}\pstate_i$ is tightly thermomajorised by $\pstate_i$, again, making $\pstate_{i^\prime} = \beta^{(j^\prime,j^\prime+1)}\beta^{(j,j+1)}\pstate_i$ and $i^\prime = i+2$. Since $ \beta^{(j,j+1)} $ and $ \beta^{(j^{\prime},j^{\prime}+1)} $ commute, $ \pstate_{i+1} $ can be both $ \beta^{(j,j+1)}\pstate_i $ or $  \beta^{(j^\prime,j^\prime+1)}\pstate_i $. 
	We do not have to worry about the first case since $ \border(\qstate_i^{(j)}) = \border(\qstate_{i^\prime}^{(j^\prime)}) = \border(\pstate_{i+1}) $ as before. In the other case, $ \border(\qstate_i^{(j)}) = \border(\qstate_{i^\prime}^{(j^\prime)}) \neq \border(\pstate_{i+1}) $ and we should not count $ \border(\qstate_{i^\prime}^{(j^\prime)})  $ again.
%	Hence, $\border(\qstate_i^{(j)}) = \border(\qstate_{i^\prime}^{(j^\prime)})$ can happen only if $i^\prime = i+2$. 
%	Another double counting can arise from  $\border(\pstate_{i-1}) = \border(\qstate_i^{(j)})$ for some $j$. 
\end{enumerate}

Then, newly removed orders in $ R(i^\prime+1) $ are: $ \border(\pstate_{i^\prime+1}) $, $ \border(\qstate_{i^\prime}^{(k)})$. The worst case is when $ \pstate_{i^{\prime}} = \qstate_{i^{\prime}-2}^{(j^{\prime},j)} $, $ \pstate_{i^{\prime}-1} = \qstate_{i^{\prime}-2}^{(j^{\prime})} $, and $ \pstate_{i^{\prime}+1} = \qstate_{i^{\prime}}^{(j^{\prime\prime})} $ with $ j,j+1,j^{\prime},j^{\prime}+1,j^{\prime\prime} $ all distinct, which then gives  \begin{equation}\label{key}
		\border(\qstate_{i^{\prime}}^{(j)}) = \border(\pstate_{i^{\prime}-1}) ,\quad  \border(\qstate_{i^{\prime}}^{(j^{\prime})}) = \border(\qstate_{i^{\prime}-2}^{(j)}) ,\quad  \border(\qstate_{i^{\prime}}^{(j^{\prime\prime})}) = \border(\pstate_{i^{\prime}+1}).
	\end{equation}
Excluding these double-counted orders, one can remove at least $ d-3 $ orders at each step.

To sum up, $R(2) \leq d! - d - (d-2)$ since $\qstate_1^{(j)}$ eliminates $d-2$ new levels. For $n>3$, at least $d-3$ new $\beta$-orders can be eliminated from $R(n)$ after applying each $\beta$-swap.  As a result, we get
\begin{equation}
	R(n) \leq d! - 4 - (d-3)n,\quad n\geq 3\ \text{and}\ d\geq 4,
\end{equation}
which bounds the length of $\beta$-swap series as $l_{\rm max} = \left\lfloor\frac{d!-4}{d-3}\right\rfloor$ when $ d\geq 4 $.

\end{proof}

\section{Catalytic elementary thermal operations: an example}\label{appendix:CETO}

We tackle the problem by the following procedure: given $ \pstate\in \mathcal{V}^3$, we choose a qubit catalyst $ \catstate \in \mathcal{V}^2 $, and construct the reachable state set $\clSeto(\pstate\otimes\catstate)$. The catalyst Hamiltonian is assumed to be w.l.o.g. degenerate, and hence a choice of $ \catstate = (c_1,1-c_1) $ is simply determined by a real-valued parameter $ c_{1} $. We then find the full characterisation of $ \clSeto(\pstate\otimes\catstate) $ by identifying its extreme points, and denote $\clScetoqb(\pstate;\catstate)$ to be the set of all states $\qstate$ such that  $\qstate\otimes\catstate\in\clSeto(\pstate\otimes\catstate)$. 

Lastly, this procedure is iterated for different choices of $\catstate$, by varying the choice of $ c_1 $ in a sufficiently fine-grained manner. 

The first step, where $\clSeto(\pstate\otimes\catstate)$ is found, is done with the first algorithm of Sec.~\ref{subsec:numerics_high_d} where convex hull is explicitly constructed. Since degenerate Hamiltonian is assumed for catalysts, this process is easier than a generic search. The resulting set $\clSeto(\pstate\otimes\catstate)$ also encompasses states $p'\in\mathcal{V}^6$  that are not in product form. Final states of the form $ \qstate\otimes\catstate $ can be distilled by forming plane equations imposing catalyst state $\catstate$, and performing half-space intersections with the full set $\clSeto(\pstate\otimes\catstate)$ numerically. In general, extreme points of $\clScetoqb(\pstate;\catstate)$ are not necessarily extremal in $\clSeto(\pstate\otimes\catstate)$.  

Note that the catalytic ETO operation that produces $ \pstate\rightarrow\qstate $ is often non-unique. First of all, there may be multiple catalyst states that enable a particular transition. Nevertheless, given $ \pstate,\qstate $ and a choice of $ \catstate $, if $ \qstate\otimes\catstate \in \clSeto(\pstate\otimes\catstate) $, then decomposing the transformation into a particular convex combination of different $\beta$-swap series is straightforward: writing $ \qstate\otimes\catstate $ as a convex combination of $\mathrm{Ext}[\clSeto(\pstate\otimes\catstate)] $ can be done by linear programming and each extreme point is obtained from a $\beta$-swap series.
This gives us a particular algorithm for implementation.

Now we will examine one specific extreme point $ \clScetoqb(\pstate;\catstate) $, plotted in Figs.~\ref{fig:CETO_freeE} and~\ref{fig:local_freeEs}, following the procedure described above. 
This example is also an extreme point of $ \clScetoqb(\pstate) $ having the smallest ground state population among reachable states.

\begin{tcolorbox}[breakable, colback=red!5!white,colframe=red!65!black,fonttitle=\bfseries,boxrule=0.7pt, title = An extreme state of $ \clScetoqb $]

The initial state $ \pstate = (0.35,0.55,0.1) $ and the Hamiltonian $\beta\mathrm{H}_S = (0,0.2,0.5)$, giving $\border(\pstate) = (2,1,3)$. The initial catalyst distribution is fine-tuned to be $\catstate = (c_1,1-c_1)$, with
\begin{equation}
	c_{1} = \frac{-p_3+\sqrt{p_3^2+8\Delta_{13}p_1p_3}}{4\Delta_{13}p_1} \simeq 0.3816,
\end{equation}
which provides the maximum advantage when minimizing the ground state population. Then the total $\beta$-order becomes
\begin{equation}
	\border(\pstate\otimes\catstate) = (2*2,2*1,1*2,1*1,3*2,3*1),
\end{equation}
where $a*b$ implies energy eigenstate $\lvert a\rangle_{S}\lvert b\rangle_{C}$ of a system (S) plus catalyst (C) state.

From $ \clSeto(\pstate\otimes\catstate) $, we obtain new extreme points of $ \clScetoqb(\pstate;\catstate) $, including our example point $ \qstate $. Further, by solving a linear programming problem, four extreme states of $ \clSeto(\pstate\otimes\catstate) $, $\qstate^\prime_{1,2,3,4}\in\probspace^6$, such that 
\begin{equation}
	\qstate\otimes\catstate = \sum_{i=1}^{4}\alpha_i\qstate^\prime_i,\quad \sum_{i=1}^{4}\alpha_i = 1,\quad \alpha_i\geq 0,
\end{equation}
and corresponding $\beta$-swaps $ \qstate^\prime_i = \vec{\beta}_i(\pstate\otimes\catstate) $ can be identified. $ \vec{\beta}_i $ differ only slightly from each other (differences marked red),
\begin{align}
	\vec{\beta}_1 &= \beta^{(1*2,3*1)} \beta^{(1*1,3*1)}\beta^{(2*1,3*2)}\beta^{(1*2,3*2)}\beta^{(1*1,3*2)},\\
	\vec{\beta}_2 &= \beta^{(1*2,3*1)} \beta^{(1*1,3*1)}{\color{red!65!black}\beta^{(2*1,2*2)}}\beta^{(2*1,3*2)}\beta^{(1*2,3*2)}\beta^{(1*1,3*2)},\\
	\vec{\beta}_3 &= \beta^{(1*2,3*1)} \beta^{(1*1,3*1)}{\color{red!65!black}\beta^{(2*2,3*2)}}\beta^{(2*1,3*2)}\beta^{(1*2,3*2)}\beta^{(1*1,3*2)},\\
	\vec{\beta}_4 &= \beta^{(1*2,3*1)} \beta^{(1*1,3*1)}{\color{red!65!black}\beta^{(2*1,2*2)}\beta^{(2*2,3*2)}}\beta^{(2*1,3*2)}\beta^{(1*2,3*2)}\beta^{(1*1,3*2)},
\end{align}
facilitating the recombination into
\begin{equation}
	\qstate\otimes\catstate = \beta^{(1*2,3*1)} \beta^{(1*1,3*1)}\Mswap_{\lambda_2}^{(2*1,2*2)}\Mswap_{\lambda_1}^{(2*2,3*2)}\beta^{(2*1,3*2)}\beta^{(1*2,3*2)}\beta^{(1*1,3*2)}(\pstate\otimes\catstate),\label{eq:example_CETO_path}
\end{equation}
with $ \lambda_1 = \alpha_3/(\alpha_1+\alpha_3) = \alpha_4/(\alpha_2+\alpha_4) $ and $ \lambda_2 = \alpha_2/(\alpha_1+\alpha_2) = \alpha_4/(\alpha_3+\alpha_4) $. 

Given Eq.~\eqref{eq:example_CETO_path}, we can analyse system-catalyst interplay during the catalytic evolution, starting from
\begin{equation}
	\pstate\otimes\catstate \simeq (0.1336, 0.2164, 0.2099, 0.3401, 0.0382, 0.0618).
\end{equation}
The swap series can be grouped into three phases:
\begin{enumerate}
	\item The first four swaps all involve the $3*2$ population. The first and the third swaps ($ \beta^{(1*1,3*2)} $ and $ \beta^{(2*1,3*2)} $) work to shift population from the first level of the catalyst to the second level and thus intensify the non-uniformity of the catalyst reduced state, as reflected in corresponding steps of Fig.~\ref{fig:local_freeEs} (b). 
	The ratio between $3*2$ and $3*1$ populations is increasing more rapidly than the one between levels $ 2 $ and $ 1 $ of catalyst reduced state, correlating the system and catalyst as shown in the mutual information.
	The fourth swap is chosen to be $ \lambda\neq1 $ swap to prevent $3*2$ population to become too large to recover $ \catstate $.

	Since both catalyst local free energy and mutual information increase, system local free energy should always decrease at this stage. The total state after these swaps is
	\begin{align}
		\rstate^\prime_1 &\simeq (0.1144, 0.1662, 0.1857, 0.3323 , 0.0382, 0.1633),\\ 
		\Tr_{C}[\rstate^\prime_1] &\simeq (0.2806, 0.5180, 0.2015),\quad \Tr_{S}[\rstate^\prime_1] \simeq (0.3382, 0.6618)
	\end{align}
	\item The fifth swap balances catalyst distribution in the degenerate block $\ket{2}_S$ while system reduced populations are fixed, yielding
	\begin{align}
		\rstate^\prime_2 &\simeq (0.1144, 0.1662, 0.1977, 0.3203 , 0.0382, 0.1633),\\ 
		\Tr_{C}[\rstate^\prime_2] &\simeq (0.2806, 0.5180, 0.2015),\quad \Tr_{S}[\rstate^\prime_2] \simeq (0.3502, 0.6498).
	\end{align}
	Catalyst local free energy decreases as a result of mixing, while correlation increases. 
	\item Last two swaps increases $3*1$ level population to recover the original ratio between $3*1$ and $3*2$, while at the same time further reducing the system ground state population. For this particular choice of catalyst, we have a simplification in the sense that these swaps also balance $1*1$ and $1*2$, leading to 
	\begin{align}
		\qstate\otimes\catstate &\simeq (0.0832, 0.1348, 0.1977, 0.3203, 0.1008,	0.1633), \\
		\qstate &\simeq (0.2179, 0.5180, 0.2641),
	\end{align}
	with vanishing correlation and retrieval of the original catalyst. The system free energy increases here, since level $\ket{1}_S$, which already has the lowest slope, loses more population and the new state will thermomajorises the old state.
	This behaviour of increasing free energy after swaps is strictly forbidden in non-catalytic setting, and allowed in this case by sacrificing correlations and catalyst free energy stored from previous operations. 
\end{enumerate}
\end{tcolorbox}

Although there is a room for some change in orders between swaps that are commuting, the above is a typical strategy for constructing catalytic transformations: 
\begin{enumerate}
	\item exploits expanded dimensionality to swap a system plus catalyst level with more numbers of levels, paying i) temporary correlations and ii) local variations on catalyst as a cost; 
	\item resolves correlations by mixing in the degenerate energy subspace; and
	\item recovers original catalyst distribution while increasing system local free energy.
\end{enumerate}

\begin{figure}[h!]\centering
	\includegraphics[width=\columnwidth]{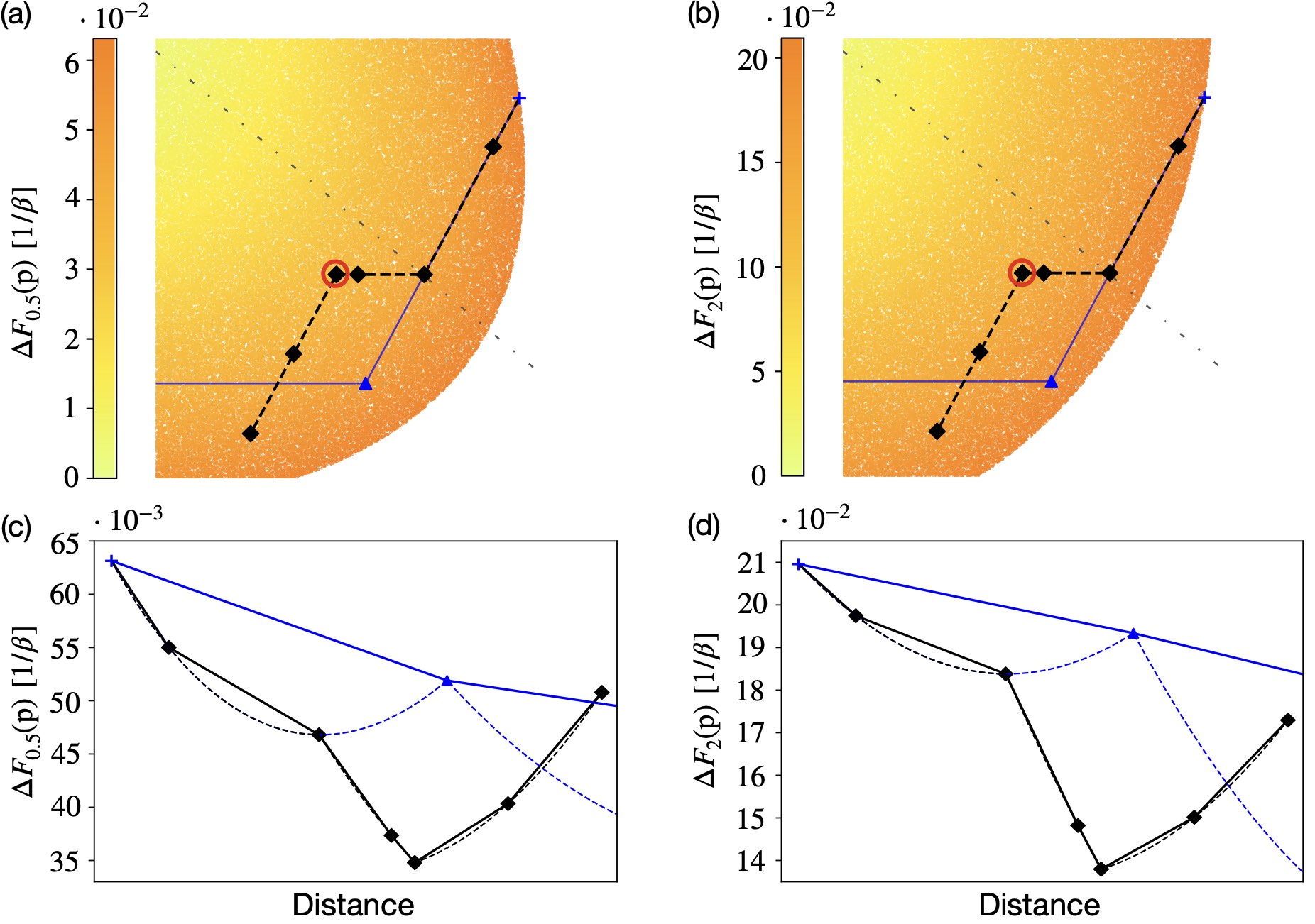}
	\caption{Replica of Fig.~\ref{fig:CETO_freeE} with generalised free energies.}
	\label{fig:generalised_free_Es}
\end{figure}

Even for generalised free energies $F_\alpha$ with different $\alpha$ values, similar behaviours are observed. In Fig.~\ref{fig:generalised_free_Es}, $\alpha = 0.5$ and $2$ are presented as representative examples. In both of the cases, $F_\alpha$ for a system reduced state (see (c) and (d) of Fig.~\ref{fig:generalised_free_Es}) shares the decreasing/increasing trend, albeit with different slopes.

\end{document}